\theoremstyle{plain}
\newcommand{\lfin}{$l_1 \in [k_1]_0$}
\newcommand{\lsin}{$l_2 \in [k_2]_0$}
\newcommand{\lstarp}{$L^{*}_p$}
\newcommand{\dstar}{${D}^{*}$}
\newcommand{\fptfull}{\textsc{Fixed-Parameter Tractable}\xspace}
\newcommand{\fpt}{\textsc{FPT}\xspace}
\newcommand{\npc}{\textsc{NP}-complete\xspace}
\newcommand{\abjpfull}{\textsc{Annotated Bipartite-BJB}\xspace}
\newcommand{\abjp}{\textsc{AB-BJB}\xspace}
\newcommand{\abcjpfull}{\textsc{Annotated Bipartite Connected-BJB}\xspace}
\newcommand{\abcjp}{\textsc{ABC-BJB}\xspace}
\newcommand{\jpfull}{\textsc{Balanced Judicious Bipartition}\xspace}
\newcommand{\jp}{\textsc{BJB}\xspace}
\newcommand{\hpfull}{\textsc{Hypergraph Painting}\xspace}
\newcommand{\hp}{\textsc{HP}\xspace}
\newcommand{\octfull}{\textsc{Odd Cycle Transversal}\xspace}
\newcommand{\oct}{\textsc{OCT}\xspace}
\newcommand{\OO}{\mathcal{O}\xspace}
\newcommand{\defparproblem}[4]{
  \vspace{3mm}
\noindent\fbox{
  \begin{minipage}{.97\textwidth}
  \begin{tabular*}{\textwidth}{@{\extracolsep{\fill}}lr} \textsc{#1}  & {\bf{Parameter:}} #3 \\ \end{tabular*}
  {\bf{Input:}} #2  \\
  {\bf{Question:}} #4
  \end{minipage}
  }
  \vspace{2mm}
}
\newcommand{\defparoutproblem}[4]{
  \vspace{3mm}
\noindent\fbox{
  \begin{minipage}{.97\textwidth}
  \begin{tabular*}{\textwidth}{@{\extracolsep{\fill}}lr} \textsc{#1}  & {\bf{Parameter:}} #3 \\ \end{tabular*}
  {\bf{Input:}} #2  \\
  {\bf{Output:}} #4
  \end{minipage}
  }
  \vspace{2mm}
}
\newcommand{\defproblem}[3]{
  \vspace{3mm}
\noindent\fbox{
  \begin{minipage}{.97\textwidth}
  \begin{tabular*}{\textwidth}{@{\extracolsep{\fill}}lr} #1  \\ \end{tabular*}
  {\bf{Input:}} #2  \\
  {\bf{Output:}} #3
  \end{minipage}
  }
  \vspace{2mm}
  }
\newtheorem{theorem}{Theorem}
\newtheorem{definition}{Definition}
\newtheorem{lemma}{Lemma}  
\newtheorem{claim}{Claim}
\newtheorem{proposition}{Proposition}
\newcommand{\yes}{\textsc{YES}}
\newcommand{\no}{\textsc{NO}}
\title{\jpfull{} is \fptfull}
\author{Daniel Lokshtanov\footnote{University of Bergen, Norway, daniello@ii.uib.no.} , Saket Saurabh\footnote{Institute of Mathematical Sciences, HBNI, Chennai, India, UMI Relax and University of Bergen, Norway, saket@imsc.res.in.} , Roohani Sharma\footnote{Institute of Mathematical Sciences, HBNI, Chennai, India and UMI ReLax, roohani@imsc.res.in.} , Meirav Zehavi\footnote{Ben-Gurion University, Beersheba, Israel, meiravze@bgu.ac.il.}}
\begin{document}

\maketitle

\begin{abstract}
The family of judicious partitioning problems, introduced by Bollob{\'a}s and Scott to the field of extremal combinatorics, has been extensively studied from a structural point of view for over two decades. This rich realm of problems aims to counterbalance the objectives of classical partitioning problems such as {\sc Min Cut}, {\sc Min Bisection} and {\sc Max Cut}. While these classical problems focus solely on the minimization/maximization of the number of edges crossing the cut, judicious (bi)partitioning problems ask the natural question of the minimization/maximization of the number of edges lying in the (two) sides of the cut. In particular,  {\sc Judicious Bipartition (JB)} seeks a bipartition that is ``judicious'' in the sense that neither side is burdened by too many edges, and {\sc Balanced JB} also requires that the sizes of the sides themselves are ``balanced'' in the sense that neither of them is too large. Both of these problems were defined in the work by Bollob{\'a}s and Scott, and have received notable scientific attention since then. In this paper, we shed light on the study of judicious partitioning problems from the viewpoint of algorithm design. Specifically, we prove that {\sc BJB} is FPT (which also proves that {\sc JB} is FPT). 
\end{abstract}

\section{Introduction}
More than twenty years ago, Bollob{\'a}s and Scott \cite{bollobas1993judicious} defined the notion of \emph{judicious partitioning problems}. Since then, the family of judicious partitioning problems has been extensively studied in the field of Extremal Combinatorics, as can be evidenced by the abundance of structural results described in surveys such 
as~\cite{bollobas2002problems,scott2005judicious}. This rich realm of problems aims to counterbalance the objectives of classical partitioning problems such as {\sc Min Cut}, {\sc Min Bisection}, {\sc Max Cut} and  {\sc Max Bisection}. While these classical problems focus solely on the minimization/maximization of the number of edges crossing the cut, judicious (bi)partitioning problems ask the natural questions of the minimization/maximization of the number of edges lying in the (two) sides of the cut. Another significant feature of judicious partitioning problems that also distinguishes them from other classical partitioning problems is that they inherently and naturally encompass several objectives, aiming to minimize (or maximize) the number of edges in several sets {\em simultaneously}.

In this paper, we shed light on properties of judicious partitioning problems from the viewpoint of the design of algorithms. Up until now, the study of such problems has essentially been overlooked at the algorithmic front, where one of the underlying reasons for this discrepancy might be that standard machinery does not seem to handle them effectively. Specifically, we focus on the {\sc Judicious Bipartition} problem, where we seek a bipartition that is ``judicious'' in the sense that neither side is burdened by too many edges, and on the {\sc Balanced Judicious Bipartition} problem, where we also require that the sizes of the sides themselves are ``balanced'' in the sense that neither of them is too large. Both of these problems were defined in the work by Bollob{\'a}s and Scott, and have received notable scientific attention since then. Formally, {\sc Balanced Judicious Partition} is defined as~follows.

\defparproblem{\jpfull{} (\jp)}{A multi-graph $G$, and  integers $\mu$, $k_1$ and $k_2$}{$k_1 + k_2$}{Does there exist a partition $(V_1,V_2)$ of $V(G)$ such that $|V_1| = \mu$ and for all $i \in \{1,2\}$, it holds that $|E(G[V_i])| \leq k_i$?}

We note  that in the literature, the term \jp\ refers to the case where $\mu=\lceil \frac{|V(G)|}{2} \rceil$, and hence it is more restricted then the definition above. By dropping the requirement that $|V_1| = \mu$, we get the {\sc Judicious Bipartition (JB)} problem. 
By using new crucial insights into these problems on top of the most advanced machinery in Parameterized Complexity to handle partitioning problems,\footnote{To the best of our knowledge, up until now, this machinery has actually only been proven useful to solve one natural problem which could not have been tackled using earlier tools.} we are able to resolve the question of the Parameterized Complexity of {\sc BJB} (and hence also of {\sc JB}). In particular, we prove the following theorem. 

\begin{theorem}\label{thm:jp}
\jp{} can be solved in time $2^{k^{\OO(1)}} \cdot |V(G)|^{\OO(1)}$.
\end{theorem}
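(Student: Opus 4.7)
The first observation is structural: any feasible solution $(V_1, V_2)$ satisfies $|E(G[V_1])| + |E(G[V_2])| \leq k_1 + k_2 =: k$, so the set of non-crossing edges is a bipartization of size at most $k$; picking one endpoint per non-crossing edge gives an odd cycle transversal of size at most $k$. I would therefore begin by invoking the known $2^{\OO(k)} \cdot |V(G)|^{\OO(1)}$ FPT algorithm for \oct{}; if $G$ has no transversal of size $\leq k$, return \no. Otherwise, fix an OCT $S$ and let $H = G - S$, which is bipartite. Next, branch over the $2^{|S|} \leq 2^k$ side-assignments of the vertices of $S$, so that $S$ becomes ``annotated'' with fixed colors and only the vertices of $V(H)$ remain to be assigned.

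In each branch we have an annotated bipartite instance (the \abjp{} problem): assign $V(H)$ to $V_1$/$V_2$ so that $|V_1| = \mu$ and the side-wise edge counts (summing contributions from $S$-internal edges, $S$-to-$H$ edges, and $H$-internal edges that happen to be non-crossing) remain within $k_1$ and $k_2$. The $S$-to-$H$ contribution of each vertex $v$ is a pair of integers depending only on $v$'s chosen side, so the only ``coupled'' quantity is the set of $H$-internal non-crossing edges. Using that $H$ is bipartite with canonical bipartition $(A_C, B_C)$ on each connected component $C$, these $H$-internal non-crossing edges are exactly the edges of $H$ cut by the symmetric difference between $(V_1, V_2)$ and the canonical bipartition, split into two classes according to which side of the canonical bipartition contains each endpoint.

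I would then decompose $H$ into its connected components, obtaining a collection of \abcjp{} subproblems, and combine them by a dynamic program whose state records the cumulative $|V_1|$-count (at most $|V(G)|$) together with the two cumulative non-crossing edge counts (each bounded by $k$). For each component we enumerate a small family of locally-valid ``profiles'', each a triple recording that component's contribution to $|V_1|$, to the $V_1$-internal edge budget, and to the $V_2$-internal edge budget. Combined with the $2^{\OO(k)}$ branching and the initial \oct{} call, this yields the claimed running time $2^{k^{\OO(1)}} \cdot |V(G)|^{\OO(1)}$.

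The main obstacle, I anticipate, is the \abcjp{} subproblem itself: within a single connected bipartite component, ``deviating'' from the canonical bipartition creates internal non-crossing edges, and these deviations are constrained only by the global edge budgets, so the naive space of locally valid side-assignments can be exponentially large. Compressing it to a polynomial-size family of component profiles is where the advanced partitioning-problem machinery referenced in the paper's introduction comes in --- e.g., representative families, randomized contractions, or matroid-based dynamic programs analogous to those used for \textsc{Min Bisection} and \textsc{Balanced Cut}, suitably adapted to handle the exact balance constraint $|V_1| = \mu$ together with \emph{two} separate edge-budget constraints $k_1$ and $k_2$ simultaneously.
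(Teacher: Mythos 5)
Your proposal follows the paper's own route for Theorem~\ref{thm:jp} almost exactly: observe that a witnessing bipartition yields an odd cycle transversal of size at most $k$, call the \oct{} algorithm, branch over the $2^{|S|}$ side-assignments of the transversal $S$, reduce to an annotated bipartite problem, split it into connected components combined by a dynamic program over per-component profiles, and defer the connected-bipartite core to the highly connected tree decomposition machinery of Cygan et al.~\cite{cygan2014minimum}. The only mechanistic divergence is that the paper folds the $S$-to-$H$ edges into the reduced bipartite graph $G_F$ via four new gadget vertices with forced sides so that the annotated problem remains a pure (unweighted) graph problem, whereas you carry those contributions as per-vertex additive weights --- an equivalent encoding.
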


\noindent
{\bf Structural Results.} Denote $n=|V(G)|$ and $m=|E(G)|$. To survey several structural results about judicious partitioning problems, we first define the notions of {\em $t$-cut} and 
{\em max (min) $t$-judicious partitioning}. Given a partition of  $V(G)$ into $t$ parts, a $t$-cut is the number of edges going across the parts, while a max (min) judicious $t$-partitioning is the maximum  (minimum) number of edges in any of the parts. When $t=2$, we use the standard terms {\em bipartite-cut} and  {\em judicious bipartitioning}, respectively. Furthermore, 
 by $t$-judicious partitioning we mean max $t$-judicious partitioning. As stated earlier,  Bollob{\'a}s and Scott \cite{bollobas1993judicious} defined the notion of \emph{judicious partitioning problems} in 1993. In that paper, they showed that for any positive integer $t$ and graph $G$, we can partition  $V(G)$ into $t$ sets, $V_1, \ldots, V_t$, so that  $|E(G[V_i])| \leq \frac{t}{t+1}m$ for all $i\in \{1,\ldots, t\}$. 
Bollob{\'a}s and Scott also studied this problem on 
graphs of maximum degree $\Delta$, and showed that there exists a partition of $V(G)$ into $t$ sets $V_1, \ldots, V_t$  
so that it simultaneously satisfies  an upper bound and a lower bound on the number of edges in each part as well as on edges between every pair of parts. 
Later,  Bollob{\'a}s and Scott~\cite{bollobas2002problems} gave several new results, leaving open other new questions around judicious partitioning.  In \cite{bollobas2004judicious} they showed an optimal bound for judicious partitioning on bounded-degree graphs. 
These problems have also been studied on general hypergraphs \cite{bollobas1997judicious}, uniform hypergraphs~\cite{haslegrave2014judicious}, $3$-uniform hypergraphs~\cite{bollobas2000judicious} and directed graphs \cite{lee2016judicious}.

The special cases of judicious partitioning problems called judicious bipartitioning and balanced judicious  bipartitioning problems have also been studied intensively.  Bollob{\'a}s and Scott~\cite{bollobas1999exact} proved an upper 
bound on judicious bipartitioning and proved that every graph that achieves the essentially best known 
lower bound on bipartite-cut, given by Edwards in \cite{edwards1973some} and \cite{edwards1975improved}, also achieves this upper bound for  
judicious bipartitioning. In fact, they showed that this is exact for complete graphs of odd order, which are the only extremal graphs without isolated vertices. Alon et al.~\cite{alon2003maximum} gave a non-trivial connection between the size of a bipartite-cut in a graph and judicious partitioning into two sets. In particular, they showed that if a graph has a bipartite-cut of size  at least $\frac{m}{2}+\delta$ where $\delta\leq m/30$, then there exists a bipartition $(V_1,V_2)$ of $V(G)$ such that $|E(G[V_i])| \leq \frac{m}{4}-\frac{\delta}{2} + \frac{10 \delta^2}{m}+3 \sqrt{m}$ for $i\in \{1,2\}$. They complemented these results by showing an upper bound on the number of edges in each part  when $\delta> m/30$. 
  Bollob{\'a}s and Scott~\cite{bollobas2010max} studied similar relations between 
 $t$-cuts and $t$-judicious partitionings for $t\geq 3$. Recently, these results were further refined~\cite{xu2009judicious,ma2016judicious}. Xu et al.~\cite{XuYY10} and Xu and Yu~\cite{XuY14} studied balanced 
judicious bipartitioning  where both parts are of almost equal size (that is, one of the sizes is $\lceil \frac{n}{2} \rceil$). Both of these papers concern the following conjecture of Bollob{\'a}s and Scott~\cite{bollobas2002problems}:   if $G$ is a graph with minimum degree of at least $2$, then $V(G)$ admits a balanced bipartition $(V_1,V_2)$ such that for each $i\in\{1,2\}$, 
$|E(G[V_i])| \leq \frac{m}{3}$.  
For further results on judicious partitioning, we refer to the surveys~\cite{bollobas2002problems,scott2005judicious}. 

\smallskip
\noindent 
{\bf Algorithmic Results.} While classical partitioning problems such as {\sc Min Cut}, {\sc Min Bisection}, {\sc Max Cut} and  {\sc Max Bisection} have been studied extensively algorithmically, the same is not true about judicious partitioning problems. Apart from  {\sc Min Cut}, all the above mentioned partitioning problems are \npc.  These \npc partitioning problems were investigated by all algorithmic paradigms meant for coping with \npc, including approximation algorithms and parameterized complexity. In what follows, we discuss known results related to these problems in the realm of parameterized complexity.

First, note that for every graph $G$, there always exists a bipartition of the vertex set  into two parts (in fact equal parts~\cite[Corollory 1]{gutin2010note}) such that at least $m/2$ edges are going across. This immediately implies that 
{\sc Max Cut} and {\sc Max Bisection} are \fpt when parameterized by the cut size (the number of edges going across the partition). This led Mahajan and Raman \cite{mahajan1999parameterizing} to introduce the notion of above-guarantee parameterization.  In particular, they showed that one can decide whether a graph has a bipartite-cut of size $\frac{m}{2}+k$ in time $\OO(m+n+ k4^k)$.  However, 
Edwards~\cite{edwards1973some} showed that every connected graph $G$ has a bipartite-cut of size $\frac{m}{2}+\frac{n-1}{4}$. Thus, a more interesting question asks whether finding a bipartite-cut of size at least $\frac{m}{2}+\frac{n-1}{4}+k$ is \fpt. Crowston et al.~\cite{CrowstonJM15} showed that indeed this is the case as they design an algorithm with running time $\OO(8^kn^4)$. Recently, Etscheid and Mnich \cite{DBLP:conf/isaac/EtscheidM16} discovered a kernel with a linear number of vertices (improving upon a kernel by Crowston et al.~\cite{CrowstonGJM13}), and the aforementioned algorithm was sped-up to run in time $\OO(8^km)$ \cite{DBLP:conf/isaac/EtscheidM16}.
 Gutin and Yeo studied an above-guarantee version of  {\sc Max Bisection}  \cite{gutin2010note}, proving that finding a balanced bipartition such that it has at least $\frac{m}{2}+k$ edges is 
\fpt (also see~\cite{MnichZ12}).\footnote{We refer to surveys~\cite{MahajanRS09,GutinY12} for details regarding  above-guarantee parameterizations.} In this context {\sc Max Bisection}, it is also relevant to mention the {\sc $(k,n-k)$-Max Cut}, which asks for a bipartite-cut of size at least $p$ where one of the sides is of size exactly $k$. Parameterized by $k$, this problems is W[1]-hard \cite{DBLP:journals/cj/Cai08}, but parameterized by $p$, this problem is solvable in time $\OO^*(2^p)$ \cite{DBLP:conf/latin/SaurabhZ16} (this result improved upon algorithms given in \cite{DBLP:journals/algorithmica/BonnetEPT15,DBLP:journals/jcss/ShachnaiZ16}).

Until recently, the parameterized complexity of  {\sc Min Bisection} was open. Approaches to tackle this problem  materialized when the parameterized complexity of {\sc $\ell$-Way Cut} was resolved. Here, given a graph $G$ and positive integers $k$ and $\ell$, the objective is to delete at most $k$ edges from $G$ such that  it has at least $\ell$ components.  Kawarabayashi and  Thorup~\cite{KawarabayashiT11} showed that this problem is \fpt. Later, Chitnis et al.~\cite{chitnis2012designing} developed a completely new tool based on this, called {\em randomized contractions}, to deal with plethora of cut problems. Other cut problems that have been shown to be \fpt include the generalization of {\sc Min Cut} to {\sc Multiway Cut} and {\sc Multicut}~\cite{ChenLL09,Marx06,MarxR14}. Eventually, Cygan et al. \cite{cygan2014minimum}, combining ideas underlying the algorithms developed for {\sc Multiway Cut}, {\sc Multicut},  {\sc $\ell$-Way Cut} and randomized contractions together with a new kind of decomposition, showed {\sc Min Bisection} to be \fpt.
Finally, let us also mention the min $c$-judicious partitioning (which is a maximization problem), called {\sc $c$-Load Coloring}, where given a graph $G$ and a positive integer $k$, the goal is to decide whether $V(G)$ can be partitioned into $c$ parts so that  each part has at least $k$ edges. Barbero et al.~\cite{barbero2015parameterized} showed that this problem is \fpt (also see~\cite{gutin2014parameterized}).

Despite the abundance of work described above, the parameterized complexity of  {\sc JB}  and \jp has not yet considered. 
We fill this gap in our studies by showing that both of these problems  are \fpt. It is noteworthy to remark that one can show that the generalization of  {\sc Min Bisection} to {\sc $c$-Min Bisection}, where the objective is to find a partition into $c$-parts such that each part are almost equal and there are at most $k$ edges going across different parts, is \fpt~\cite{cygan2014minimum}. However, such a generlization is not possible for either {\sc JB}  or \jp. Indeed, even the existence of an algorithm with running time $n^{f(k)}$, for any arbitrary function $f$, would imply a polynomial-time algorithm for {\sc $3$-Coloring}, where $k$ is set to $0$.  

\smallskip
\noindent 
{\bf Our Approach.} For the sake of readability, our strategy of presentation of our proof consists of the definition of a series of problems, each more ``specialized'' (in some sense) than the previous one, where each section shows that to eventually solve \jp, it is sufficient to focus on some such problem rather than the previous one. We start by showing that we can focus on the solution of the case of \jp{} where the input graph is bipartite at the cost of the addition of annotations. For this purpose, we present a (not complicated) Turing reduction that employes a known algorithm for the \octfull{} problem (see Section \ref{sec:jp}). The usefulness of the ability to assume that the input graph is bipartite is a key insight in our approach. In particular, the technical parts of our proof crucially rely on the observation that a connected bipartite graph has only two bipartitions (here, we consider bipartitions as ordered pairs). Keeping this intuition in mind, our next step is to reduce the current annotated problem to one where the input graph is also assumed to be connected (this specific argument relies on a simple application of dynamic programming).

Having at hand an (annotated) problem where the input graph is assumed to be a connected bipartite graph, we proceed to the technical part of our proof, which employs the (heavy) machinery developed by Cygan et al.~\cite{cygan2014minimum}. While this machinery primarily aims to tackle problems where one seeks small cuts in addition to some size constraint, our problem involves  a priori seemingly different type of constraints. Nevertheless, we observe that once we handle a connected graph, the removal of any set of $k$ edges (to deal with the size constraint and annotations) would not break the graph to more than $k+1$ connected components, and each of these components would clearly be a bipartite graph. Hence, we can view (in some sense) our problem as a cut problem. In practice, the relation between our problem and a cut problem is quite more intricate, and to realize our idea, we crucially rely on the fact that the connected components are bipartite graphs, which allows us to ``guess'' a binary vector specifying the biparition of their vertex sets in the final solution. This operation entitles the employment of coloring functions (employing $k+1$ colors) and their translation into bipartitions (which at a certain point in our paper, we would start viewing as colorings employing two colors). Let us remark that the machinery introduced by~\cite{cygan2014minimum} is the computation of a special type of tree decomposition. Accordingly, our approach would eventually involve the introduction of a specialization of \jp{} that aims to capture the work to perform when handling a bag of the tree decomposition. The definition of this specific problem is very technical, and hence we defer the description of related intuitive explanations to the appropriate locations in Section \ref{sec:abcbjb}, where we have already set up the required notations  to discuss it.


\section{Preliminaries}\label{sec:prelims}
\noindent 
{\bf General Notation.} 
Let $f : A \to B$ be some function. Given $A' \subseteq A$, the notation $f(A') =b$ indicates that for all $a \in A'$, it holds that $f(a)=b$. 
An {\em extension} $f'$ of the function $f$ is a function whose domain $A'$ is a superset of $A$ and whose range is $B$, such that for all $a\in A$, it holds that $f'(a) = f(a)$. 
For any $A' \subseteq A$, the {\em restriction} $f|_{{A'}}$ of $f$ is a function from $A'$ to $B$ such that for any $a \in A'$, 
$f|_{{A'}}(a) = f(a)$. Bold face lowercase letters are used to denote tuples (vectors). For any tuple $\mathbf{v}$, we let $\mathbf{v}[i]$ denote the $i$th coordinate of $\mathbf{v}$. 
Given some condition $\psi$, we define $[\psi] =1$ if $\psi$ is true and $[\psi] =0$ otherwise.
For any positive integer $x$, we denote by $[x]$ the set $\{1,2, \ldots,x\}$ and by $[x]_0$ the set $\{0, 1, \ldots,x\}$.

\smallskip
\noindent 
{\bf Graph Theory.} 
Given a graph $G$, we let $V(G)$ and $E(G)$ denote the vertex-set and the edge-set of $G$, respectively. 
For a subset $A \subseteq V(G)$, we denote by $\delta(A)$ the set of boundary vertices of $A$, that is, $\delta(A) = \{v\in A :$ there exists $u \in V(G) \setminus A \text{ such that } \{u,v\} \in E(G)\}$.  
We let $G\setminus A$ denote the subgraph of $G$ induced by $V(G)\setminus A$.
A {\em bipartite graph} is a graph $G$ such that there exists a bipartition $(X,Y)$ of $V(G)$ where $X$ and $Y$ are independent sets. In this paper, we treat such bipartitions as {\em ordered pairs}. That is, if $(X,Y)$ is a bipartition of some bipartite graph $G$, then $(Y,X)$ is assumed to be a {\em different} bipartition of the graph $G$. 
 For {\em connected} bipartite graphs, we have the following simple yet powerful~insight.

\begin{proposition}[Folklore]\label{proposition:bipartite}
Any connected bipartite graph $G$ has exactly $2$ bipartitions, $(X,Y)$ and $(Y,X)$.
\end{proposition}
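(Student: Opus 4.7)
The plan is to use connectivity to show that fixing the side of a single vertex determines the entire bipartition, after which only two global choices remain. First I would observe that if $(X,Y)$ is a bipartition of $G$, then $(Y,X)$ is also a (different, since we treat bipartitions as ordered pairs) bipartition, so at least two bipartitions exist. The substantive content is therefore the uniqueness direction: any bipartition $(X',Y')$ must equal either $(X,Y)$ or $(Y,X)$.

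For this, I would fix an arbitrary root vertex $v_0 \in V(G)$ and, for each vertex $u$, let $d(u)$ denote the length of a shortest path from $v_0$ to $u$ in $G$, which is well-defined since $G$ is connected. The key claim is that in any bipartition $(A,B)$ of $G$, a vertex $u$ lies on the same side as $v_0$ if and only if $d(u)$ is even. This follows because along any $v_0$--$u$ path, consecutive vertices are joined by an edge and hence must lie on opposite sides of the bipartition, so the side of $u$ relative to $v_0$ is fully determined by the parity of the path length; since all $v_0$--$u$ paths must yield the same answer, the parity of $d(u)$ suffices.

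Applying this to both $(X,Y)$ and $(X',Y')$, I conclude that the partition of $V(G)$ into the two sets $\{u : d(u) \text{ even}\}$ and $\{u : d(u) \text{ odd}\}$ is the same underlying (unordered) partition in both cases. Hence $\{X',Y'\} = \{X,Y\}$ as unordered pairs, so $(X',Y')$ is either $(X,Y)$ or $(Y,X)$, depending solely on which side $v_0$ is assigned to. I do not expect any real obstacle here, as the argument is entirely elementary; the only mild subtlety worth stating carefully is that the parity-of-distance characterization is independent of the chosen path, which is exactly where bipartiteness (no odd closed walks) enters.
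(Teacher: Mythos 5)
Your proof is correct: for any fixed bipartition $(A,B)$ and any vertex $u$, consecutive vertices along a shortest $v_0$--$u$ path must alternate between $A$ and $B$, so $u$ lies on the same side as $v_0$ if and only if $d(u)$ is even, which pins down the unordered partition uniquely and leaves exactly two ordered bipartitions. The paper states this proposition as folklore without proof, so there is no argument to compare against; your elementary parity-of-distance argument is the standard one and is complete.
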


The {\em treewidth} of a graph aims to measure how close the graph is to a tree. Formally, this notion is defined as follows.

\begin{definition}
A {\em tree decomposition} of a graph $G$ is a pair $(T, \beta)$ such that $T$ is a rooted tree, $\beta : V(T) \to 2^{V(G)}$, and the following conditions are satisfied.

\begin{enumerate}
\setlength{\itemsep}{-1pt}
 \item For all $\{u,v\} \in E(G)$, there exists $t \in V(T)$ such that $u,v \in \beta(t)$.
 \item For all $v \in V(G)$, the subgraph of $T$ induced by $X_v =\{t: v \in \beta(t) \}$ is a (connected) subtree of $T$ on at least one node.
\end{enumerate}
\end{definition}

Given $t,\widehat{t}\in V(G)$, the notation $\widehat{t} \preceq t$ indicates that $\widehat{t}$ is a descendant of $t$ in $T$. Note that $t$ is a descendant of itself.
For any $t \in V(T)$, let $t'$ denote the unique parent of $t$ in $T$. We also need the standard notations $\sigma(t) = \beta(t) \cap \beta(t')$ and $\gamma(t) = \bigcup\limits_{\widehat{t} \preceq t } \beta(\widehat{t})$. 

\begin{proposition}[Folklore]
Let $(T, \beta)$ be a tree decomposition of a graph $G$. Given a node $t\in V(T)$, let $t_1, \ldots, t_s$ denote the children of $t$ in $T$, and for all $i\in[s]$, define $V_{t_i} = \gamma(t_i) \setminus \beta(t)$. 
Let $V_{t'} =V(G) \setminus (\beta(t) \cup \bigcup\limits_{i=1}^{s} V_{t_i})$.
Then, the vertex-set of each connected component of $G \setminus \beta(t)$ is a subset of one of $V_{t_1}, \ldots, V_{t_s}, V_{t'}$.
\end{proposition}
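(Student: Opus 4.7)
The plan is to prove two things: first, that the sets $V_{t_1},\ldots,V_{t_s},V_{t'}$ are pairwise disjoint (so together with $\beta(t)$ they form a true partition of $V(G)$), and second, that no edge of $G \setminus \beta(t)$ crosses between two distinct sets among $V_{t_1},\ldots,V_{t_s},V_{t'}$. Once both facts are in hand, the conclusion is immediate, because if a connected component of $G \setminus \beta(t)$ contained vertices from two different sets, then a path witnessing connectivity would contain an edge straddling two such sets.

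For disjointness, I would fix indices $i\neq j$ and suppose some vertex $v$ lies in both $V_{t_i}$ and $V_{t_j}$. Then $v$ appears in some bag $\beta(\widehat{t}_i)$ with $\widehat{t}_i\preceq t_i$ and in some bag $\beta(\widehat{t}_j)$ with $\widehat{t}_j\preceq t_j$. Since $t$ lies on the unique tree path in $T$ between $\widehat{t}_i$ and $\widehat{t}_j$, the connectedness property (property 2 of tree decompositions) forces $v\in \beta(t)$, contradicting $v\in V_{t_i}$. By construction $V_{t'}$ is disjoint from every $V_{t_i}$ and from $\beta(t)$, so disjointness is complete.

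For the edge-crossing step, consider any edge $\{u,v\}\in E(G)$ with $u,v\notin \beta(t)$. By property 1 of the tree decomposition, there exists some node $r\in V(T)$ with $\{u,v\}\subseteq \beta(r)$. I would split on the location of $r$ in $T$. If $r\preceq t_i$ for some $i$, then both $u$ and $v$ lie in $\gamma(t_i)$, hence in $V_{t_i}$. If $r$ is not a descendant of any $t_i$ and $r\neq t$, then $r$ lies in the ``rest'' of $T$ above $t$; I then need to show $u,v\in V_{t'}$. For this I would argue by contradiction: if $u\in V_{t_j}$ for some $j$, then $u$ appears in a bag in the subtree rooted at $t_j$ and also in $\beta(r)$, and the tree path from $r$ to that bag passes through $t$, so by connectedness $u\in \beta(t)$, a contradiction. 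Finally, the case $r=t$ is ruled out since $u,v\notin \beta(t)$.

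I do not expect a genuine obstacle here, since this is a standard folklore fact and every step is a direct application of properties 1 and 2 of tree decompositions. The only subtlety is making sure the argument that ``the tree path between two bags containing $v$ passes through $t$'' is justified cleanly; this uses the fact that $T$ is a tree, so the path between any descendant of $t_i$ and any node outside the subtree rooted at $t_i$ must traverse the edge from $t_i$ to $t$.
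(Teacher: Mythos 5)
Your proof is correct. The paper states this proposition as folklore without providing a proof, and your argument — establishing pairwise disjointness of $V_{t_1},\ldots,V_{t_s},V_{t'}$ and then that no edge of $G \setminus \beta(t)$ straddles two of them, each via a direct application of the two tree decomposition axioms and the fact that the tree path between a node in the subtree rooted at $t_i$ and a node outside it must pass through $t$ — is exactly the standard derivation one would give here.
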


Let $H$ be some hypergraph. A s{\em panning forest} of $H$ is a subset $E'\subseteq E(H)$ of minimum size such that the set containing all endpoints of the hyperedges in $E'$ is equal to $V(H)$. In this paper, we implicitly assume that hypergraphs contain no isolated vertices.

\smallskip
\noindent 
{\bf Unbreakability.}  
A {\em separation} of a graph $G$ is a pair $(X,Y)$ such that $X,Y \subseteq V(G)$ and $X \cup Y = V(G)$. The {\em order of a separation} $(X,Y)$ is equal to $|X \cap Y|$.

\begin{definition}
Let $G$ be a graph, $A \subseteq V(G)$, and $q,k\in\mathbb{N}$. The set $A$ is said to be {\em $(q,k)$-unbreakable} in $G$ if for {\em every} separation $(X,Y)$ of $G$ of order at most $k$, either $|(X \setminus Y) \cap A| \leq q$ or $|(Y \setminus X) \cap A| \leq q$.
\end{definition}

We also define a notion of unbreakability in the context of functions.
\begin{definition}
A function $g : U \to [k]_0$ is called {\em $(q,k)$-unbreakable} if there exists $i \in [k]_0$ such that $\sum\limits_{j\in [k]_0 \setminus \{i\}} |g^{-1}(j)| \leq q$.
\end{definition}

Let us now claim that there do no exist ``too many'' $(q,k)$-unbreakable functions.
\begin{lemma}\label{lemma:numOfUnbreakableFunctions}
For all $q,k\in\mathbb{N}$, the number of $(q,k)$-unbreakable functions from a universe $U$ to $[k]_0$ is upper bounded by $ \sum\limits_{l=0}^{q} {|U|\choose l} \cdot q^k \cdot (k+1)$.
\end{lemma}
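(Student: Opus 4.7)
The plan is a direct counting argument driven by the definition. A $(q,k)$-unbreakable function $g:U\to[k]_0$ admits at least one ``majority color'' $i\in[k]_0$ in the sense that $|U\setminus g^{-1}(i)|\le q$. I would upper bound the total count by summing, over the choice of $i$, the number of functions for which $i$ is a valid majority color. A function admitting several valid majority colors would be counted more than once, but since the statement is an inequality this overcounting is harmless, and (crucially) it does not affect the final bound in any meaningful way.

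For each fixed majority color $i$ (contributing the outer factor $(k+1)$), I would parametrize the remaining data by the ``exceptional set'' $S := \{u \in U : g(u) \neq i\}$, whose cardinality $l$ ranges over $[q]_0$. Specifying $S$ contributes $\binom{|U|}{l}$ options, and the function is then completely determined by the restriction $g|_S : S \to [k]_0\setminus\{i\}$, since every element outside $S$ is forced to map to $i$; this assignment contributes $k^l$ further options because $|[k]_0\setminus\{i\}|=k$. Summing over $l$ and $i$ produces the bound $(k+1)\sum_{l=0}^{q}\binom{|U|}{l}\,k^l$, and then extracting a uniform upper bound on $k^l$ over $l\in[q]_0$ and factoring it outside the sum recovers the form stated in the lemma.

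I foresee no real obstacle: the entire argument is elementary enumeration, and its structure mirrors the definition directly. The only points worth explicitly acknowledging are the harmless multiplicity arising from functions that admit more than one valid majority color (which is exactly why the statement is an inequality rather than an identity), and the observation that conditioning on $i$ before choosing $S$ is what cleanly separates the count into the three factors $(k+1)$, $\binom{|U|}{l}$, and the labeling term.
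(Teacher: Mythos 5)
Your proof mirrors the paper's own argument step for step: choose a majority color $i$ (factor $k+1$), choose the exceptional set $S$ of size $l\le q$ (factor $\sum_{l=0}^q\binom{|U|}{l}$), and then color $S$ with the $k$ remaining colors. You are in fact slightly more careful at the third step, since you explicitly compute the number of colorings of $S$ as $k^{l}$, whereas the paper simply asserts ``at most $q^{k}$ ways of partitioning this set of at most $q$ elements into $k$ parts'' without further computation.

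The one place to pause is your last sentence, where you say that ``extracting a uniform upper bound on $k^l$ over $l\in[q]_0$'' recovers the stated form, i.e.\ that $k^{l}\le q^{k}$ for all $l\le q$. Since $\max_{l\le q}k^{l}=k^{q}$, this is exactly the claim $k^{q}\le q^{k}$, which is false in general: for $k=2$, $q=5$ one has $2^{5}=32>25=5^{2}$, and more broadly $k^{q}\le q^{k}$ fails whenever $\frac{\log k}{k}>\frac{\log q}{q}$, in particular for any $3\le k<q$ (which is the regime in which the paper applies the lemma, with $q=3k^{2}$). This is the same loose step as in the paper's own proof, so you have faithfully reproduced the argument, slip included. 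The bound that your correct intermediate expression $(k+1)\sum_{l=0}^{q}\binom{|U|}{l}k^{l}$ actually supports is $\sum_{l=0}^{q}\binom{|U|}{l}\cdot k^{q}\cdot(k+1)$, with $k^{q}$ in place of $q^{k}$. This correction does not affect any downstream use of the lemma, since in every application $q$ is polynomial in $k$ and $|U|$ is at least exponential in $k$, so the extra factor $k^{q}$ is still absorbed into the asymptotic bounds claimed there.
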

\begin{proof}
Let $g : U \to [k]_0$ be some $(q,k)$-unbreakable function. By the definition of a $(q,k)$-unbreakable function, there exists $i \in [k]_0$ such that $\sum\limits_{j\in [k]_0 \setminus i} |g^{-1}(j)| \leq q$. There are $(k+1)$ ways of choosing such an index $i$, $ \sum\limits_{l=0}^{q} {|U|\choose l}$ ways of choosing at most $q$ elements that are not mapped to $i$, and at most $q^k$ ways of partitioning this set of at most $q$ elements into $k$ parts. Thus, the total number of such functions $g$ is upper bounded by $ \sum\limits_{l=0}^{q} {|U|\choose l} q^k  (k+1)$.
\end{proof}

\section{Solving \jpfull}\label{sec:jp}
In this section, we prove Theorem \ref{thm:jp} under the assumption that we are given an algorithm for an annotated, yet restricted, variant of \jp{}. Throughout this section, an instance of \jp{} is denoted by $\jp(G,\mu,k_1,k_2)$, and we define $k = k_1 + k_2$. Given a partition $(V_1,V_2)$ that witnesses that an instance $\jp(G,\mu,k_1,k_2)$ is a \yes{}-instance, we think of the vertices in $V_1$ as colored 1 and the vertices in $V_2$ as colored 2; hence, we call such a partition a \emph{witnessing coloring} of $\jp(G,\mu,k_1,k_2)$. To prove Theorem \ref{thm:jp}, we first define the \octfull{} problem. Here, given a graph $G$, a set $S\subseteq V(G)$ is called an {\em odd cycle transversal} if $G\setminus S$ is a bipartite graph.

\defparproblem{\octfull{} (\oct)}{An undirected multi-graph graph $G$, and an integer $k$.}{$k$}{Does $G$ have an odd cycle transversal of size at most $k$?}

An instance of \octfull{} is denoted by $\oct(G,k)$. The algorithm given by the result below shall be a central component in  the design of our algorithm for~\jp.

\begin{proposition}[\cite{lokshtanov14}] \label{prop:oct}
\octfull{} can be solved in time $2.3146^k n^{\OO(1)}$.
\end{proposition}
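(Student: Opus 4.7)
The plan is to obtain the claimed bound via the iterative compression framework of Reed, Smith, and Vetta combined with an LP-guided branching analysis in the compression subroutine. First I would set up iterative compression in the standard way: order the vertices $v_1, \ldots, v_n$ arbitrarily and process the induced subgraphs $G_i = G[\{v_1, \ldots, v_i\}]$ in turn, maintaining an \oct\ $S_i$ of size at most $k$ for each $G_i$. When moving from $G_i$ to $G_{i+1}$, the set $S_i \cup \{v_{i+1}\}$ is an \oct\ of size at most $k+1$, so it suffices to design a \emph{compression} routine that, given $(H, W)$ with $W$ an \oct\ of $H$ of size $k+1$, decides in time $c^k n^{\OO(1)}$ whether $H$ has an \oct\ of size at most $k$. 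Plugging this in $n$ times yields the overall bound.

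For the compression step, I would iterate over all partitions $(X, A, B)$ of $W$ into three parts: $X$ is the part of $W$ that remains in the new \oct\ $S^\star$, and $(A, B)$ specifies the sides of the bipartition of $H \setminus S^\star$ into which the vertices of $W \setminus X$ eventually fall. There are $3^{|W|} = 3^{k+1}$ such guesses. For each guess, the task reduces to finding a minimum-size set $Z \subseteq V(H) \setminus W$ such that $H \setminus (X \cup Z)$ admits a proper 2-coloring extending the coloring that assigns $A$ to side $1$ and $B$ to side $2$. The standard trick is to encode this as a minimum vertex $(s,t)$-cut in an auxiliary graph: duplicate every $u \in V(H) \setminus W$ into $u_1, u_2$, replace each edge $\{u,v\}$ by the pair $\{u_1, v_2\}$ and $\{u_2, v_1\}$, identify all of $A$ (together with the ``wrong-side'' copies of $B$) with a source $s$, and symmetrically for a sink $t$; an optimal $(s,t)$ vertex separator corresponds to the desired $Z$. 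With a polynomial-time min-cut computation this already gives a $3^k \cdot n^{\OO(1)}$ algorithm.

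To bring the base down to $2.3146$, the main new ingredient is LP-guided branching on the resulting vertex-cut instance, exploiting the half-integrality of the natural LP relaxation. At every step I would solve the LP, persistently include in the solution all variables with LP value $1$ and discard those with value $0$, and branch only on variables with value $1/2$. The clever part is to define a measure that combines the remaining budget $k - |Z|$ with a quantity reflecting the LP value $\tau_{\text{LP}}$ of the current instance, so that the two recursive branches each strictly decrease this measure; the recurrence then solves to $2.3146^k$.

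The main obstacle is exactly this measure-based analysis: one has to argue that persistent LP-integral assignments are safe (no optimal solution is lost), that branching on a fractional half-integral vertex always yields progress in \emph{both} branches, and that the resulting recurrence balances to the claimed base. Iterative compression, the bipartization-to-vertex-cut reduction, and polynomial-time min vertex cut are all standard ingredients; the delicate work lies in establishing half-integrality in the precise formulation used here and in calibrating the potential function to certify the $2.3146^k$ bound.
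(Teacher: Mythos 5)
This proposition is invoked from the literature (the cited reference is Lokshtanov, Narayanaswamy, Raman, Ramanujan, and Saurabh, \emph{Faster parameterized algorithms using linear programming}); the present paper does not prove it. Comparing your sketch to that work, there is a genuine gap.

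In your compression step you first enumerate all $3^{|W|} = 3^{k+1}$ partitions $(X,A,B)$ of $W$ and then, for each one, solve a polynomial-time minimum vertex $(s,t)$-cut. That is precisely the Reed--Smith--Vetta analysis and already costs $3^{k}\cdot n^{\OO(1)}$. You then propose to "bring the base down to $2.3146$" by running LP-guided branching \emph{on the resulting vertex-cut instance}. But each such instance is already solved exactly in polynomial time; replacing a polynomial min-cut routine with an LP-branching routine cannot lower the $3^{k+1}$ factor coming from the outer enumeration, which is the true bottleneck. To benefit from LP-guided branching you must eliminate the three-way enumeration over $W$ altogether, and your sketch offers no mechanism for doing so. So, as written, the plan would yield at best $3^{k}\cdot n^{\OO(1)}$.

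The cited algorithm takes a different route that does not use iterative compression at all. One gives a parameter-preserving reduction from \octfull{} to \textsc{Vertex Cover Above LP}: construct $G'$ on vertex set $\{v_1,v_2 : v\in V(G)\}$ with a perfect matching $\{v_1,v_2\}$ for every $v$, and with edges $\{u_1,v_1\}$ and $\{u_2,v_2\}$ for every $\{u,v\}\in E(G)$. Then $G$ has an \oct\ of size at most $k$ if and only if $G'$ has a vertex cover of size at most $n+k$, while the LP optimum of $G'$ equals $n$, so the above-LP excess is exactly $k$. The place where half-integrality, persistence of integral LP variables, branching on $\tfrac12$-valued vertices, and a potential of the form $\mu = k - vc^{*}(G')$ that strictly drops in both branches are used is in the algorithm for \textsc{Vertex Cover Above LP}, giving $2.3146^{\mu}\cdot n^{\OO(1)} = 2.3146^{k}\cdot n^{\OO(1)}$. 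You have the right analytic ingredients in mind, but they are aimed at the wrong subproblem: they must replace the $3^{k+1}$ guess (via the reduction above), not sit underneath it.
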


Apart from \oct{}, we also need to define an auxiliary problem that we call \abjpfull\ (\abjp). As we proceed with our proofs, we shall continue defining auxiliary problems, where each problem captures a task more specific and technically more challenging than the previous one. The choice of this structure aims to ease the readability of our paper. 
 Intuitively, \abjp{} is basically the \jp{} problem on bipartite graphs, with an extra constraint that demands that certain vertices are assigned a particular color by the witnessing coloring. We remark that the necessity of the reduction to bipartite graphs stems from the fact that we would like to employ Proposition \ref{proposition:bipartite} later. The formal definition of \abjp{} is given below.

\defparproblem{\abjpfull{} (\abjp)}{A bipartite multi-graph $G$ with bipartition $(P,Q)$, $A,B \subseteq V(G)$ such that $A \cap B = \emptyset$, and integers $\mu,k_1$ and $k_2$.}{$k_1 + k_2$}{Does there exist a partition $(V_1,V_2)$ of $V(G)$ such that $A \subseteq V_1$, $B \subseteq V_2$, $|V_1| = \mu$ and for $i \in \{1,2\}$, $|E(G[V_i])| \leq k_i$?}

An instance of \abjp{} is denoted by $\abjp(G,A,B,\mu,k_1,k_2)$. A partition $(V_1,V_2)$ satisfying the above properties is called a \emph{witnessing coloring} of $\abjp(G,A,B,\mu,k_1,k_2)$. Furthermore, we need  the following theorem, proven later in this paper.

\begin{theorem}\label{thm:abjp}
\abjp{} can be solved in time $2^{k^{\OO(1)}}  \cdot n^{\OO(1)}$.
\end{theorem}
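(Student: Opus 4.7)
The plan is to prove Theorem~\ref{thm:abjp} in two stages: first reduce \abjp{} to its further-restricted variant \abcjp{} (the same annotated problem on \emph{connected} bipartite graphs), and then solve \abcjp{} by combining the tree-decomposition machinery of Cygan et al.~\cite{cygan2014minimum} with Proposition~\ref{proposition:bipartite}.

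For the first stage, Proposition~\ref{proposition:bipartite} tells us that every connected bipartite component $C$ of $G$ has exactly two bipartitions, so the restriction of any witnessing coloring to $C$ comes from at most two candidates (and is possibly forced once $C$ meets $A$ or $B$). I would invoke \abcjp{} as a black-box subroutine to determine, for each component $C$ and each triple $(n_C,e_{1,C},e_{2,C})$ with $e_{i,C}\le k_i$, whether $C$ admits an annotation-respecting coloring that places $n_C$ vertices in $V_1$ and produces exactly $e_{i,C}$ internal edges on side $i$. Each component contributes only $n\cdot k^{\OO(1)}$ relevant triples, and a standard knapsack-style dynamic program over the components then tests in polynomial time whether some global choice satisfies $|V_1|=\mu$, $|E(G[V_1])|\le k_1$, and $|E(G[V_2])|\le k_2$. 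This reduces the task to \abcjp{} at a multiplicative overhead of $n^{\OO(1)}k^{\OO(1)}$.

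To solve \abcjp{} itself, I would follow the blueprint of Cygan et al.~\cite{cygan2014minimum}. The driving observation is that any witnessing coloring removes at most $k$ edges from $G$ (the internal edges of both sides together), and therefore breaks the connected graph $G$ into at most $k+1$ connected fragments, each bipartite and thus admitting only two orientations by Proposition~\ref{proposition:bipartite}. Using their randomized-contractions framework, I would first compute a tree decomposition $(T,\beta)$ of $G$ in which every bag is $(q,k)$-unbreakable for some $q=k^{\OO(1)}$. On this decomposition I run a bottom-up dynamic program whose state at a node $t$ records: the coloring of the adhesion $\sigma(t)$, aggregate vertex and edge counts accrued within $\gamma(t)$, and a \emph{component signature} on $\beta(t)$ valued in $[k]_0$ that encodes, for each bag vertex, which of the at most $k+1$ fragments of $G\setminus\beta(t)$ it borders together with the orientation bit of that fragment.

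The principal obstacle is bounding the number of DP states at a bag. Naively, signatures are arbitrary functions $\beta(t)\to[k]_0$, of which there are far too many; the point is that the $(q,k)$-unbreakability of $\beta(t)$ together with the hard edge budget forces every signature that arises from a valid solution to itself be $(q,k)$-unbreakable, at which point Lemma~\ref{lemma:numOfUnbreakableFunctions} keeps the count in check. Merging child tables at an internal node then amounts to iterating over compatible pairs of signatures, summing vertex and edge counters, and simultaneously enforcing the $A,B$ annotations, the edge budgets $k_1,k_2$, the global vertex target $\mu$, and---via Proposition~\ref{proposition:bipartite}---a coherent bipartite orientation for each of the at most $k+1$ fragments. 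Making this merge step efficient and correct under all these simultaneous constraints is where the bulk of the technical work lies, and is precisely what motivates the very specialised bag-level problem whose definition is deferred to Section~\ref{sec:abcbjb}. Processing each node in $2^{k^{\OO(1)}}n^{\OO(1)}$ time then yields the overall bound claimed in Theorem~\ref{thm:abjp}.
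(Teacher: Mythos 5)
Your proposal is correct and follows essentially the same route as the paper: your first stage is exactly the paper's proof of this theorem (a knapsack-style dynamic program over the connected components that queries the tables returned by the algorithm for \abcjp{}, i.e.\ Theorem~\ref{thm:abcjp}), and your second stage mirrors how the paper establishes Theorem~\ref{thm:abcjp} itself, via the unbreakable tree decomposition of Cygan et al., adhesion colorings into $[k]_0$, orientation bits justified by Proposition~\ref{proposition:bipartite}, and Lemma~\ref{lemma:numOfUnbreakableFunctions} to bound the number of states. The one piece you leave as acknowledged technical work---the bag-level merge---is exactly what the paper isolates as the favorable instances of \hp{} solved by color coding in Theorem~\ref{thm:hp}, so your argument's structure coincides with the paper's.
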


Let us now turn to focus on the proof of Theorem \ref{thm:jp}. 

\begin{proof}[Proof of Theorem \ref{thm:jp}]
Given an instance $\jp(G,\mu,k_1,k_2)$, call the algorithm given by Proposition \ref{prop:oct} with the instance  $\oct(G,k)$ as input.

\begin{claim}\label{claim:jp:1}
If $\oct(G,k)$ is a \no{}-instance, then $\jp(G,\mu,k_1,k_2)$ is a \no{}-instance.
\end{claim}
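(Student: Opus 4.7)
The plan is to prove the contrapositive: if $\jp(G,\mu,k_1,k_2)$ is a \yes-instance, then $\oct(G,k)$ is a \yes-instance, i.e., $G$ has an odd cycle transversal of size at most $k = k_1 + k_2$.

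Fix a witnessing coloring $(V_1,V_2)$ of $\jp(G,\mu,k_1,k_2)$. By assumption, $|E(G[V_1])| \leq k_1$ and $|E(G[V_2])| \leq k_2$, so the set of ``monochromatic'' edges $F := E(G[V_1]) \cup E(G[V_2])$ has size at most $k_1 + k_2 = k$. Every edge of $G$ not in $F$ has one endpoint in $V_1$ and the other in $V_2$, so deleting the edges of $F$ already yields a bipartite graph with bipartition $(V_1,V_2)$. Since \oct{} requires the deletion of \emph{vertices}, I will convert $F$ into a vertex set $S$ by choosing, for each edge $e \in F$, an arbitrary endpoint of $e$ and placing it into $S$.

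Clearly $|S| \leq |F| \leq k$. To check that $S$ is an odd cycle transversal, consider the graph $G \setminus S$ with the induced vertex partition $(V_1 \setminus S,\, V_2 \setminus S)$. Any edge internal to $V_1 \setminus S$ would lie in $E(G[V_1]) \subseteq F$, but by construction at least one endpoint of every edge of $F$ has been removed; hence no such edge survives. The symmetric argument applies to $V_2 \setminus S$. Thus $G \setminus S$ is bipartite, certifying $S$ as an odd cycle transversal of $G$ of size at most $k$, so $\oct(G,k)$ is a \yes-instance.

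There is no genuine obstacle here: the argument is a one-shot transformation turning each monochromatic edge into a transversal vertex, and the bound $k_1+k_2=k$ is exactly what makes it go through. The only point worth noting for the record is that the correspondence is not tight in the reverse direction (an odd cycle transversal of size $\leq k$ does not directly yield a witnessing coloring), which is precisely why the algorithm of Theorem \ref{thm:jp} will need to do genuine work on top of invoking Proposition \ref{prop:oct}; the present claim only handles the easy half of the reduction.
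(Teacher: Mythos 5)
Your proof is correct and follows essentially the same argument as the paper: take the set of monochromatic edges $E(G[V_1]) \cup E(G[V_2])$ (of size at most $k$), cover them by a vertex set $S$ of size at most $k$, and observe $G \setminus S$ is bipartite. The only cosmetic difference is that the paper picks a minimum vertex cover of the monochromatic edges while you pick one arbitrary endpoint per edge; both yield the required bound.
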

{\noindent\bf Proof.}
Suppose $\jp(G,\mu,k_1,k_2)$ is a \yes{}-instance. Let $(V_1,V_2)$ be a witnessing coloring for this instance. Let $E' = E(G[V_1]) \cup E(G[V_2])$. Then, observe that $G \setminus E'$ is a bipartite graph. Let $V'$ be  a set of vertices of  minimum size such that every edge in $E'$ has at least one endpoint in $V'$. Since $|E'| \leq k$, it holds that $|V'| \leq k$. Moreover, $G \setminus V'$ is bipartite. Therefore, $V'$ is an odd cycle transversal of $G$ of size at most $k$. Thus, $\oct(G,k)$ is a \yes{}-instance.
$\diamond$
\medskip

\begin{figure}[t]
\begin{center}
\includegraphics[scale=0.3]{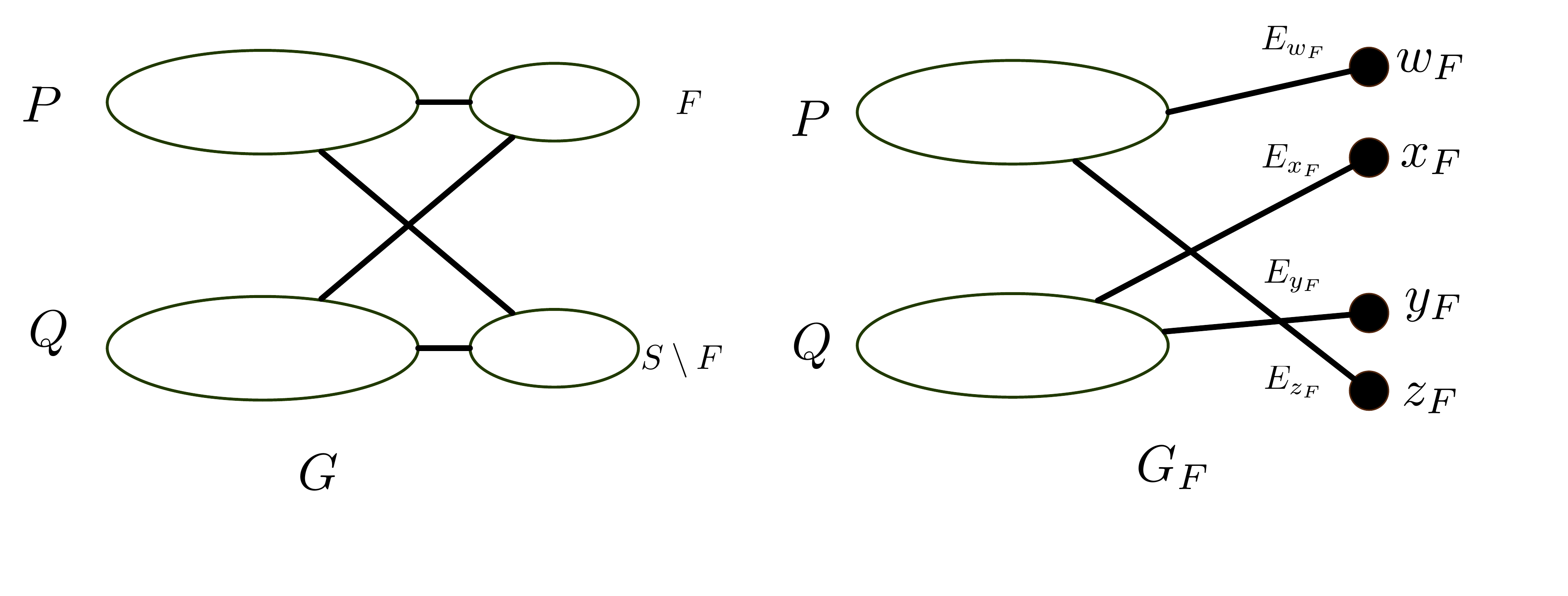}
\end{center}
\vspace{-1cm}
\caption{\label{fig:jp} The construction in the proof of Theorem \ref{thm:jp}.}
\end{figure}

Henceforth, let $S$ be an odd cycle transversal of $G$ of size at most $k$. Then, $G\setminus S$ is a bipartite graph. Fix some bipartition $(P,Q)$ of $G \setminus S$. Let $\mathcal{F}$ be the family of all subsets of $S$, that is, $\mathcal{F} = 2^S$. For any $F \in \mathcal{F}$, denote $l_1^F = |E(G[F])|$ and $l_2^F = |E(G[S \setminus F])|$, and let $G_F$ be the graph constructed as follows (see Fig.~\ref{fig:jp}).
\begin{itemize} 
\item $V(G_F) = V(G \setminus S) \cup \{w_F,x_F,y_F,z_F\}$, where $w_F,x_F,y_F,z_F$ are new distinct vertices.
\item $E(G_F) = E(G \setminus S) \cup E_{w_F} \cup E_{x_F} \cup E_{y_F} \cup E_{z_F}$, where the {\em multisets} $E_{w_F}, E_{x_F}, E_{y_F}$ and $E_{z_F}$ are defined as follows.
\begin{itemize}
\item $E_{w_F} = \{e_u=(w_F, u) :  u \in P, \text{ and there exists } v \in F \text{ such that } (u,v) \in E(G)\},$
\item $E_{x_F} = \{e_u(x_F, u) :  u \in Q, \text{ and there exists } v \in F \text{ such that } (u,v) \in E(G)\},$
\item $E_{y_F} = \{e_u=(y_F, u) :  u \in Q, \text{ and there exists } v \in S \setminus F \text{ such that } (u,v) \in E(G)\},$
\item $E_{z_F} = \{e_u=(z_F, u) :  u \in P, \text{ and there exists } v \in S \setminus F \text{ such that } (u,v) \in E(G)\}.$
\end{itemize}
\end{itemize}

Observe that $G_F$ is a bipartite graph with $(P \cup \{x_F,y_F\}, Q \cup \{w_F,z_F\})$ as a bipartition. 

\begin{claim}\label{claim:jp:equivalent}
$\jp(G,\mu,k_1,k_2)$ is a \yes{}-instance if and only if there exists $F \in \mathcal{F}$ such that $\abjp(G_F, \allowbreak  \{w_F,x_F\}, \allowbreak \{y_F,z_F\}, \mu - |F| + 2, k_1-l_1^F, k_2-l_2^F)$ is a \yes{}-instance.
\end{claim}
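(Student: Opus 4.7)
The plan is to establish the equivalence by exhibiting an explicit correspondence between witnessing colorings of the two instances. In the forward direction, given a witnessing coloring $(V_1,V_2)$ of $\jp(G,\mu,k_1,k_2)$, I would set $F = V_1 \cap S$ (so $S\setminus F = V_2\cap S$), and define $V'_1 = (V_1\setminus S)\cup\{w_F,x_F\}$ and $V'_2 = (V_2\setminus S)\cup\{y_F,z_F\}$. In the reverse direction, given $F\in\mathcal{F}$ and a witnessing coloring $(V'_1,V'_2)$ of the corresponding \abjp{} instance, I would set $V_1 = (V'_1\setminus\{w_F,x_F,y_F,z_F\})\cup F$ and $V_2 = (V'_2\setminus\{w_F,x_F,y_F,z_F\})\cup(S\setminus F)$. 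The containments $A\subseteq V'_1$, $B\subseteq V'_2$ hold by construction, and the size equalities are immediate: in the forward direction $|V'_1| = |V_1|-|F|+2 = \mu-|F|+2$, and in the reverse direction $|V_1| = (\mu-|F|+2)-2+|F| = \mu$.

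The heart of the proof is checking the edge-count constraints, which reduces to a careful accounting of how edges of $G$ are redistributed in $G_F$. I would partition $E(G[V_1])$ into three disjoint pieces: (i) edges with both endpoints in $(P\cup Q)\cap V_1$, (ii) edges with one endpoint in $(P\cup Q)\cap V_1$ and the other in $F$, and (iii) edges with both endpoints in $F$. Piece (iii) contributes exactly $l_1^F$. Piece (i) corresponds bijectively to edges of $G\setminus S$ contained in $V'_1\cap(P\cup Q)$. For piece (ii), note that by construction of the multisets, every edge of $G$ between a vertex $u\in P\cap V_1$ and a vertex $v\in F$ contributes exactly one parallel edge to $E_{w_F}$ between $u$ and $w_F$, and every edge between $u\in Q\cap V_1$ and $v\in F$ contributes exactly one parallel edge to $E_{x_F}$ between $u$ and $x_F$. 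Since $w_F$ and $x_F$ lie on opposite sides of the bipartition of $G_F$, they are non-adjacent in $G_F$, so the edges of $G_F[V'_1]$ incident to $\{w_F,x_F\}$ are exactly these contributions. Combining, $|E(G[V_1])| = |E(G_F[V'_1])| + l_1^F$, and the bound $k_1$ translates precisely into $k_1-l_1^F$. A symmetric computation handles $V_2$ using $y_F,z_F$ and $l_2^F$, where $y_F$ captures edges from $Q$ to $S\setminus F$ and $z_F$ captures edges from $P$ to $S\setminus F$.

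To close the argument, I would verify the ``moreover'' part of the correspondence: that $F$ chosen in the forward direction actually lies in $\mathcal{F}=2^S$ (which is trivial), and that no edge of $G$ is miscounted or double-counted. Every edge of $G$ lies in exactly one of the following buckets: both endpoints in $P\cup Q$; one endpoint in $P\cup Q$ and one in $F$; one endpoint in $P\cup Q$ and one in $S\setminus F$; both endpoints in $F$; both endpoints in $S\setminus F$; one endpoint in $F$ and one in $S\setminus F$. The last bucket has no counterpart in $G_F$, but these edges always lie across the partition $(V_1,V_2)$ (since $F\subseteq V_1$ and $S\setminus F\subseteq V_2$), so they contribute to neither $|E(G[V_1])|$ nor $|E(G[V_2])|$ and can be safely ignored. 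All other buckets are in bijection with the corresponding edges in $G_F$, grouped into the four new vertices according to which side of $(P,Q)$ the endpoint lies and whether the $S$-endpoint lies in $F$ or $S\setminus F$.

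The main obstacle, and the only place where care is required, is ensuring that this bijective accounting respects multiplicities in the multigraph setting and that the four new vertices $w_F,x_F,y_F,z_F$ have pairwise disjoint roles so that no edge of $G$ incident to $S$ is represented by edges at two of them simultaneously. This is essentially guaranteed by the construction, since each of the four multisets is defined by a distinct combination of (a side of the bipartition of $G\setminus S$) $\times$ (a side of the partition $S = F\cup(S\setminus F)$), and these four combinations exhaust exactly the edges of $G$ between $P\cup Q$ and $S$. Once this bookkeeping is set up, both directions of the equivalence follow immediately from the size and edge counts established above.
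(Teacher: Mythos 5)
Your proof is correct and takes essentially the same approach as the paper: set $F=V_1\cap S$ in the forward direction, transfer the remaining vertices, and verify the vertex count and edge counts via the natural bijection between $G$-edges and $G_F$-edges. One small slip in the justification: you argue that $w_F$ and $x_F$ are non-adjacent ``since they lie on opposite sides of the bipartition of $G_F$,'' but in a bipartite graph it is \emph{same-side} vertices that are guaranteed non-adjacent, not opposite-side ones; the vertices $w_F$ and $x_F$ do lie on opposite sides (one in $P\cup\{x_F,y_F\}$, the other in $Q\cup\{w_F,z_F\}$), and they are indeed non-adjacent, but only because the construction never inserts an edge between any two of the four new vertices, not as a consequence of bipartiteness.
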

{\noindent\bf Proof.}
In the forward direction, suppose that $\jp(G,\mu, k_1,k_2)$ is a \yes{}-instance, and let $(V_1,V_2)$ be a witnessing coloring for $\jp(G,\mu,k_1,k_2)$. Moreover, let $F = V_1 \cap S $.
Now, we define a partition $(V'_1,V'_2)$ of $V(G_F)$ as follows: $V_1'=(V_1\setminus S)\cup\{x_F,y_F\}$ and $V_2'=(V_2\setminus S)\cup\{w_F,z_F\}$. Let us now argue that $(V_1',V_2')$ is a witnessing coloring for $\abjp(G_F, \{w_F,x_F\}, \{y_F,z_F\}, \mu - |F|+2,k_1-l_1^F, k_2-l_2^F)$. First, by the construction of $(V_1',V_2')$, we have that $\{x_F,y_F\} \subseteq V_1'$ and $\{w_F,z_F\} \subseteq V_2'$. Second, as $V_1'=(V_1\setminus S)\cup\{x_F,y_F\}$, we also have that $|V_1'| = |V_1| - |F| + 2 = \mu + |F|+2$. Third, observe that for any $i \in \{1,2\}$, $|E(G[V'_i])| = |E(G[V_i])| - |E(G[F])|$. Thus, $|E(G[V_i])| \leq k_i - l_{i}^{F}$. 

In the backward direction, suppose that there exists an $F \in \mathcal{F}$ such that $\abjp(G_F, \{w_F,$ $x_F\}, \{y_F,z_F\}, \allowbreak \mu - |F|+2,  k_1-l_1^F, k_2-l_2^F)$ is a \yes{}-instance, and let $(V'_1,V'_2)$ be a witnessing coloring for this instance. We now define a partition $(V_1,V_2)$ of $V(G)$ as follows:  $V_1=(V'_1\cap V(G))\cup F$ and $V_2=(V'_2\cap V(G))\cup (S\setminus F)$. Let us now argue that $(V_1,V_2)$ is a witnessing coloring for $\jp(G,\mu,k_1,k_2)$. From the definition of $V_1$, and since $V(G) = (V(G_F) \setminus \{w_F, x_F, y_F, z_F\}) \cup F$ and $F\cap V(G_F)=\emptyset$,  we have that $|V_1| = |V_1'| - |\{x_F, y_F\}| + |F| = \mu - |F| +2 -2 +|F| = \mu$.  Moreover, observe that $|E(G[V_1])| = |E(G[V_1'])| + |E(G[F])| \leq k_1 + l_1^F$ and $|E(G[V_2])| = |E(G[V_2'])| + |E(G[S \setminus F])| \leq k_2 + l_2^F$. This concludes the proof of the claim. $\diamond$

\medskip

Thus, to solve an instance of \jp, it is enough to solve $2^{|S|}\leq 2^k$ instances of \abjp. Hence, by Theorem \ref{thm:abjp}, \jp{} can be solved in time $2^{k^{\OO(1)}} n^{\OO(1)}$.
\end{proof}

\section{Solving \abjpfull}\label{sec:abjp}
Recall the problem definition of \abjpfull(\abjp){} from Section \ref{sec:jp}.
In this section, we prove Theorem \ref{thm:abjp}. For this purpose,  let us define another auxiliary problem, which we call \abcjpfull(\abcjp){}. Intuitively, \abcjp{} is exactly the same problem as \abjp{} where we are interested in an answer for every choice of $ \mu \in [n]_0$, $l_1 \in [k_1]_0$ and $l_2 \in [k_2]_0$, and additionally we demand the input graph to be connected.

\defparoutproblem{\abcjpfull (\abcjp)}{A connected bipartite multi-graph $G= (P,Q)$, $A,B \subseteq V(G)$ such that $A \cap B = \emptyset$, and integers $k_1$ and $k_2$.}{$k_1 + k_2$}{For all $\mu\in [n]_0$, $l_1 \in [k_1]_0$ and $l_2 \in [k_2]_0$, output a binary value, {\ttfamily aJP}$[\mu,l_1,l_2]$, which is $1$ if and only if there exists a partition $(V_1,V_2)$ of $V(G)$ such that
\begin{itemize}
 \item $A \subseteq V_1$ and $B \subseteq V_2$,
 \item $|V_1| = \mu$, and
 \item for $i \in \{1,2\}$, $|E(G[V_i])| \leq l_i$.
\end{itemize}
}

For any $\mu \in [n]_0$, $l_1 \in [k_1]_0$, $l_2 \in [k_2]_0$, a partition witnessing that  {\ttfamily aJP}$[\mu, l_1,l_2] =1$ is called a {\em witnessing coloring} for {\ttfamily aJP}$[\mu, l_1,l_2] =1$. Moreover, an instance of \abcjp{} is denoted by $\abcjp(G,A,B,k_1,k_2)$. In the rest of this paper, we prove the following~theorem.

\begin{theorem}\label{thm:abcjp}
\abcjp{} can be solved in time $2^{k^{\OO(1)}}  \cdot n^{\OO(1)}$.
\end{theorem}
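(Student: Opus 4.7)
The plan is to apply the \emph{recursive-understanding} tree-decomposition machinery of Cygan et al.~\cite{cygan2014minimum}, taking advantage of the fact that $G$ is both bipartite and connected. The starting observation is that any witnessing coloring $(V_1, V_2)$ admits an \emph{edge-deletion} interpretation: setting $E^{*} = E(G[V_1]) \cup E(G[V_2])$, we have $|E^{*}| \le l_1 + l_2 \le k$, and $G - E^{*}$ is a bipartite graph whose bipartition is precisely $(V_1, V_2)$. Since $G$ is connected, $G - E^{*}$ has at most $k+1$ connected components. By Proposition~\ref{proposition:bipartite}, each of those components admits exactly two ordered bipartitions, so $(V_1, V_2)$ is entirely determined by (i) a coloring $c : V(G) \to [k]_0$ that labels each vertex by its component of $G - E^{*}$, and (ii) an \emph{orientation} bit $\mathbf{b}[i] \in \{0,1\}$ for each color $i$ selecting which of the two bipartition classes of component $i$ is placed into $V_1$. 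The annotations $A \subseteq V_1$, $B \subseteq V_2$ and the fixed bipartition $(P, Q)$ translate directly into local constraints on the pair $(c, \mathbf{b})$.

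Next I would invoke~\cite{cygan2014minimum} to compute, in time $2^{k^{\OO(1)}} n^{\OO(1)}$, a tree decomposition $(T, \beta)$ of $G$ in which every bag $\beta(t)$ is $(q, k{+}1)$-unbreakable in $G$ for some $q = k^{\OO(1)}$ and every adhesion $\sigma(t)$ has size at most $k+1$. The crucial property to exploit is that, for any candidate coloring $c$ arising as in the previous paragraph, the restriction $c|_{\beta(t)}$ is a $(q, k{+}1)$-unbreakable function in the sense of Section~\ref{sec:prelims}: the level sets of $c$ are exactly the components of $G - E^{*}$, so separating a non-dominant color class off $\beta(t)$ by a cut of order at most $k{+}1$ must consume an edge of $E^{*}$, and since $|E^{*}| \le k$, at most $q$ vertices of $\beta(t)$ can lie outside a single dominant color. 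By Lemma~\ref{lemma:numOfUnbreakableFunctions}, only $2^{k^{\OO(1)}} n^{\OO(1)}$ such restrictions need to be enumerated at each bag, which is what keeps the state space tractable.

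With these ingredients in place I would design a bottom-up dynamic program on $(T, \beta)$. At each node $t$ the table is indexed by (a) a $(q, k{+}1)$-unbreakable coloring of $\sigma(t)$ into $[k]_0$, (b) an orientation bit for each color appearing on $\sigma(t)$, and (c) a triple $(\mu_t, l_1^t, l_2^t)$ tracking, respectively, $|V_1 \cap \gamma(t)|$ and the number of edges of $G[\gamma(t)]$ lying on each side of the partition induced so far. A table entry is $1$ iff the partial specification extends to a full coloring of $\gamma(t)$ that respects the annotations and the bipartition $(P, Q)$ when restricted to $\gamma(t)$. Leaves are handled directly; at an internal node one joins the tables of the children by matching coloring labels and orientations on the shared adhesion and adding the count coordinates, while also charging the edges within $\beta(t)$ exactly once. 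Reading off {\ttfamily aJP}$[\mu, l_1, l_2]$ at the root is then immediate. To phrase this bag-by-bag computation cleanly one has to introduce one more specialized auxiliary problem, as promised in the introduction and treated formally in Section~\ref{sec:abcbjb}.

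The main obstacle will be the combination step at internal bags. Two disjoint child subtrees may contain vertices belonging to the same global component of $G - E^{*}$, and the local color labels used inside the two children are \emph{a priori} unrelated; nevertheless the orientation bit of such a shared class must be chosen consistently across all children, and such classes can weave through $\beta(t)$ in intricate ways. Correctly glueing the local colorings while (a) preserving $(q, k{+}1)$-unbreakability of the tracked coloring at the parent's adhesion, (b) accounting for every edge of $G[\beta(t)]$ exactly once in the running counts $(\mu_t, l_1^t, l_2^t)$, and (c) keeping the running time at $2^{k^{\OO(1)}} n^{\OO(1)}$, is where the bulk of the technical work of Section~\ref{sec:abcbjb} will necessarily lie.
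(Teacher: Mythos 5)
Your proposal follows the same skeleton as the paper: the edge-deletion viewpoint (at most $k$ deleted edges give at most $k{+}1$ bipartite components), the pair of a $[k]_0$-coloring plus orientation bits, the unbreakable tree decomposition of Cygan et al., and a bottom-up DP whose state tracks an unbreakable coloring of $\sigma(t)$ together with $(\mu, l_1, l_2)$ --- precisely the paper's $y[t, \Upsilon^\sigma, \mathbf{v}, \mu, l_1, l_2]$ of Definition~\ref{def:meaningY}. Two imprecisions worth flagging. First, the bags are $(\eta,k)$-unbreakable with $\eta = 2^{\OO(k)}$, \emph{not} $(q,k{+}1)$-unbreakable for $q = k^{\OO(1)}$; the $k^{\OO(1)}$-sized parameter you need (namely $3k^2$) applies to the colorings stored on \emph{adhesions}, and it follows from $\sigma(t)$ being $(2k,k)$-unbreakable in $G[\gamma(\mathrm{parent}(t))]$, not from any unbreakability of bags. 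Second, you propose storing an orientation bit per color appearing on $\sigma(t)$ and then worry about gluing those bits consistently across children; the paper instead brute-forces a single vector $\mathbf{v}\in\{0,1\}^{k+1}$ at the root and threads the \emph{same} $\mathbf{v}$ unchanged down every recursive call (Lemma~\ref{lemma:1}), so there is nothing to reconcile.

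The genuine gap is the bag-combination step, which you flag but do not resolve --- and it is where almost all the work lies. The paper packages the per-bag task as an instance of an auxiliary problem, \hpfull, whose hyperedges come in three flavors: type~1 (one per vertex of $\beta(t)$, enforcing the $A/B$ annotation, $\Upsilon^\sigma$-consistency and the size count), type~2 (one per edge of $G[\beta(t)]$, charging that edge to $l_1$ or $l_2$ when its endpoints land on the same side), and type~3 (one per child $\widehat{t}$, on vertex set $\sigma(\widehat{t})$, with $f_F$ defined through the already-computed $y$-table for $\widehat{t}$). Lemmas~\ref{claim:semiequivalent}, \ref{claim:favourable} and \ref{claim:equivalent} establish that these instances compute exactly the desired $y$-values and are \emph{favorable} (locally unbreakable, connected, globally unbreakable). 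Solving favorable \hp{} instances in time $2^{\OO(\min(k,q)\log(k+q))} d^{\OO(k^2)} m^{\OO(1)}$ is then a separate theorem (Theorem~\ref{thm:hp}), proved in Section~\ref{sec:hp} via several rounds of color coding, the construction of an auxiliary graph $L_p$ and its supergraph $L^*_p$, and a delicate analysis of how a hypothetical witnessing coloring interacts with their components. Your sketch identifies the correct skeleton, but the theorem's actual proof rests on this \hp{} machinery, all of which your proposal defers.
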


Having Theorem \ref{thm:abcjp} at hand, a simple application of the method of dynamic programming results in the proof of Theorem \ref{thm:abjp}.

\noindent{\bf Proof of Theorem \ref{thm:abjp}.}\label{app:abjp}
Let $\abjp(G, A,B, \mu, k_1,k_2)$ be an instance of \abjp. 
Let $C_1, \ldots, C_r$ be the connected components of $G$. For all $i \in [r]$, let $A_i = A \cap C_i$ and $B_i = B \cap C_i$. Let $I_i = \abcjp(C_i, A_i,B_i,k_1,k_2)$. Let {\ttfamily aJP}$_i$ be the output table for the instance $I_i$, returned by the algorithm of Theorem \ref{thm:abcjp}.
For any $j \in [r]$, let $G_j = G[\bigcup\limits_{i \in [j]} C_i]$. Note that $G = G_r$.
Let us define a 4-dimensional binary table {\ttfamily M} in the following way. For all $i \in [r]$, $\mu' \in [|V(G)|]_0$, $l_1 \in [k_1]_0$ and $l_2 \in [k_2]_0$, $\text{{\ttfamily M}}[i,\mu',l_1,l_2] = 1$ if and only if there exists a partition $(V_1,V_2)$ of $V(G_i)$ such that $(A \cap G_i) \subseteq V_1$, $(B \cap G_i) \subseteq V_2$, $|V_1| = \mu'$ and for $j \in \{1,2\}$, $|E(G[V_j])| \leq l_j$.
Observe that $\abjp(G,A,B,\mu, k_1,k_2)$ is a \yes{}-instance if and only if $\text{{\ttfamily M}}[r,\mu,k_1,k_2] =1$.
We now compute $\text{{\ttfamily M}}[r,\mu,k_1,k_2]$ recursively using the following recurrences.

$$\text{{\ttfamily M}}[1,\mu',l_1,l_2]=\text{{\ttfamily aJP}}_1(\mu',l_1,l_2)$$
For all $i \in \{2, \ldots,r\}$, $\mu' \in [|V(G)|]_0$, $l_1 \in [k_1]_0$ and $l_2 \in [k_2]_0$,
$$\text{{\ttfamily M}}[i,\mu',l_1,l_2] = \bigvee_{\substack{{\mu}' = {\mu}^1 + {\mu}^2 \\ l_1 = l_{1}^1 + l_{1}^2 \\ l_2 = l_{2}^1 + l_{2}^2}} (\text{{\ttfamily M}}[i-1, {\mu}^1, l_{1}^1,l_{2}^1] \wedge \text{{\ttfamily aJP}}_i[{\mu}^2, l_{1}^2,l_{2}^2]),$$
where for all $j \in \{1,2\}$, ${\mu}^j$, $l_{1}^j$ and $l_{2}^j$ are non-negative integers.

Note that the time taken to compute $\text{{\ttfamily M}}[r,\mu,k_1,k_2]$ is at most $(r \cdot n^2 \cdot k_1^2 \cdot k_2^2 \cdot \tau)$, where $\tau$ is the time taken to solve an instance of \abcjp. Since from Theorem \ref{thm:abcjp}, an instance of \abcjp{} can be solved in time $2^{k^{\OO(1)}} \cdot n^{\OO(1)}$ and $r \leq n$, \abjp{} can be solved in time $2^{k^{\OO(1)}} \cdot n^{\OO(1)}$.  \qed
\section{Solving \abcjpfull}\label{sec:abcbjb}
Recall the problem definition of \abcjp{} from Section \ref{sec:abcbjb}. In this section, we prove Theorem \ref{thm:abcjp}. Let us start by stating a known result that is a crucial component of our proof. By this result, we would have an algorithm that efficiently computes a special type of tree decomposition, that we call a {\em highly connected tree decomposition}, where every bag is ``highly-connected'' rather than ``small'' as in the case of standard tree decompositions. While this property is the main feature of this decomposition, it is also equipped with other beneficial properties, such as a (non-trivial) upper bound on the size of its adhesions, which are {\em all} exploited by our algorithm.

\begin{theorem}[\cite{cygan2014minimum}]\label{theorem:std}
There exists an $2^{\OO(k^2)} n^2 m$-time algorithm that, given a {\em connected} graph $G$ together with an integer $k$, computes a tree decomposition $(T, \beta)$ of $G$ with at most $n$ nodes such that the following conditions hold, where $\eta = 2^{\OO(k)}$.

\begin{enumerate}
    \item\label{item:Unbreak1} For each $t \in V(T)$, the graph $G[\gamma(t) \setminus \sigma(t)]$ is connected and $N(\gamma(t) \setminus \sigma(t)) = \sigma(t)$.
    
    \item\label{item:Unbreak2} For each $t \in V(T)$, the set $\beta(t)$ is $(\eta, k)$-unbreakable in $G[\gamma(t)]$.
    
    \item\label{item:Unbreak3} For each non-root $t \in V(T)$, we have that $|\sigma(t)| \leq \eta$ and $\sigma(t)$ is $(2k,k)$-unbreakable in $G[\gamma(parent(t))]$.
    
\end{enumerate}
\end{theorem}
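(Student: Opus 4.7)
The plan is to construct $(T,\beta)$ top-down by a recursive partitioning procedure. At a node $t$ with responsibility set $\gamma(t)$ and adhesion $\sigma(t)$ inherited from its parent, I would first test whether $G[\gamma(t) \setminus \sigma(t)]$ admits a separation of order at most $k$ in which both sides of the separation contain more than $\eta = 2^{\OO(k)}$ vertices of $\gamma(t)$. If no such separation exists, then declaring $\beta(t) = \gamma(t)$ (suitably enlarged by a closure, see below) immediately yields Condition \ref{item:Unbreak2}, and $t$ becomes a leaf. Otherwise I commit to such a separation $(X,Y)$, create two children with $\gamma(t_1) = X$ and $\gamma(t_2) = Y$ and with adhesions derived from $X \cap Y$, and recurse. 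Condition \ref{item:Unbreak1} is maintained syntactically by always pruning isolated pieces into the correct subtree.

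The central algorithmic ingredient is the efficient detection of a "balanced" separator of order at most $k$. Using the classical machinery of \emph{important separators} together with iterated max-flow, one can show that between any two disjoint vertex sets there are only $2^{\OO(k)}$ inclusion-minimal $(\leq k)$-separators and that they can all be enumerated in time $2^{\OO(k)} n^{\OO(1)}$. By guessing $\OO(n^2)$ candidate "witness" pairs on opposite sides of the desired cut and running this enumeration for each pair, I can in time $2^{\OO(k^2)} nm$ per node either produce a balanced separation or certify that none exists. Over $\OO(n)$ nodes, this yields the stated $2^{\OO(k^2)} n^2 m$ running time.

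The genuinely subtle point, and what forces the $2^{\OO(k)}$ adhesion size rather than $k$, is Condition \ref{item:Unbreak3}: a raw minimum $k$-separator $S$ is typically \emph{not} $(2k,k)$-unbreakable inside the parent's subgraph, since a small internal cut may split $S$ into two pieces of size larger than $2k$. To repair this I would apply the \emph{closure} trick of Cygan et al.: starting from $S$, iteratively search inside $G[\gamma(\text{parent}(t))]$ for a separation of order at most $k$ that witnesses breakability of the current set; whenever one is found, absorb into $S$ both minimum-cut "sides" of that witness (restricted to a canonical min-cut certificate obtained via submodular uncrossing) and repeat. A potential argument shows this loop terminates after $2^{\OO(k)}$ absorptions and that the final set has size at most $\eta$. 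The same closure operation applied inside $\gamma(t)$ promotes $\beta(t)$ to an $(\eta,k)$-unbreakable set, giving Condition~\ref{item:Unbreak2}.

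The main obstacle is the simultaneous satisfaction of the three conditions: adhesions must be small and unbreakable, bags must be unbreakable, and the decomposition tree must have only $\OO(n)$ nodes. These pull in opposite directions, because each closure step enlarges $\sigma(t)$ and could in principle cascade into the parent, destroying its unbreakability. The amortized analysis of Cygan et al. is what makes everything balance: each closure iteration is charged against a monotone decrease in the number of vertices on the "smaller" side of the currently witnessed cut, and submodular uncrossing guarantees that accumulated adhesions never exceed $2^{\OO(k)}$. I expect that reproducing this charging scheme — in particular, showing that the potential decreases even across recursive levels — will be the technically heaviest part of the argument, and the source of the $k^2$ in the exponent of the running time.
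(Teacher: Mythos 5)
You should first note that the paper contains no proof of this statement at all: it is imported verbatim as a black box from Cygan et al.~\cite{cygan2014minimum}, so the only meaningful comparison is with that work's construction. Your top-down recursion along small ``balanced'' separations is the right general shape, but one of your load-bearing claims is simply false: between two disjoint vertex sets there are in general $n^{\Omega(k)}$ inclusion-minimal separators of order at most $k$ (take $k$ long internally disjoint paths between two witness vertices), so they cannot be enumerated in time $2^{\mathcal{O}(k)}n^{\mathcal{O}(1)}$. The $2^{\mathcal{O}(k)}$ bound holds only for \emph{important} separators, and those are extremal towards one side, so enumerating them for guessed witness pairs does not certify that \emph{both} sides of some order-$\leq k$ separation contain many vertices. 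In the actual proof, the detection step finds a ``$(q,k)$-good node separation'' or a ``$(q,k)$-flower separation'' by derandomized random sampling (splitters/universal sets, the same kind of tools this paper reuses in Section~\ref{sec:hp}), in time $2^{\mathcal{O}(\min(q,k)\log(q+k))}n^{\mathcal{O}(1)}$; the stated running time and the value of $\eta$ come out of those lemmas, not out of an amortized ``closure'' analysis.

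The more fundamental gap is that your recursion has no answer for the regime where no balanced separation exists and yet the algorithm cannot simply close the node: deleting an order-$\leq k$ separator may leave \emph{many small} components whose neighborhoods overlap, and these must be grouped so that Condition~\ref{item:Unbreak1} ($G[\gamma(t)\setminus\sigma(t)]$ connected and $N(\gamma(t)\setminus\sigma(t))=\sigma(t)$), the adhesion bound $|\sigma(t)|\leq\eta$, the $(2k,k)$-unbreakability of adhesions, and the bound of at most $n$ nodes hold simultaneously. This is exactly what the flower separations of Cygan et al.\ are for, and it is entirely absent from your sketch (``pruning isolated pieces'' does not address it). Relatedly, your iterative closure/absorption repair of $\sigma(t)$ is not the mechanism used there, and as described it comes with no termination or size argument, no reason it preserves the tree-decomposition axioms, and no reason it does not cascade into the parent and destroy its unbreakability or the node bound; note also that a raw separator of order $\leq k$ is trivially $(2k,k)$-unbreakable, so the real difficulty with Condition~\ref{item:Unbreak3} is the accumulation of adhesions across levels, which the genuine construction controls through the choice of the separations used to split rather than by post-hoc uncrossing. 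Finally, your unbreakability test is run in $G[\gamma(t)\setminus\sigma(t)]$, whereas Condition~\ref{item:Unbreak2} requires unbreakability of $\beta(t)$ in $G[\gamma(t)]$; this mismatch also needs to be repaired.
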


In order to process such a tree decomposition in a bottom-up fashion, relying on the method of dynamic programming, we need to address a specific problem associated with every bag, called \hpfull{} (\hp). We chose the name \hp{} to be consistent with the choice of problem name in \cite{cygan2014minimum}, yet we stress that our problem is more general than the one in \cite{cygan2014minimum} (since the handling of a bag in our case is more intricate than the one in \cite{cygan2014minimum}).

Roughly speaking, an input of \hp{} would consist of the following components. First, we are given ``budget'' parameters $k_1$ and $k_2$ as in an instance of \abcjp. Second, we are given an argument $b$ which would simply be $n$ (to upper bound  $|\gamma(t)|$) when we construct an instance of \hp\ while processing some node $t$ in the tree decomposition. Third, we are given a hypergraph $H$ which would essentially be the graph $G[\beta(t)]$ to which we add hyperedges that are supposed to represent the sets $\sigma(\widehat{t})$ for the children $\widehat{t}$ of $t$. Fourth, we are given an integer $q$ whose purpose is clarified in the discussion below the definition of \hp\ (in Definition \ref{def:global}). Finally, for every hyperedge $F$, we are given a function $f_F : [k]_0^F \times [b]_0 \times [k_1]_0 \times [k_2]_0 \to \{0,1\}$. To roughly understand the meaning of this function, first recall that $F$ is supposed to represent $\sigma(\widehat{t})$ for some child $\widehat{t}$ of $t$. Now, the function $f_F$ aims to capture all information obtained while we processed the child $\widehat{t}$ of $t$ that might be relevant to the node $t$. In particular, let us give an informal, intuitive interpretation of an element $(\Gamma, \mu, l_1,l_2)$ in the domain of $f_F$. For this purpose, note that when we remove at most $k$ edges from the (connected) graph $G[\gamma(\widehat{t})]$, we obtain at most $k+1$ connected components. The function $\Gamma$ can be thought of as a method to assign to each vertex in $\sigma(\widehat{t})$ the connected component in which it should lie. Such information is extremely useful since each such connected component is in particular a bipartite graph, and hence by relying on Proposition \ref{proposition:bipartite} and an exhaustive search, we would be able to use it to extract a witnessing coloring for an instance of \abcjp. The arguments $\mu, l_1$ and $l_2$ can be thought of as those in the definition of an output of $\abcjp$. Now, the value $f_F(\Gamma, \mu, l_1,l_2)$ aims to indicate whether $\Gamma, \mu, l_1$ and $l_2$ are ``realizable'' in the context of the child $\widehat{t}$ (the precise meaning of this value would become clearer later, once we establish additional necessary definitions.)

Let us now give the formal definition of \hp. In this definition, we denote $k=k_1+k_2$.

\defproblem{\hpfull{} (\hp)}{Integers $k_1$, $k_2$, $b$, $d$ and $q$, a multi-hypergraph $H$ with hyperedges of size at most $d$, and for all $F \in E(H)$, a function $f_F : [k]_0^F \times [b]_0 \times [k_1]_0 \times [k_2]_0 \to \{0,1\}$.}{For all $0 \leq \mu \leq b$, $0 \leq l_1 \leq k_1$, $0 \leq l_2 \leq k_2$, output the binary value

$$\text{{\ttfamily aHP}}[\mu,l_1,l_2] = \bigvee_{\Upsilon : V(H) \to [k]_0}\bigvee_{\substack{\{\mu^{F}\}|_{F \in E(H)} \\ \{l_{1}^{F}\}|_{F \in E(H)} \\ \{l_{2}^{F}\}|_{F \in E(H)}}} \bigwedge\limits_{F \in E(H)} f_F(\Upsilon|_{F}, {\mu}^{F}, l_{1}^{F}, l_{2}^{F}),$$

where $\mu = \sum\limits_{F \in E(H)} {\mu}^{F}$, $\sum\limits_{F \in E(H)} l_{1}^{F} \leq l_1$, $\sum\limits_{F \in E(H)} l_{2}^{F} \leq l_2$ and each of ${\mu}^{F}$, $l_{1}^{F}$ and  $l_{2}^{F}$ is a non-negative integer.}

For a particular choice of $\mu$, $l_1$ and $l_2$, a function $\Upsilon$ witnessing that $\text{{\ttfamily aHP}}[\mu,l_1,l_2] =1$ is called a witnessing coloring for $\text{{\ttfamily aHP}}[\mu,l_1,l_2]$. An instance of \hpfull{} is denoted by $\hp(k_1,k_2,b,d,q,H,\{f_F\}|_{F \in E(H)})$.

Although we are not able to tackle \hp{} efficiently at its full generality, we are still able to solve those instances that are constructed when we would like to ``handle'' a single bag in a highly connected tree decomposition. For the sake of clarity, let us now address the beneficial properties that these instances satisfy individually,  where each of them ultimately aims to ease our search for a witnessing coloring. The first property, called local unbreakability,  unconditionally restricts the way a function $\Gamma : F \to [k]_0$, to be thought of as a restriction of the witnessing coloring we seek, can color a hyperedge $F$ so that the value of $f_F$ is 1.\footnote{In this context, it may be insightful to recall Lemma \ref{lemma:numOfUnbreakableFunctions}.}

\begin{definition}
[Local Unbreakability] An instance $\hp(k_1,k_2,b,d,q,H,\{f_F\}|_{F \in E(H)})$ is {\em locally unbreakable} if every $F \in E(H)$ satisfies the following property: for any $\Gamma : F \to [k]_0$ that is not $(3{k}^2,k)$-unbreakable, $f_F(\Gamma, \mu, l_1,l_2) =0$ for all $0 \leq \mu \leq b$, $0 \leq l_1 \leq k_1$ and $0 \leq l_2 \leq k_2$. 
\end{definition}

The second property, called connectivity, implies that if we would like to use a function $\Gamma : F \to [k]_0$ to color a hyperedge (as a restriction of a witnessing coloring) with more than one color, then we would have to ``pay'' at least 1 from our budget $l_1+l_2$.

\begin{definition}[Connectivity] An instance $\hp(k_1,k_2,b,d,q,H,\{f_F\}|_{F \in E(H)})$ is {\em connected} if every $F \in E(H)$ satisfies the following property: for any $\Gamma : F \to [k]_0$ for which there exist distinct $i,j \in [k]_0$ such that $|\Gamma^{-1}(i)|,|\Gamma^{-1}(j)| > 0$, it holds that $f_F(\Gamma, \mu,l_1,l_2)=1$ only if $l_1 + l_2 \geq 1$. 
\end{definition}

The third property, called global unbreakability, directly restricts our ``solution space'' by implying that we only need to determine whether there exists a $(q,k)$-unbreakable witnessing coloring.

\begin{definition}[Global Unbreakability]\label{def:global} An instance $\hp(k_1,k_2,b,d,q,H,\{f_F\}|_{F \in E(H)})$ is {\em globally unbreakable} if for all $0 \leq \mu \leq b$, $0 \leq l_1 \leq k_1$, $0 \leq l_2 \leq k_2$: if $\text{{\ttfamily aHP}}[\mu,l_1,l_2]=1$, then there exists a witnessing coloring $\Upsilon : V(H) \to [k]_0$ that is $(q,k)$-unbreakable.
\end{definition}

An instance $HP(k_1,k_2,b,d,q,H,\{f_F\}|_{F \in E(H)})$ is called a \emph{favorable instance of \hp} if it is locally unbreakable, connected and globally unbreakable. For such instances we have the following theorem.
\begin{theorem}\label{thm:hp}
\hp{} on favorable instances is solvable in time $2^{\OO(\min (k,q) \log (k+q))} d^{\OO(k^2)} m^{\OO(1)}$.
\end{theorem}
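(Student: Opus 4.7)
\noindent\textbf{Proof sketch for Theorem \ref{thm:hp}.} My plan is to exploit the three favorability hypotheses (local unbreakability, connectivity, and global unbreakability) together in a single branching algorithm. First, by global unbreakability it suffices to search for witnessing colorings $\Upsilon$ that are $(q,k)$-unbreakable; I accordingly enumerate the ``majority'' color $i^{*} \in [k]_0$ and set $S := V(H) \setminus \Upsilon^{-1}(i^{*})$, so that $|S| \leq q$. This initial guess costs a factor of $k+1$.

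Next, by local unbreakability together with Lemma \ref{lemma:numOfUnbreakableFunctions}, for each hyperedge $F$ the number of candidate local colorings $\Gamma : F \to [k]_0$ for which $f_F(\Gamma, \cdot, \cdot, \cdot)$ is not identically $0$ is at most $d^{\OO(k^2)}$; I precompute these lists in total time $m \cdot d^{\OO(k^2)}$. I then classify each hyperedge $F$ according to the form of $\Upsilon|_F$: \textbf{(A)} constant $i^{*}$, in which case $F \cap S = \emptyset$ and the contribution of $F$ is the single fixed table $f_F(\mathrm{const}\,i^{*}, \cdot, \cdot, \cdot)$; \textbf{(B)} constant $c \neq i^{*}$, which forces $F \subseteq S$ and hence $|F| \leq q$; and \textbf{(C)} non-constant, where connectivity forces $l_1^{F} + l_2^{F} \geq 1$, so at most $k$ hyperedges can be of type~(C).

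The algorithm then branches over the ``exceptional'' structure, which amounts to specifying the at-most-$k$ type-(C) hyperedges together with their local colorings, and the set $S$ with its coloring (the latter also determines the type-(B) hyperedges automatically, since these are precisely the hyperedges contained in $S$). Exploiting local unbreakability of each $\Upsilon|_F$ around its own local majority color, together with $|S| \leq q$ and the fact that at most $k$ hyperedges can be of type~(C), I would argue that this exceptional data can be encoded as a sequence of at most $\min(k,q)$ choices, each drawn from an alphabet of size $\OO(k+q)$, producing the claimed $2^{\OO(\min(k,q)\log(k+q))}$ factor. For a fixed guess, $\Upsilon$ is determined on every vertex that matters, and \texttt{aHP}$[\mu,l_1,l_2]$ is then computed simultaneously for all valid triples by a three-dimensional convolution of the fixed hyperedge tables $f_F(\Upsilon|_F, \cdot, \cdot, \cdot)$, in $\mathrm{poly}(m, b, k_1, k_2)$ time per branch.

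The main obstacle is the combinatorial encoding in the branching step: making rigorous the claim that the exceptional structure can be captured within $2^{\OO(\min(k,q)\log(k+q))}$ options. I expect this to require separating the two regimes $q \leq k$ (where $S$ itself is the small object to enumerate, after first localizing its vertices through the type-(B) and type-(C) hyperedges) and $q > k$ (where the at-most-$k$ type-(C) hyperedges dominate and pin down the relevant part of $S$), and in both regimes leveraging local $(3k^2,k)$-unbreakability to avoid any naive enumeration over all of $V(H)$. Once the structural enumeration bound is established, the remaining per-branch work is routine bookkeeping and convolution, giving the stated running time.
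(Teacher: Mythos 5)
Your high-level reading of the three favorability hypotheses is right---global unbreakability gives a small exceptional vertex set $S$ with $|S|\leq q$, local unbreakability limits each hyperedge to $d^{\OO(k^2)}$ candidate local colorings, and connectivity bounds the multichromatic hyperedges by $k$---but the step you flag as ``the main obstacle'' is a genuine gap, and it is precisely where the paper's proof does its real work. The difficulty is concrete: you cannot afford to enumerate either the type-(C) hyperedges or $S$. Choosing $k$ hyperedges out of $m$ costs $m^{\Theta(k)}$, which exceeds the $m^{\OO(1)}$ budget, and enumerating a $q$-subset $S\subseteq V(H)$ together with its $[k]_0$-coloring costs $|V(H)|^{\Theta(q)}(k+1)^q$, which is worse still. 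Local unbreakability constrains how each hyperedge \emph{can} be colored but says nothing about \emph{which} of the $m$ hyperedges or which of the $|V(H)|$ vertices are exceptional, so it does not, by itself, compress the branching into ``$\min(k,q)$ choices over an alphabet of size $\OO(k+q)$'' as you speculate.

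The paper resolves this very differently. It never enumerates $S$ at all; it performs a derandomized color-coding step over the \emph{hyperedges}, using the universal-set and perfect-hash-family constructions of Propositions~\ref{lemma:derand1} and~\ref{lemma:derand2} to produce a family of assignments of size $2^{\OO(\min(k,q)\log(k+q))} d^{\OO(k^2)}\log^2|E(H)|$, one of which is guaranteed compatible with the hidden witnessing coloring. Two further ideas are essential and missing from your sketch. First, the number of monochromatic hyperedges with color $\neq i^{*}$ (your type~(B)) is not bounded by $q$; only a spanning-forest subset $E_s$ of them has size $\leq q$ (Lemma~\ref{claim:hp:esbounded}), and the color coding is set up against $E_s$, not against all of them. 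Second, even after fixing a good assignment, $\Upsilon$ is \emph{not} determined ``on every vertex that matters'': the paper must build the auxiliary graph $L_p$ and supergraph $L^{*}_p$, apply Rules~1--3, and prove (Lemma~\ref{claim:hp:colorings}) that each connected component $D^{*}$ of $L^{*}_p$ has only $k+1$ candidate restrictions $\Phi[D^{*},i]$, which is what makes the closing dynamic program (Lemmas~\ref{claim:hp:computeH} and~\ref{claim:hp:computeAns}) go through. Your proposed per-branch convolution would be fine if the coloring were fully pinned down, but it is not, so the recovery machinery around $L_p$ cannot be skipped.
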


The proof of this theorem is very technical, involving non-trivial analysis of a very ``messy'' picture obtained by guessing part of a hypothetical witnessing coloring via the method of color coding. We defer the proof of Theorem~\ref{thm:hp} to Section~\ref{sec:hp}.

From now onwards, to simplify the presentation of arguments ahead with respect to \abcjp, we would abuse notation and directly define a witnessing coloring as a function rather than a partition. More precisely, the term witnessing coloring for {\ttfamily aJP}$[\mu, l_1,l_2] =1$ would refer to a function $col: V(G)\rightarrow\{V_1,V_2\}$ such that $A\subseteq V_1$, $B\subseteq V_2$, |$V_1|=\mu$ and for $i\in\{1,2\}$, $|E(G[V_i])| \leq l_i$. To proceed to our proof of Theorem \ref{thm:abcjp}, we first need to introduce an additional notation. Roughly speaking, this notation translates a coloring $\Upsilon$ of the form that witnesses some $\text{{\ttfamily aHP}}[\mu,l_1,l_2] =1$  to a coloring of the form that witnesses {\ttfamily aJP}$[\mu, l_1,l_2] =1$ via some tuple $\mathbf{v} \in {\{0,1\}}^{k+1}$. Formally,

\begin{definition}\label{def:translation}
For a tuple $\mathbf{v} \in {\{0,1\}}^{k+1}$,  bipartite graph $G$ with bipartition $(P,Q)$,  $A \subseteq V(G)$ and $\Upsilon : A \to [k]_0$,  define {\em $\widehat{\Upsilon}_{\mathbf{v}} : A \to \{V_1,V_2\}$} as follows.
\begin{itemize}
\item For all $v \in P$, $\widehat{\Upsilon}_{\mathbf{v}}(v) = V_1$ if and only if $\mathbf{v}[\Upsilon(v)]=0$.
\item For all $v \in Q$,  $\widehat{\Upsilon}_{\mathbf{v}}(v) = V_1$ if and only if $\mathbf{v}[\Upsilon(v)]=1$.
\end{itemize}
\end{definition}

Suppose we are given an instance $\abcjp(G, A,B,k_1,k_2)$. Fix some bipartition $(P,Q)$ of $G$. Let $(T, \beta)$ be the highly connected tree decomposition computed by the algorithm of Theorem \ref{theorem:std}, and let $r$ be the root of $T$. In what follows, $\eta=2^{\OO(k)}$ as in Theorem \ref{theorem:std}, and $q = (\eta + k)k$. We now proceed to define a binary variable that is supposed to represent the answer we would like to compute when we process the bag of a specific node of the tree. Hence, one of the arguments is a node $t$, and three additional arguments are $\mu \in [n]_0, l_1 \in [k_1]_0$ and  $l_2 \in [k_2]_0$. However, we cannot be satisfied with one answer, but need an answer for every possible ``interaction'' between the bag of $t$ and the bag of its parent $t'$. Thus, the definition also includes a coloring of $\sigma(t)$. The tuple $\mathbf{v} \in \{0,1\}^{k+1}$ is necessary for the translation process described in Definition \ref{def:translation} (the way in which we shall obtain such a ``right'' tuple later in the proof would essentially rely on brute-force).

\begin{definition}\label{def:meaningY}
Given $t \in V(T)$, a $(3{k}^2,k)$-unbreakable function ${\Upsilon}^{\sigma} : \sigma(t) \to [k]_0$, a tuple $\mathbf{v} \in \{0,1\}^{k+1}$, and integers $\mu \in [n]_0, l_1 \in [k_1]_0$ and  $l_2 \in [k_2]_0$, the binary variable $y[t,{\Upsilon}^{\sigma},\mathbf{v},\mu,l_1,l_2]$ is 1 if and only if there exists $\Upsilon : \gamma(t) \to [k]_0$ extending ${\Upsilon}^{\sigma}$ such that
\begin{enumerate}
    \item\label{item:Y1} The translation $\widehat{\Upsilon}_{\mathbf{v}}$ maps to $V_1$ exactly $\mu$ vertices, that is, $|\widehat{\Upsilon}_{\mathbf{v}}^{-1} (V_1)| = \mu$.

    \item\label{item:Y2} The translation $\widehat{\Upsilon}_{\mathbf{v}}$ maps $A \cap \gamma(t)$ to $V_1$ and $B \cap \gamma(t)$ to $V_2$, that is, $A \cap \gamma(t) \subseteq \widehat{\Upsilon}_{\mathbf{v}}^{-1} (V_1)$ and $B \cap \gamma(t) \subseteq \widehat{\Upsilon}_{\mathbf{v}}^{-1} (V_2)$.
    
    \item\label{item:Y3} For all $i \in \{1,2\}$, it holds that $|E(G[\widehat{\Upsilon}_{\mathbf{v}}^{-1} (V_i)])| \leq l_i$.
    
    \item\label{item:Y4} The set of edges between vertices receiving different colors by $\Upsilon$ is exactly the set of edges between vertices that are mapped to the same side by the translation $\widehat{\Upsilon}_{\mathbf{v}}$, that is, \[\bigcup\limits_{\substack{i,j \in [k]_0, i\neq j}} E({\Upsilon}^{-1}(i),{\Upsilon}^{-1}(j)) = E(G[\widehat{\Upsilon}_{\mathbf{v}}^{-1}(V_1)]) \cup E(G[\widehat{\Upsilon}_{\mathbf{v}}^{-1}(V_2)]).\]
\end{enumerate}  
\end{definition}

A function $\Upsilon$ as above is called a {\em witnessing coloring} for $y[t,{\Upsilon}^{\sigma},\mathbf{v},\mu,l_1,l_2]$. 

\begin{lemma}\label{lemma:1}
For any $\mu \in [n]_0$, $l_1 \in [k_1]_0$ and $l_2 \in [k_2]_0$, {\ttfamily aJP}$[\mu,l_1,l_2]=1$ if and only if there exists $\mathbf{v} \in \{0,1\}^{k+1}$ such that $y[r,\phi,\mathbf{v},\mu,l_1,l_2]=1$.
\end{lemma}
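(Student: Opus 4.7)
First note that since $r$ is the root of $T$, we have $\sigma(r)=\emptyset$ and $\gamma(r)=V(G)$, so the empty function $\phi$ is vacuously $(3k^2,k)$-unbreakable and the only function from $\sigma(r)$ to $[k]_0$. The plan is to exhibit, for each direction, an explicit translation between a witnessing coloring for $y[r,\phi,\mathbf{v},\mu,l_1,l_2]$ and a witnessing coloring for $\text{{\ttfamily aJP}}[\mu,l_1,l_2]$.

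For the easier (backward) direction, suppose some $\mathbf{v}\in\{0,1\}^{k+1}$ yields $y[r,\phi,\mathbf{v},\mu,l_1,l_2]=1$, witnessed by $\Upsilon:V(G)\to[k]_0$. Define the partition $(V_1,V_2)$ of $V(G)$ by $V_i=\widehat{\Upsilon}_{\mathbf{v}}^{-1}(V_i)$. Conditions \ref{item:Y1}--\ref{item:Y3} of Definition~\ref{def:meaningY} then give immediately that $|V_1|=\mu$, that $A\subseteq V_1$, $B\subseteq V_2$, and that $|E(G[V_i])|\le l_i$ for $i\in\{1,2\}$, so $(V_1,V_2)$ witnesses $\text{{\ttfamily aJP}}[\mu,l_1,l_2]=1$.

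The main work is in the forward direction. Assume $\text{{\ttfamily aJP}}[\mu,l_1,l_2]=1$ is witnessed by a coloring $col:V(G)\to\{V_1,V_2\}$, and let $E'=E(G[V_1])\cup E(G[V_2])$. Since $|E'|\le l_1+l_2\le k$ and $G$ is connected, the graph $G\setminus E'$ has at most $k+1$ connected components $C_0,C_1,\ldots,C_s$ with $s\le k$. Define $\Upsilon:V(G)\to[k]_0$ by setting $\Upsilon(v)=i$ whenever $v\in C_i$. Each $C_i$, being a subgraph of the bipartite graph $G$ with bipartition $(P,Q)$, is itself bipartite, and by construction $col$ restricted to $C_i$ is a proper $2$-coloring of $C_i$ (because every edge of $C_i$ lies in $E(G)\setminus E'$ and hence joins a $V_1$-vertex to a $V_2$-vertex). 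Since $C_i$ is connected, Proposition~\ref{proposition:bipartite} gives exactly two such $2$-colorings; one of them puts $P\cap C_i$ on the $V_1$ side, the other puts $Q\cap C_i$ there. Accordingly, set $\mathbf{v}[i]=0$ in the first case and $\mathbf{v}[i]=1$ in the second, and set the remaining coordinates of $\mathbf{v}$ arbitrarily. A direct unpacking of Definition~\ref{def:translation} then shows $\widehat{\Upsilon}_{\mathbf{v}}(v)=col(v)$ for every $v\in V(G)$.

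With the equality $\widehat{\Upsilon}_{\mathbf{v}}=col$ in hand, Conditions~\ref{item:Y1}--\ref{item:Y3} of Definition~\ref{def:meaningY} follow directly from the fact that $col$ is a witnessing coloring for $\text{{\ttfamily aJP}}[\mu,l_1,l_2]$. For Condition~\ref{item:Y4}, observe that the left-hand side $\bigcup_{i\neq j}E(\Upsilon^{-1}(i),\Upsilon^{-1}(j))$ is precisely the set of edges of $G$ whose two endpoints lie in different components of $G\setminus E'$, which is exactly $E'$; the right-hand side $E(G[\widehat{\Upsilon}_{\mathbf{v}}^{-1}(V_1)])\cup E(G[\widehat{\Upsilon}_{\mathbf{v}}^{-1}(V_2)])$ equals $E(G[V_1])\cup E(G[V_2])=E'$ by the same identification, so both sides coincide. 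The one subtle point, and the one I expect to be the main pitfall when writing out full details, is the verification that $col$ really is a proper $2$-coloring of each $C_i$, which is where connectedness of the bipartite components together with Proposition~\ref{proposition:bipartite} is essential; without it the choice of $\mathbf{v}[i]$ would not be well-defined.
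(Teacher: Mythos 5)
Your proof is correct and follows essentially the same approach as the paper's: in the forward direction both arguments delete the monochromatic edges $E'$ (the paper's $X$), use the at-most-$k{+}1$ resulting components together with Proposition~\ref{proposition:bipartite} to fix the vector $\mathbf{v}$, and then verify $\widehat{\Upsilon}_{\mathbf{v}}=col$ and the four conditions of Definition~\ref{def:meaningY}; the backward direction is the same direct unpacking. The only stylistic difference is that the paper isolates $\widehat{\Upsilon}_{\mathbf{v}}=col$ as a separate claim and sets the unused coordinates of $\mathbf{v}$ to $0$ rather than arbitrarily, neither of which affects correctness.

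One small remark on where you flag the subtle point: the verification that $col$ is a proper $2$-coloring of each $C_i$ is in fact immediate (edges of $C_i$ lie outside $E'$, hence are bichromatic). The place where Proposition~\ref{proposition:bipartite} and bipartiteness of $G$ do real work is in establishing Condition~\ref{item:Y4}, specifically the containment $E'\subseteq\bigcup_{i\neq j}E(\Upsilon^{-1}(i),\Upsilon^{-1}(j))$: a monochromatic edge $\{u,v\}$ has its endpoints on opposite sides of $(P,Q)$, so if $u,v$ lay in the same component $C_i$ the identification of $col|_{C_i}$ with one of the two bipartitions of $C_i$ would force them to have opposite colors, a contradiction. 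Both your write-up and the paper's leave this as an ``observe,'' so you are no less rigorous than the source, but the dependence on bipartiteness there is worth noting when fleshing out details.
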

\begin{proof}
Let us prove the backward direction first. Let $\mathbf{v} \in \{0,1\}^{k+1}$ be such that $y[r,\emptyset,\mathbf{v},\mu,l_1,l_2]=1$ and let $\Upsilon: V(G) \to [k]_0$ be one of its witnessing coloring. Then, Definition \ref{def:meaningY} directly implies that $\widehat{\Upsilon}_{\mathbf{v}}$ is a witnessing coloring for {\ttfamily aJP}$[\mu,l_1,l_2]=1$.

For the forward direction, let $col : V(G) \to \{V_1,V_2\}$ be a witnessing coloring for {\ttfamily aJP}$[\mu,l_1,l_2]$. Let $X= E(G[{col}^{-1}(V_1)]) \cup E(G[{col}^{-1}(V_2)])$. Let $C_0, \ldots, C_r$ be the connected components of $G \setminus X$. Since $X \subseteq E(G)$ and $|X| \leq l_1 + l_2 \leq k_1+k_2 = k$, we have that the number of connected components $r$ is upper bounded by $k$. For any $i \in [r]_0$, let $(P_i= (P \cap C_i),Q_i=(Q \cap C_i))$ be a bipartition of $C_i$ (recall that $G$ is a connected bipartite graph with fixed bipartition $(P,Q)$).

\begin{claim}\label{claim:1}
For any $i \in [r]_0$, either both $P_i \subseteq {col}^{-1}(V_1)$ and $Q_i \subseteq {col}^{-1}(V_2)$ or both $P_i \subseteq {col}^{-1}(V_2)$ and $Q_i \subseteq {col}^{-1}(V_1)$.
\end{claim}

{\noindent\bf Proof.}
Consider a bipartition $({P'}_i,{Q'}_i)$ of $C_i$, where ${P'}_i = {col}^{-1}(V_1)$ and ${Q'}_i = {col}^{-1}(V_2)$. Since $C_i$ is connected, from Proposition \ref{proposition:bipartite}, either $P_i \subseteq {P'}_i$ and $Q_i \subseteq {Q'}_i$, or $P_i \subseteq {Q'}_i$ and $Q_i \subseteq {P'}_i$. Hence the claim follows. $\diamond$
\medskip

Let us now construct a $k$-length binary string, $\mathbf{v}$, as follows. For any $i \in [r]_0$, $\mathbf{v}[i]=0$ if and only if $P_i \subseteq {col}^{-1}(V_1)$ and $Q_i \subseteq {col}^{-1}(V_2)$. For $i \in \{r+1, \ldots,k\}$, $\mathbf{v}[i]=0$.

Define $\Upsilon: V(G) \to [k]_0$ as follows. For any $v \in V(G)$, $\Upsilon(v) = i$ if and only if $v\in C_i$.

\begin{claim}\label{claim:2}
$\widehat{\Upsilon}_{\mathbf{v}} = col$.
\end{claim}
{\noindent\bf Proof.}
Consider some vertex $v \in V(G)$. Denote $V_j=col(v)$, $i=\Upsilon(v)$ and $b=\mathbf{v}[i]$, and note that $j \in \{1,2\}$, $i\in [k]_0$ and $b\in \{0,1\}$. We divide the argument into two cases corresponding to whether $v \in P_i$ or $v \in Q_i$. 
Since $v \in {col}^{-1}(V_j)$, if $v \in P_i$, then by Claim \ref{claim:1}, $P_i \subseteq {col}^{-1}(V_j)$ and $Q_i \subseteq {col}^{-1}(V_{3-j})$. Thus, by the construction of $\mathbf{v}$, $b=j-1$. Hence, by the definition of $\widehat{\Upsilon}_{\mathbf{v}}$, $\widehat{\Upsilon}_{\mathbf{v}}(v)=V_j$.
Similarly, if $v \in Q_i$, then by Claim \ref{claim:1}, $Q_i \subseteq {col}^{-1}(V_j)$ and $P_i \subseteq {col}^{-1}(V_{3-j})$. Thus, by the construction of $\mathbf{v}$, $b=2-j$. Hence, by the definition of $\widehat{\Upsilon}_{\mathbf{v}}$, $\widehat{\Upsilon}_{\mathbf{v}}(v)=V_j$.

Since the choice of $v$ was arbitrary, by the definition of $\widehat{\Upsilon}_{\mathbf{v}}$, we have that $\widehat{\Upsilon}_{\mathbf{v}}(v)=V_j$. $\diamond$
\medskip

\begin{claim}\label{claim:3}
For the binary string $\mathbf{v}$ constructed as above, the function $\Upsilon$ constructed above is a witnessing coloring for $y[r, \emptyset, \mathbf{v}, \mu,l_1,l_2]=1$.
\end{claim}

{\noindent\bf Proof.}
Since $\widehat{\Upsilon}_{\mathbf{v}} = col$, from the definition of $col$, we have that $|\widehat{\Upsilon}_{\mathbf{v}}^{-1}(V_1)| = \mu$, $A \subseteq \widehat{\Upsilon}_{\mathbf{v}}^{-1} (V_1)$, $B \subseteq \widehat{\Upsilon}_{\mathbf{v}}^{-1} (V_2)$, and for all $i \in \{1,2\}$, $|E(G[\widehat{\Upsilon}_{\mathbf{v}}^{-1} (V_i)])| \leq l_i$. Observe that $\bigcup\limits_{i,j \in [k]_0, i \neq j} E({\Upsilon}^{-1}(i),{\Upsilon}^{-1}(j))=X$. Therefore, $\bigcup\limits_{i,j \in [k]_0, i \neq j} E({\Upsilon}^{-1}(i),{\Upsilon}^{-1}(j)) = E(G[\widehat{\Upsilon}_{\mathbf{v}}^{-1}(V_1)]) \cup E(G[\widehat{\Upsilon}_{\mathbf{v}}^{-1}(V_2)])$. Thus, $\Upsilon$ is a witnessing coloring for $y[r, \emptyset, \mathbf{v}, \mu,l_1,l_2]=1$. $\diamond$
\medskip

This concludes the proof of the lemma.
\end{proof}

By Lemma \ref{lemma:1}, it is sufficient to compute $y[r, \phi, \mathbf{v}, \mu,l_1,l_2]$ for all $\mu\in[n]$, \lfin\ and \lsin. 
To this end, we need to compute $y[t,{\Upsilon}^{\sigma},\mathbf{v},\mu,l_1,l_2]$ for every node $t \in V(T)$, function ${\Upsilon}^{\sigma} : \sigma(t) \to [k]_0$ that is $(3{k}^2,k)$-unbreakable, tuple $\mathbf{v} \in \{0,1\}^{k+1},$ and integers $\mu \in [n]_0, l_1 \in [k_1]_0$ and $l_2 \in [k_2]_0$. Here, we employ bottom-up dynamic programming over the tree decomposition $(T, \beta)$. Let us now zoom into the computation of $y[t,{\Upsilon}^{\sigma},\mathbf{v},\mu,l_1,l_2]$ for all $\mu\in[n]$, \lfin\ and \lsin{}, for some specific $t,{\Upsilon}^{\sigma}$ and $\mathbf{v}$. Note that we now assume that values corresponding to the children of $t$ (if such children exist) have been already computed correctly. Moreover, note that $|\sigma(t)| \leq \eta$, the number of $(3{k}^2,k)$-unbreakable functions ${\Upsilon}^{\sigma} : \sigma(t) \to [k]_0$ is at most $|\eta|^{k^{\OO(1)}}= 2^{k^{\OO(1)}}$ (by Lemma \ref{lemma:numOfUnbreakableFunctions}), and the number of binary vectors of size $k+1$ is at most $2^{k+1}$. Thus, the total running time would consist of the computation time of $(T,\beta)$, and $n \cdot q^{\OO(k)} \cdot 2^{k+1}$ times the computation time for a set of values as the one we examine now. Hence, it remains to show how to compute the current set of values in time $2^{k^{\OO(1)}}$.

To compute our current set of values, let us construct an instance $\hp(k_1,k_2,n, \eta, q, H,$ $\{f_F\}|_{F \in E(H)})$ of \hp\ where $V(H) = \beta(t)$, and $E(H)$ and $\{f_F\}|_{F \in E(H)}$ are defined as follows.

\begin{enumerate}
    \item {\bf Type-1 Hyperedges.} For all $v \in \beta(t)$, insert $F=\{v\}$ into $E(H)$. Define $f_F : [k]^F_0 \times [n]_0 \times [k_1]_0 \times [k_2]_0 \to \{0,1\}$ as
    \[
    f_F(\Gamma, \mu,l_1,l_2) =
    \begin{cases}
     0,              & \text{if } v \in \sigma(t) \text{ and } \Gamma(v) \neq {\Upsilon}^{\sigma}(v) \\
     1,              & \text{if } v \in A,\ \widehat{\Gamma}_{\mathbf{v}}(F)=V_1,\ l_1=l_2=0 \text{ and } \mu=1 \\   
     1,              & \text{if } v \in B,\ \widehat{\Gamma}_{\mathbf{v}}(F)=V_2,\ l_1=l_2=0 \text{ and } \mu=0 \\ 
     1,              & \text{if } v \not \in A\cup B,\ l_1=l_2=0 \text{ and } \mu=[\widehat{\Gamma}_{\mathbf{v}}(F)=V_1] \\   
     0,              & \text{otherwise}
    \end{cases}
    \]

    \item {\bf Type-2 Hyperedges.} For all $(u,v) \in E(G[\beta(t)])$, add $F=\{u,v\}$ in $E(H)$. Define $f_F : [k]^F_0 \times [n]_0 \times [k_1]_0 \times [k_2]_0 \to \{0,1\}$ as
    \[
    f_F(\Gamma, \mu,l_1,l_2) =
    \begin{cases}
     0,              & \text{if } \mu \neq 0 \\
     1,              & \text{if } \widehat{\Gamma}_{\mathbf{v}}(u) \neq \widehat{\Gamma}_{\mathbf{v}}(v) \text{ and } \Gamma(u) = \Gamma(v) \\
     1,              & \text{if } \widehat{\Gamma}_{\mathbf{v}}(u) = \widehat{\Gamma}_{\mathbf{v}}(v) =V_1 \text{ and } l_1 \geq 1 \\
     1,              & \text{if } \widehat{\Gamma}_{\mathbf{v}}(u) = \widehat{\Gamma}_{\mathbf{v}}(v) =V_2 \text{ and } l_2 \geq 1 \\
     0,              & \text{otherwise}
    \end{cases}
    \]
    
    \item {\bf Type-3 Hyperedges.} For all $\widehat{t} \in V(T)$ that is a child of $t$ in the tree $T$, insert $F=\sigma(\widehat{t})$ into $E(H)$. Define $f_F : [k]^F_0 \times [n]_0 \times [k_1]_0 \times [k_2]_0 \to \{0,1\}$ as
    \[
    f_F(\Gamma, \mu,l_1,l_2) =
    \begin{cases}
     0,              & \text{if } \Gamma \text{ is not } (3{k}^2,k)\text{-unbreakable or } y[\widehat{t},\Gamma, \mu + {\mu}', l_1 + l_1', l_2 + l_2'] =0 \\     
     1,              & \text{otherwise}
    \end{cases}
    \]
    where ${\mu}' =|\widehat{\Gamma}_{\mathbf{v}}^{-1}(V_1)|$, and $l_i'=|\{\{u,v\} \in E(G[\sigma(\widehat{t})]) : \widehat{\Gamma}_{\mathbf{v}}(u) = \widehat{\Gamma}_{\mathbf{v}}(v)=V_i\}|$ for $i\in[2]$.
\end{enumerate}

Let us first claim that witnessing colorings related to $\hp(k_1,k_2,n, \eta, q, H,$ $\{f_F\}|_{F \in E(H)})$ are useful in the sense that they can be extended to witnessing colorings for the binary values in which we are interested.

\begin{lemma}\label{claim:semiequivalent}
For all $\mu\in[n]$, \lfin, \and \lsin, if  $\text{{\ttfamily aHP}}[\mu,l_1,$ $l_2] = 1$, then  $y[t, {\Upsilon}^{\sigma}, \mathbf{v},\mu,l_1,l_2]=1$. In fact, for any witness $\Upsilon : \beta(t) \to [k]_0$ of $\text{{\ttfamily aHP}}[\mu,l_1,l_2]=1$, there exists a function ${\Upsilon}' : \gamma(t) \to [k]_0$ that extends ${\Upsilon}$ and witnesses $y[t,{\Upsilon}^{\sigma},\mathbf{v},\mu,l_1,l_2]=1$.
\end{lemma}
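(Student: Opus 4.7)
The plan is to take a witness $\Upsilon:\beta(t)\to[k]_0$ of $\text{{\ttfamily aHP}}[\mu,l_1,l_2]=1$, together with the collection $\{\mu^F,l_1^F,l_2^F\}_{F\in E(H)}$ that certifies the disjunction in the definition of $\text{{\ttfamily aHP}}$, and to stitch them together with recursively obtained witnesses from the children of $t$ into a function $\Upsilon':\gamma(t)\to[k]_0$ that extends $\Upsilon$ and witnesses $y[t,\Upsilon^{\sigma},\mathbf{v},\mu,l_1,l_2]=1$. First I would read off the consequences of $f_F=1$ for each hyperedge type: the type-$1$ hyperedges force $\Upsilon$ to extend $\Upsilon^{\sigma}$, to send $A\cap\beta(t)$ to $V_1$ and $B\cap\beta(t)$ to $V_2$ under the translation $\widehat{\Upsilon}_{\mathbf{v}}$, and to contribute exactly $|\widehat{\Upsilon}_{\mathbf{v}}^{-1}(V_1)\cap\beta(t)|$ to $\sum_F\mu^F$; each type-$3$ hyperedge $F=\sigma(\widehat{t})$ gives $y[\widehat{t},\Upsilon|_F,\mu^F+\mu',l_1^F+l_1',l_2^F+l_2']=1$, which via Definition~\ref{def:meaningY} provides a witness $\Upsilon_{\widehat{t}}:\gamma(\widehat{t})\to[k]_0$ extending $\Upsilon|_{\sigma(\widehat{t})}$.

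I would then define $\Upsilon'$ by setting $\Upsilon'|_{\beta(t)}=\Upsilon$ and $\Upsilon'|_{V_{\widehat{t}}}=\Upsilon_{\widehat{t}}|_{V_{\widehat{t}}}$ for each child $\widehat{t}$ of $t$, where $V_{\widehat{t}}=\gamma(\widehat{t})\setminus\beta(t)$. This is well-defined because the relevant colorings agree on $\sigma(\widehat{t})$, and because the standard tree-decomposition axioms imply both $V_{\widehat{t}}=\gamma(\widehat{t})\setminus\sigma(\widehat{t})$ and the disjoint decomposition $\gamma(t)=\beta(t)\sqcup\bigsqcup_{\widehat{t}}V_{\widehat{t}}$.

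It then remains to verify the four conditions of Definition~\ref{def:meaningY} for $\Upsilon'$. Condition~\ref{item:Y1} follows by partitioning the preimage along the disjoint decomposition of $\gamma(t)$: summing the type-$1$ contributions on $\beta(t)$ together with, for each child, the quantity $|\widehat{\Upsilon_{\widehat{t}}}_{\mathbf{v}}^{-1}(V_1)\cap V_{\widehat{t}}|=(\mu^F+\mu')-\mu'=\mu^F$, yields $\sum_F\mu^F=\mu$. Condition~\ref{item:Y2} follows from the type-$1$ constraints on $\beta(t)$ together with the recursive condition~\ref{item:Y2} inside each $\gamma(\widehat{t})$. For condition~\ref{item:Y4}, bipartiteness of $G$ combined with the definition of the translation gives the implication ``$\widehat{\Upsilon'}_{\mathbf{v}}$-monochromatic $\Rightarrow$ $\Upsilon'$-bichromatic'' for free on every edge $\{u,v\}$; the other direction ``$\Upsilon'$-bichromatic $\Rightarrow$ $\widehat{\Upsilon'}_{\mathbf{v}}$-monochromatic'' is enforced by the type-$2$ hyperedges on edges of $G[\beta(t)]$ and by the recursive condition~\ref{item:Y4} on edges of $G[\gamma(\widehat{t})]$ with at least one endpoint in $V_{\widehat{t}}$.

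The main obstacle is condition~\ref{item:Y3}, the edge bounds $|E(G[\widehat{\Upsilon'}_{\mathbf{v}}^{-1}(V_i)])|\leq l_i$, because edges lying in $G[\sigma(\widehat{t})]$ are simultaneously ``owned'' by the type-$2$ hyperedges (through $G[\beta(t)]$) and by the type-$3$ hyperedge for child $\widehat{t}$ (through $G[\gamma(\widehat{t})]$), so one must avoid double-counting. The plan is to decompose $E(G[\gamma(t)])$ as $E(G[\beta(t)])$ together with the edges having at least one endpoint in some $V_{\widehat{t}}$---a disjoint union, since property~\ref{item:Unbreak1} of Theorem~\ref{theorem:std} guarantees $N(V_{\widehat{t}})\subseteq\sigma(\widehat{t})\subseteq\beta(t)$. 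For each $i\in\{1,2\}$, the type-$2$ constraint forces $l_i^F\geq 1$ whenever the corresponding edge is $\widehat{\Upsilon'}_{\mathbf{v}}$-monochromatic of color $V_i$, so the type-$2$ contributions dominate the monochromatic-$V_i$ edge count in $G[\beta(t)]$; meanwhile, the recursive guarantee $|E(G[\widehat{\Upsilon_{\widehat{t}}}_{\mathbf{v}}^{-1}(V_i)])|\leq l_i^F+l_i'$, combined with the observation that exactly $l_i'$ of those edges lie inside $G[\sigma(\widehat{t})]$ (the coloring there being fixed by $\Upsilon$), gives the bound $l_i^F$ on the monochromatic-$V_i$ edges of $G[\gamma(\widehat{t})]$ with an endpoint in $V_{\widehat{t}}$. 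Summing and using $\sum_F l_i^F\leq l_i$ completes the argument. The delicate role of the $l_i'$ correction inside the type-$3$ function is precisely to separate the ``boundary'' edges of $G[\sigma(\widehat{t})]$ (accounted for by $\beta(t)$) from the ``interior'' edges reaching $V_{\widehat{t}}$.
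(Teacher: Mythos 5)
Your proposal is correct and follows essentially the same strategy as the paper: extract, from $f_F=1$, the constraints imposed by type-1, type-2 and type-3 hyperedges; pull recursive witnessing colorings from the type-3 values $y[\widehat t,\cdot]$; glue them into $\Upsilon'$ on the disjoint decomposition $\gamma(t)=\beta(t)\sqcup\bigsqcup_{\widehat t}V_{\widehat t}$; and verify the four conditions of Definition~\ref{def:meaningY} by summing contributions, with the $\mu'$ and $l_i'$ corrections in the type-3 function used exactly to avoid double-counting of vertices and edges in $\sigma(\widehat t)$. The only small departure is your observation that, for Condition~\ref{item:Y4}, bipartiteness of $G$ already gives the implication ``$\widehat{\Upsilon'}_{\mathbf v}$-monochromatic $\Rightarrow$ $\Upsilon'$-bichromatic'' for free (the paper instead routes this direction through the type-2 hyperedge case analysis); this is a slight streamlining, but the two arguments are equivalent in substance.
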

\begin{proof}
If $\text{{\ttfamily aHP}}[\mu,l_1,l_2] = 1$, let $\Upsilon : \beta(t) \to  [k]_0$ be a witnessing coloring for $\text{{\ttfamily aHP}}[\mu,l_1,$ $l_2] = 1$. Then, there exist $\mu = \sum\limits_{F \in E(H)} {\mu}^F$, $\sum\limits_{F \in E(H)} l_1^F \leq l_1$ and $\sum\limits_{F \in E(H)} l_2^F \leq l_2 $, such that for all $F \in E(H)$, $f_F({\Upsilon}|_{F}, {\mu}^F, l_1^F, l_2^F)=1$. In particular, the following conditions hold.

\begin{enumerate}
    \item\label{item:1proof16} Since for any \emph{type-1} hyperedge $F$, it holds that $f_F({\Upsilon}|_{F}, {\mu}^F, l_1^F, l_2^F)=1$, we overall have that $\Upsilon^{\sigma} \subseteq \Upsilon$, $A \cap \beta(t) \subseteq \widehat{\Upsilon}_{\mathbf{v}}^{-1}(V_1)$, $B \cap \beta(t) \subseteq \widehat{\Upsilon}_{\mathbf{v}}^{-1}(V_2)$ and $$\sum\limits_{F \text{ is a type-1 hyperedge }} {\mu}^F = |\widehat{\Upsilon}_{\mathbf{v}}^{-1}(V_1) \cap \beta(t)|.$$
    
 \item\label{item:2proof16} Since for any \emph{type-2} hyperedge $F$  and $i\in\{1,2\}$, it holds that $f_F({\Upsilon}|_{F}, {\mu}^F, l_1^F, l_2^F)=1$, we overall have that $$|E(G[\widehat{\Upsilon}_{\mathbf{v}}^{-1}(V_i) \cap \beta(t)])| \leq  \sum\limits_{F \text{ is a type-2 hyperedge}} l_i^F.$$
    
    \item For any \emph{type-3} hyperedge $F=\sigma(t_i)$, since $f_F({\Upsilon}|_{F}, {\mu}^F, l_1^F, l_2^F)=1$, we have that ${\Upsilon}|_{F}$ is $(3{k}^2,k)$-unbreakable and $y[t_i,{\Upsilon}|_{F}, {\mu}^{F} + \mu', l_1^F + l_1', l_2^F + l_2'] =1$, where $\mu' =|\widehat{\Upsilon}_{\mathbf{v}}^{-1}(V_1) \cap F|$, 
$l_1'=|\{(u,v) \in E(G[\sigma(t_i)])| \widehat{\Upsilon}_{\mathbf{v}}(u) = \widehat{\Upsilon}_{\mathbf{v}}(v)=V_1\}|$ and $l_2'=|\{(u,v) \in E(G[\sigma(t_i)]) | \widehat{\Upsilon}_{\mathbf{v}}(u) = \widehat{\Upsilon}_{\mathbf{v}}(v)=V_2\}|$.

We thus derive that there exists a witnessing coloring ${\Upsilon}^i : \gamma(t_i) \to [k]_0$ for the condition $y[t_i,{\Upsilon}|_{F}, {\mu}^{F} + \mu', l_1^F + l_1', l_2^F + l_2'] =1$. Specifically, the following conditions are satisfied.
\begin{enumerate}
\item\label{item:3proof16a} ${\Upsilon}^i$ extends ${\Upsilon}|_{F}$.
\item\label{item:3proof16b} $|\widehat{\Upsilon^i}_{\mathbf{v}}^{-1} (V_1)| = \mu^F + \mu'$.
\item\label{item:3proof16c} $A \cap \gamma(t_i) \subseteq \widehat{\Upsilon^i}_{\mathbf{v}}^{-1} (V_1)$ and $B \cap \gamma(t_i) \subseteq \widehat{\Upsilon^i}_{\mathbf{v}}^{-1} (V_2)$.
\item\label{item:3proof16d} $|E(G[\widehat{\Upsilon^i}_{\mathbf{v}}^{-1}(V_1) \cap \gamma(t_i)]) | \leq l_1^F + l_1'$, and $|E(G[\widehat{\Upsilon}_{\mathbf{v}}^{-1}(V_2) \cap \gamma(t_i)]) | \leq l_2^F + l_2'$.  
\item\label{item:3proof16e} $\displaystyle{\bigcup\limits_{\substack{\ell,j \in [k]_0, \ell\neq j}} E({\Upsilon^i}^{-1}(\ell),{\Upsilon}^{-1}(j)) = E(G[\widehat{\Upsilon^i}_{\mathbf{v}}^{-1}(V_1)]) \cup E(G[\widehat{\Upsilon^i}_{\mathbf{v}}^{-1}(V_2)])}$.
\end{enumerate}    
\end{enumerate}

\noindent Keeping the above items in mind, we proceed to identify a witnessing coloring for $y[t, {\Upsilon}^{\sigma}, \mathbf{v},\mu,$ $l_1,l_2]=1$. We construct such a coloring $\Upsilon' : \gamma(t) \to [k]_0$ as follows. For all $v\in\gamma(t)$, if $v\in\beta(t)$, then define $\Upsilon'(v)=\Upsilon(v)$, and otherwise there exists a unique child $t_i$ of $t$ such that $v\in\gamma(t_i)$, in which case we define $\Upsilon'(v)=\Upsilon^i(v)$. 
For the sake of clarity, let us extract the required argument to the proof of a separate claim.

\begin{claim}
The aforementioned $\Upsilon'$ is a witnessing coloring for $y[t, {\Upsilon}^{\sigma}, \mathbf{v},\mu,l_1,l_2]=1$.
\end{claim}

{\noindent\bf Proof.} First, note that by Item \ref{item:1proof16}, we have that ${\Upsilon}_{\sigma}\subseteq \Upsilon$ and therefore ${\Upsilon}_{\sigma}\subseteq \Upsilon'$. Let us now verify that all of the other conditions specified in Definition~\ref{def:meaningY} are satisfied.
\begin{itemize}    
    \item Let us first prove Condition \ref{item:Y1}. To this end, we observe that by Items \ref{item:1proof16}, \ref{item:3proof16a} and \ref{item:3proof16b}, we have that the three following equalities~hold.
			\begin{itemize}
			\item $|\widehat{\Upsilon'}_{\mathbf{v}}^{-1}(V_1)| = |\widehat{\Upsilon'}_{\mathbf{v}}^{-1}(V_1) \cap \beta(t)|+  \sum\limits_{t_i \text{ is a child of } t \text{ in } T}|\widehat{\Upsilon'}_{\mathbf{v}}^{-1}(V_1) \cap (\gamma(t_i) \setminus \sigma(t_i))|.$
			\item $|\widehat{\Upsilon'}_{\mathbf{v}}^{-1}(V_1) \cap \beta(t)| = |\widehat{\Upsilon}_{\mathbf{v}}^{-1}(V_1) \cap \beta(t)| = \sum\limits_{F \text{ is a type-1 hyperedge}} {\mu}^F$.
			\item For every child $t_i$ of $t$, $|\widehat{\Upsilon'}_{\mathbf{v}}^{-1}(V_1) \cap (\gamma(t_i) \setminus F)| = {\mu}^F$, where $F = \sigma(t_i)$.
			\end{itemize}
    Thus, since $\sum\limits_{F \text{ is a type-2 hyperedge}} {\mu}^F =0$, we conclude that $|\widehat{\Upsilon'}_{\mathbf{v}}^{-1}(V_1)| = \sum\limits_{F \in E(H)} {\mu}^F = \mu$.
    
    \item Next, we prove Condition \ref{item:Y2}. However, by Items \ref{item:1proof16} and \ref{item:3proof16c}, we directly deduce that both $A \cap \gamma(t) \subseteq \widehat{\Upsilon'}_{\mathbf{v}}^{-1}(V_1)$ and $B \cap \gamma(t) \subseteq \widehat{\Upsilon'}_{\mathbf{v}}^{-1}(V_2)$ as required.
    
\item We now turn to prove Condition \ref{item:Y3}. In light of Item \ref{item:3proof16a}, note that 
\begin{eqnarray*}
|E(G[\widehat{\Upsilon'}_{\mathbf{v}}^{-1}(V_1)])| &  = & 
|E(G[\widehat{\Upsilon'}_{\mathbf{v}}^{-1}(V_1) \cap \beta(t)])|+  \sum\limits_{t_i \text{ is a child of } t \text{ in } T}  |E(G[\widehat{\Upsilon'}_{\mathbf{v}}^{-1}(V_1) \cap \gamma(t_i)])|  \\
&& 
- \sum\limits_{t_i \text{ is a child of } t \text{ in } T}  |E(G[\widehat{\Upsilon'}_{\mathbf{v}}^{-1}(V_1) \cap \sigma(t_i)])|.
\end{eqnarray*}
    
 Now, observe that by Items \ref{item:2proof16}, \ref{item:3proof16a} and \ref{item:3proof16d}, the two following equations hold.
\begin{itemize}
\item $|E(G[\widehat{\Upsilon'}_{\mathbf{v}}^{-1}(V_1) \cap \beta(t)])| = 
|E(G[\widehat{\Upsilon}_{\mathbf{v}}^{-1}(V_1) \cap \beta(t)])| \leq  \sum\limits_{F \text{ is a type-2 hyperedge}} l_1^F$.
\item For every child $t_i$ of $t$, $|E(G[\widehat{\Upsilon'}_{\mathbf{v}}^{-1}(V_1) \cap \gamma(t_i)])| = l_1^F + |E(G[ \widehat{\Upsilon'}_{\mathbf{v}}^{-1}(V_1) \cap \sigma(t_i)])|$, where $F = \sigma(t_i)$.
\end{itemize} 
    Since $\sum\limits_{F \text{ is a type-1 hyperedge}} l_1^F =0$, we conclude that 
$$|E(G[\widehat{\Upsilon'}_{\mathbf{v}}^{-1}(V_1)])| \leq \sum\limits_{F \in E(H)} l_1^F \leq l_1 .$$
    
  Similarly, we derive that $|E(G[\widehat{\Upsilon'}_{\mathbf{v}}^{-1}(V_2)])| \leq \sum\limits_{F \in E(H)} |l_2^F| \leq l_2$.
    
    \item Finally, we prove Condition \ref{item:Y4}. In the first direction, consider some edge  $e \in E(G[\widehat{\Upsilon'}_{\mathbf{v}}^{-1}(V_1)])$ $\cup E(G[\widehat{\Upsilon'}_{\mathbf{v}}^{-1}(V_2)])$. Let us denote $e=\{u,v\}$, and observe that $\widehat{\Upsilon'}_{\mathbf{v}}(v)=\widehat{\Upsilon'}_{\mathbf{v}}(u)$. If $u,v \in \gamma(t_i)$ for some child $t_i$ of $t$, then by Item \ref{item:3proof16e}, we have that $e \in \bigcup\limits_{\substack{i,j \in [k]_0, \\ i \neq j}} E({\Upsilon'}^{-1}(i),{\Upsilon'}^{-1}(j))$. Otherwise, $u,v \in \beta(t)$, and thus  $e$ is some type-2 hyperedge $F$. Since $f_F({\Upsilon}|_{F}, {\mu}^F, l_1^F, l_2^F)=1$, the definition of $f_F({\Upsilon}|_{F}, {\mu}^F, l_1^F, l_2^F)$ directly implies that ${\Upsilon}(v) \neq {\Upsilon}(v)$, and therefore again $e \in \bigcup\limits_{\substack{i,j \in [k]_0, \\ i \neq j}} E({\Upsilon'}^{-1}(i),{\Upsilon'}^{-1}(j))$.
		
    In the other direction, consider some edge $e \in \bigcup\limits_{\substack{i,j \in [k]_0, \\ i \neq j}} E({\Upsilon'}^{-1}(i),{\Upsilon'}^{-1}(j))$. Let us denote $e=\{u,v\}$, and observe that ${\Upsilon'}(v)\neq {\Upsilon'}(u)$. If $u,v \in \gamma(t_i)$ for some child $t_i$ of $t$, then by Item \ref{item:3proof16e}, we have that $e \in E(G[\widehat{\Upsilon'}_{\mathbf{v}}^{-1}(V_1)]) \cup E(G[\widehat{\Upsilon'}_{\mathbf{v}}^{-1}(V_2)])$. Otherwise, $u,v \in \beta(t)$, and thus  $e$ is some type-2 hyperedge $F$. Since $f_F({\Upsilon}|_{F}, {\mu}^F, l_1^F, l_2^F)=1$, the definition of $f_F({\Upsilon}|_{F}, {\mu}^F, l_1^F, l_2^F)$ directly implies that $\widehat{\Upsilon'}_{\mathbf{v}}(v)=\widehat{\Upsilon'}_{\mathbf{v}}(u)$, and therefore again $e \in E(G\widehat{\Upsilon'}_{\mathbf{v}}^{-1}(V_1)]) \cup E(G[\widehat{\Upsilon'}_{\mathbf{v}}^{-1}(V_2)])$.
\end{itemize}

\noindent Thus, we have proved that $\Upsilon'$ is a witnessing coloring for $y[t, {\Upsilon}^{\sigma}, \mathbf{v},\mu,l_1,l_2]$. Moreover, $\Upsilon'$, which extends $\Upsilon$, is the desired function for the second part of the lemma. $\diamond$
\medskip

\noindent This concludes the proof of the lemma.
\end{proof}

In light of Lemma \ref{claim:semiequivalent}, we now turn to verify that $\hp(k_1,k_2,n, \eta, q, H,$ $\{f_F\}|_{F \in E(H)})$ is of the form that we are actually able to solve.

\begin{lemma}\label{claim:favourable}
$\hp(k_1,k_2,n, \eta, q, H,$ $\{f_F\}|_{F \in E(H)})$ is a favorable instance of \hp.
\end{lemma}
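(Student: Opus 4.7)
The plan is to verify the three defining properties of a favorable instance---local unbreakability, connectivity, and global unbreakability---case-by-case on the hyperedge type. For local unbreakability, type-1 and type-2 hyperedges have $|F| \leq 2$, so any $\Gamma : F \to [k]_0$ is trivially $(2,k)$-unbreakable and hence $(3k^2,k)$-unbreakable (for $k\geq 1$), while for type-3 hyperedges the requirement is baked directly into the definition of $f_F$, which returns $0$ whenever $\Gamma$ fails to be $(3k^2,k)$-unbreakable. For connectivity, the type-1 case is vacuous, and for type-2 $F = \{u,v\}$ direct inspection of $f_F$ shows that $\Gamma(u) \neq \Gamma(v)$ together with $f_F = 1$ forces $\widehat{\Gamma}_{\mathbf{v}}(u) = \widehat{\Gamma}_{\mathbf{v}}(v)$ and therefore $l_1 \geq 1$ or $l_2 \geq 1$. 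For type-3 $F = \sigma(\widehat{t})$ with $\Gamma$ using two colors, I will use condition \ref{item:Unbreak1} of Theorem \ref{theorem:std} (which forces $G[\gamma(\widehat{t})]$ to be connected through $\gamma(\widehat{t}) \setminus \sigma(\widehat{t})$) together with condition \ref{item:Y4} of Definition \ref{def:meaningY} to exhibit a multi-color edge of the extending witness $\Upsilon'$ with at least one endpoint outside $\sigma(\widehat{t})$: either an edge within $\gamma(\widehat{t}) \setminus \sigma(\widehat{t})$ separates two colors of $\Upsilon'$, or this set is monochromatic in a color that disagrees with some $u \in \sigma(\widehat{t})$, whose neighbor in $\gamma(\widehat{t}) \setminus \sigma(\widehat{t})$ then provides the edge. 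Such an edge is counted among the $l_i$-budget of the corresponding $y$-value but not among $l_1', l_2'$ (which only track edges inside $\sigma(\widehat{t})$), forcing $l_1 + l_2 \geq 1$.

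The main substantive content, and the main obstacle, is global unbreakability. Given any witness $\Upsilon$ for $\text{{\ttfamily aHP}}[\mu,l_1,l_2]=1$, I will first invoke Lemma \ref{claim:semiequivalent} to extend it to a witness $\Upsilon' : \gamma(t) \to [k]_0$ of $y[t,\Upsilon^{\sigma},\mathbf{v},\mu,l_1,l_2]=1$, and set $X = E(G[\widehat{\Upsilon'}_{\mathbf{v}}^{-1}(V_1)]) \cup E(G[\widehat{\Upsilon'}_{\mathbf{v}}^{-1}(V_2)])$. Condition \ref{item:Y3} yields $|X| \leq l_1 + l_2 \leq k$, while the equivalence in condition \ref{item:Y4} ensures that every connected component (``chunk'') of $G[\gamma(t)] \setminus X$ is monochromatic under $\Upsilon'$. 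Since $G[\gamma(t)]$ is connected (by condition \ref{item:Unbreak1}) and $|X| \leq k$, there are at most $k+1$ chunks $C_1, \ldots, C_r$.

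For each chunk $C_j$ I will consider the separation $(C_j \cup N(C_j),\, \gamma(t) \setminus C_j)$ of $G[\gamma(t)]$, whose order $|N(C_j)|$ is at most $|X| \leq k$ because every edge leaving $C_j$ must lie in $X$. Condition \ref{item:Unbreak2} of Theorem \ref{theorem:std} then yields, for each $j$, that either $|C_j \cap \beta(t)| \leq \eta$ or $|C_j \cap \beta(t)| \geq |\beta(t)| - \eta - k$. A short case analysis finishes: if some chunk $C_{j^*}$ is ``big'', taking $c^*$ to be its $\Upsilon'$-color gives $|\beta(t) \setminus \Upsilon^{-1}(c^*)| \leq \eta + k$; otherwise every chunk is ``small'', so $|\beta(t)| \leq (k+1)\eta$ and pigeonholing over the at-most $k+1$ colors of chunks, the plurality color $c^*$ leaves at most $k\eta$ vertices of $\beta(t)$ uncovered. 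In both cases at most $q = (\eta + k)k$ vertices of $\beta(t)$ fail to receive color $c^*$, establishing $(q,k)$-unbreakability of $\Upsilon$ and hence global unbreakability of the constructed instance.
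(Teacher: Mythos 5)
Your proof is correct and follows the same three-property verification as the paper. The local unbreakability and connectivity arguments match the paper's almost exactly (for type-3 connectivity the paper takes $u \in \Gamma^{-1}(i)$, $v \in \Gamma^{-1}(j)$ and finds a color-change edge along a $u$--$v$ path in $G[\gamma(\widehat{t})] \setminus E(G[\sigma(\widehat{t})])$, which is connected by condition~\ref{item:Unbreak1}; your two-case split on whether $\gamma(\widehat{t}) \setminus \sigma(\widehat{t})$ is monochromatic arrives at the same conclusion). You also correctly invoke Lemma~\ref{claim:semiequivalent} to produce the extension $\Upsilon'$, which is what the paper intends even though its text cites Lemma~\ref{claim:equivalent} there.

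The genuinely different piece is global unbreakability. The paper's Claim~\ref{claim:favourable:gb} takes the color class $\Upsilon'^{-1}(i)$ itself as one side of a separation of $G[\gamma(t)]$ (of order $\leq k$ because the $\Upsilon'$-cut has at most $k$ edges) and applies the $(\eta,k)$-unbreakability of $\beta(t)$; the ``otherwise'' case bounds each color class by $\eta+k$ and sums over $k$ remaining colors. You instead remove the cut edges $X$, observe by condition~\ref{item:Y4} that each resulting chunk is $\Upsilon'$-monochromatic, and build the separation around a chunk $(C_j \cup N(C_j), \gamma(t) \setminus C_j)$, then split into a big-chunk case (yielding $\eta+k$) and an all-small case (yielding $k\eta$ by a plurality/pigeonhole bound). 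Both bounds are at most $q=(\eta+k)k$, so the conclusions agree. Your variant has a small advantage: your separation manifestly covers $\gamma(t)$, whereas the one written in the paper's Claim~\ref{claim:favourable:gb} restricts both sides to $\beta(t)$ and so is technically not a separation of $G[\gamma(t)]$ as stated (a typo you silently fix). The only rough spot in your sketch is the phrase ``plurality color,'' which is unnecessary: picking the color of \emph{any} chunk already leaves at most $k$ other chunks, each with $\leq \eta$ vertices in $\beta(t)$, giving the $k\eta$ bound directly.
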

\begin{proof}
Let us verify that each of the three properties of a favorable instance is satisfied.

\begin{itemize}
    \item \textbf{Local Unbreakability:} Let us choose an arbitrary $F \in E(H)$. If $F$ is a type-1 or a type-2 hyperedge, then since $|F|\leq 2$, we have that local unbreakability is trivially satisfied. Otherwise, if $F$ is a type-3 hyperedge, then the satisfaction of local unbreakability directly follows from the construction of $f_F$.
		
    \item \textbf{Connectivity:} Choose an arbitrary $F \in E(H)$ along with a tuple $(\Gamma,\mu,l_1,l_2)$ in the domain of $f_F$ such that $f_F(\Gamma,\mu,l_1,l_2)=1$. If $F$ is a type-1 hyperedge, then connectivity trivially holds. If $F$ is a type-2 hyperedge, then connectivity follows from the construction of $f_F$. Indeed, to see this, let us denote $F=\{u,v\}$. Then, if $\Gamma(u)\neq\Gamma(v)$, by the second and last cases in the definition of $f_F$, we deduce that $\widehat{\Gamma}_{\mathbf{v}}(u) = \widehat{\Gamma}_{\mathbf{v}}(v)$, else we contradict the supposition that $f_F(\Gamma,\mu,l_1,l_2)=1$. Then, connectivity directly follows from the third and fourth cases.
		
Now, suppose that $F = \sigma(\widehat{t})$ is a type-3 hyperedge, and say $\Gamma : F \to [k]_0$ is such that there exist $i,j \in [k]_0$, $i \neq j$, satisfying $|{\Gamma}^{-1}(i)|>0$ and $|{\Gamma}^{-1}(j)| >0$. We need to show that $l_1 + l_2 \geq 1$. Since $f_F(\Gamma, \mu,l_1,l_2)=1$, it holds that $y[\widehat{t},\Gamma, \mu + \mu', l_1 + l_1', l_2 + l_2']=1$, where $\mu'$, $l_1'$ and $l_2'$ are as defined at the construction of $f_F$. Let $\Upsilon : \gamma(\widehat{t}) \to [k]_0$ denote some witnessing coloring for this condition. Since $(T,\beta)$ is a \emph{highly connected tree decomposition}, the Property~\ref{item:Unbreak1} of such a decomposition implies that $G^* = G[\gamma(\widehat{t})] \setminus E(G[\sigma(\widehat{t})])$ is connected and that every vertex in $\sigma(\widehat{t}$ has at least one vertex in $V(G^*)$ that is its neighbor in $G[\gamma(\widehat{t}]$. In particular, every two vertices in $\sigma(\widehat{t})$ are connected by a path in $G^{*}$ (observe that $V(G^*) = \gamma(\widehat{t})$ as we have only edges are discarded when $G[\gamma(\widehat{t})]$ is modified to be $G^*$). Let $u \in {\Gamma}^{-1}(i)$ and $v  \in {\Gamma}^{-1}(j)$. Note that $u \neq v$ and $i \neq j$. Since $u$ and $v$ are connected by a path in $G^{*}$, we derive that $G^*$ has an edge $e$ such that
$$e \in \left( \bigcup\limits_{c,d \in [k]_0, c \neq d} E({\Upsilon}^{-1}(c), {\Upsilon}^{-1}(d))\right)  \setminus E(G[\sigma(t')]).$$ 
Recall that 
$\displaystyle{\bigcup\limits_{c,d \in [k]_0, c \neq d} E({\Upsilon}^{-1}(c), {\Upsilon}^{-1}(d)) = E(G[\widehat{\Upsilon}_{\mathbf{v}}^{-1}(V_1)]) \cup E(G[\widehat{\Upsilon}_{\mathbf{v}}^{-1}(V_2)])}$. Therefore, we have that $e \in (E(G[\widehat{\Upsilon}_{\mathbf{v}}^{-1}(V_1)]) \cup E(G[\widehat{\Upsilon}_{\mathbf{v}}^{-1}(V_2)])) \setminus E(G[\sigma(\widehat{t})])$. Thus, by the inductive hypothesis, $l_1 + l_2 \geq 1$.
    
\item \textbf{Global Unbreakability:} Suppose that $\text{{\ttfamily aHP}}[\mu,l_1,l_2] =1$. Then, by Lemma \ref{claim:equivalent}, there exists $\Upsilon'  : \gamma(t) \to [k]_0$ satisfying the properties listed in that lemma. From here, we get that $\sum\limits_{\substack{i,j \in [k]_0, i < j}}|E({{\Upsilon'}}^{-1}(i) , {{\Upsilon'}}^{-1}(j))| \leq l_1 + l_2 \leq k_1 + k_2 \leq k$. We argue that ${\Upsilon'}|_{ \beta(t)}$ is a witnessing coloring for global unbreakability, that is, this function is $(q,k)$-unbreakable. In this context, we remind that $q=(\eta+k)k$. To prove our argument, we first prove the following claim.
    
   \begin{claim}\label{claim:favourable:gb}
   Suppose that there exists $i \in [k]_0$ such that $|{\Upsilon'}^{-1} (i) \cap \beta(t)| > \eta + k$. Then, $\sum\limits_{\substack{j \in [k]_0, i\neq j}} |{\Upsilon'}^{-1}(j) \cap \beta(t)| \leq \eta + k$.   
   \end{claim}
   
{\noindent\bf Proof.}
   Suppose that the claim is false. Then, both $|{\Upsilon'}^{-1} (i) \cap \beta(t)| > \eta + k$ and $\sum\limits_{j \in [k]_0, i\neq j} |{\Upsilon'}^{-1}(j)$ $\cap \beta(t)| > \eta + k$. Thus, 
$$\Bigg(X={\Upsilon'}^{-1} (i) \cap \beta(t), Y= \bigg(\bigcup\limits_{\substack{j \in [k]_0, i\neq j}} {\Upsilon'}^{-1}(j) \cap \beta(t)\bigg) \cup \delta({{\Upsilon'}^{-1} (i) \cap \beta(t)})\Bigg) $$ is a separation of order at most $k$ of $G[\gamma(t)]$ as we have already shown that $$\sum\limits_{\substack{i,j \in [k]_0, i \leq j}}|E({{\Upsilon'}}^{-1}(i) , {{\Upsilon'}}^{-1}(j))| \leq l_1 + l_2 \leq k_1 + k_2 \leq k.$$ Moreover, $|(X \setminus Y) \cap \beta(t)| > \eta$ and $|(Y \setminus X) \cap \beta(t)| > \eta$, which contradicts that $\beta(t)$ is $(\eta,k)$-unbreakable in $G[\gamma(t)]$. $\diamond$
   \medskip
	
Thus, if there exist $i\in[k]_0$ as defined in  Claim \ref{claim:favourable:gb}, then we are done. That is, we conclude that ${\Upsilon'}|_{\beta(t)}$ is a $(q,k)$-unbreakable. Otherwise, for all $i \in [k]_0$, it holds that $|{\Upsilon'}^{-1}(i)| \leq \eta + k$. In particular, for any $i \in [k]_0$, $\sum\limits_{\substack{j\in [k]_0, i \neq j}}|{\Upsilon'}^{-1}(j)| \leq (\eta + k)k = q$. Thus, we again conclude that ${\Upsilon'}_{|\beta(t)}$ is $(q,k)$-unbreakable.\qed
\end{itemize}
\end{proof}

Finally, we turn to address the statement complementary to the one of Lemma \ref{claim:semiequivalent}.

\begin{lemma}\label{claim:equivalent}
For all $\mu\in[n]$, \lfin\ and \lsin, if  $y[t, {\Upsilon}^{\sigma}, \mathbf{v},\mu,l_1,l_2]=1$, then $\text{{\ttfamily aHP}}[\mu,l_1,$ $l_2] = 1$.
\end{lemma}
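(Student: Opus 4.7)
The plan is to reverse the construction of Lemma~\ref{claim:semiequivalent}: starting from a witnessing coloring $\Upsilon:\gamma(t)\to[k]_0$ for $y[t,{\Upsilon}^{\sigma},\mathbf{v},\mu,l_1,l_2]=1$, I will restrict $\Upsilon$ to $\beta(t)$ to obtain a candidate $\Upsilon_H={\Upsilon}|_{\beta(t)}$ for a witnessing coloring of $\text{{\ttfamily aHP}}[\mu,l_1,l_2]=1$, and then choose an appropriate budget split $\{({\mu}^F,l_1^F,l_2^F)\}_{F\in E(H)}$ so that $f_F(\Upsilon_H|_F,{\mu}^F,l_1^F,l_2^F)=1$ for every hyperedge $F$.

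The concrete split is read off from $\Upsilon$ and $\widehat{\Upsilon}_{\mathbf{v}}$ as follows. For a type-1 hyperedge $F=\{v\}$, set $l_1^F=l_2^F=0$ and ${\mu}^F=[\widehat{\Upsilon}_{\mathbf{v}}(v)=V_1]$; Condition~\ref{item:Y2} of Definition~\ref{def:meaningY} guarantees the correct behaviour on $A$ and $B$, so the corresponding case of $f_F$ evaluates to $1$. For a type-2 hyperedge $F=\{u,v\}$, set ${\mu}^F=0$ and, using Condition~\ref{item:Y4}, read off whether $\widehat{\Upsilon}_{\mathbf{v}}(u)=\widehat{\Upsilon}_{\mathbf{v}}(v)$ from whether $\Upsilon(u)\neq\Upsilon(v)$; set $l_i^F=1$ exactly when the edge is monochromatic of colour $V_i$ under $\widehat{\Upsilon}_{\mathbf{v}}$, and $0$ otherwise. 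For a type-3 hyperedge $F=\sigma(\widehat{t})$ corresponding to a child $\widehat{t}$, set ${\mu}^F=|\widehat{\Upsilon}_{\mathbf{v}}^{-1}(V_1)\cap(\gamma(\widehat{t})\setminus F)|$ and $l_i^F=|E(G[\widehat{\Upsilon}_{\mathbf{v}}^{-1}(V_i)\cap\gamma(\widehat{t})])|-l_i'$, where $l_i'$ is the quantity from the construction of $f_F$. With these definitions, the telescoping $\sum_F {\mu}^F=\mu$ follows from Condition~\ref{item:Y1} and $|\widehat{\Upsilon}_{\mathbf{v}}^{-1}(V_1)\cap\gamma(t)|=|\widehat{\Upsilon}_{\mathbf{v}}^{-1}(V_1)\cap\beta(t)|+\sum_{\widehat{t}}|\widehat{\Upsilon}_{\mathbf{v}}^{-1}(V_1)\cap(\gamma(\widehat{t})\setminus\sigma(\widehat{t}))|$, while $\sum_F l_i^F\le l_i$ follows from Condition~\ref{item:Y3} together with the fact that every monochromatic edge of $G[\gamma(t)]$ is counted in exactly one hyperedge (edges inside $\beta(t)$ by type-2, edges with at least one endpoint in $\gamma(\widehat{t})\setminus\sigma(\widehat{t})$ by the type-3 hyperedge for $\widehat{t}$).

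Verifying $f_F(\Upsilon|_F,{\mu}^F,l_1^F,l_2^F)=1$ for type-3 hyperedges requires two things: that $y[\widehat{t},\Upsilon|_F,{\mu}^F+\mu',l_1^F+l_1',l_2^F+l_2']=1$, and that $\Upsilon|_F$ is $(3k^2,k)$-unbreakable. The first is immediate by induction: the restriction ${\Upsilon}|_{\gamma(\widehat{t})}$ inherits Conditions~\ref{item:Y1}--\ref{item:Y4} from $\Upsilon$ with the parameters chosen above, so it witnesses the corresponding $y$-value. The second point is the main obstacle, because the definition of $y$ does not itself enforce unbreakability of $\Upsilon$ on adhesions. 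To handle it, I will combine Property~3 of Theorem~\ref{theorem:std} (namely $(2k,k)$-unbreakability of $\sigma(\widehat{t})$ in $G[\gamma(t)]$) with Conditions~\ref{item:Y3}--\ref{item:Y4}, which together bound the number of bichromatic edges of $\Upsilon$ by $l_1+l_2\le k$. For each colour $i$, setting $X_i=\Upsilon^{-1}(i)\cap\gamma(t)$ together with the boundary $W_i$ of bichromatic edges incident to $X_i$ (so $|W_i|\le k$) yields a separation of order at most $k$, hence for every $i$ either $|\Upsilon^{-1}(i)\cap\sigma(\widehat{t})|\le 2k$ or $|\sigma(\widehat{t})\setminus\Upsilon^{-1}(i)|\le 3k$. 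In the latter case we are done; in the former, picking the colour $i^*$ maximising $s_{i^*}=|\Upsilon^{-1}(i^*)\cap\sigma(\widehat{t})|$ gives $|\sigma(\widehat{t})|\le(k+1)s_{i^*}$, so $\sum_{j\neq i^*}s_j\le k\cdot s_{i^*}\le 2k^2\le 3k^2$, establishing the required $(3k^2,k)$-unbreakability.

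Putting these pieces together, $\Upsilon_H$ witnesses $\text{{\ttfamily aHP}}[\mu,l_1,l_2]=1$, which completes the proof. The bulk of the writing will be the bookkeeping of the three sums (for $\mu$ and for $l_1,l_2$) and the per-hyperedge case analysis checking each clause in the definitions of $f_F$; the conceptually delicate step is the unbreakability argument for $\Upsilon|_{\sigma(\widehat{t})}$ sketched above.
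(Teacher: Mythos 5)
Your proposal is correct and follows essentially the same route as the paper: restrict $\Upsilon$ to $\beta(t)$, choose exactly the same budget split per hyperedge type, verify the $f_F$ values clause by clause, and handle the delicate $(3k^2,k)$-unbreakability of $\Upsilon|_{\sigma(\widehat{t})}$ by constructing a separation of $G[\gamma(t)]$ of order at most $k$ from a color class and invoking the $(2k,k)$-unbreakability of $\sigma(\widehat{t})$ from Property~3 of Theorem~\ref{theorem:std}. The paper packages the separation slightly differently (it takes $X=\Upsilon^{-1}(i)$ and absorbs $\delta(\Upsilon^{-1}(i))$ into $Y$, yielding the dichotomy ``$s_i\le 3k$ or $\sum_{j\neq i}s_j\le 3k$'' and then a two-case analysis identical in spirit to yours), but this is a cosmetic difference of bookkeeping, not of idea.
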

\begin{proof}
Fix some $\mu\in[n]$, \lfin\ and \lsin\ such that $y[t, {\Upsilon}^{\sigma}, \mathbf{v},\mu,l_1,l_2]=1$. Our objective is to show that $\text{{\ttfamily aHP}}[\mu,l_1,l_2] = 1$. To this end, let $\Upsilon$ be a witnessing coloring for $y[t, {\Upsilon}^{\sigma}, \mathbf{v},\mu,l_1,l_2]=1$. We would like to prove that ${\Upsilon}|_{ \beta(t)}$ is a witnessing coloring for  $\text{{\ttfamily aHP}}[\mu,l_1,l_2] = 1$, which would complete the proof of the lemma. To do so, we proceed as follows.

First, for any hyperedge $F \in E(H)$, let us define ${\mu}^F$, ${l_1}^F$ and ${l_2}^F$ as follows. 

\begin{itemize}
    \item {\bf If $F$ is a type-1 hyperedge:} Set ${\mu}^F = 1$ if $\widehat{\Upsilon}_{\mathbf{v}}(F) = V_1$, and ${\mu}^F = 0$ otherwise. Set $l_1^F =0$ and $l_2^F =0$.
    
    \item {\bf If $F=\{u,v\}$ is a type-2 hyperedge:} Set ${\mu}^F = 0$. 
    If $\widehat{\Upsilon}_{\mathbf{v}}(u)  \neq \widehat{\Upsilon}_{\mathbf{v}}(v)$ and $\Upsilon(u)=\Upsilon(v)$, set ${l_1}^F ={l_2}^F=0$. Otherwise, if $\widehat{\Upsilon}_{\mathbf{v}}(u)  = \widehat{\Upsilon}_{\mathbf{v}}(v) = V_1$, set ${l_1}^F =1$ and ${l_2}^F=0$, and if $\widehat{\Upsilon}_{\mathbf{v}}(u) = \widehat{\Upsilon}_{\mathbf{v}}(v) = V_2$, set ${l_1}^F =0$ and ${l_2}^F=1$. The case where $\widehat{\Upsilon}_{\mathbf{v}}(u)  \neq \widehat{\Upsilon}_{\mathbf{v}}(v)$ and $\Upsilon(u)\neq \Upsilon(v)$ cannot arise. Indeed, since $\Upsilon$ is a witnessing coloring for $y[t, {\Upsilon}^{\sigma}, \mathbf{v},\mu,l_1,l_2]=1$, we have that $\displaystyle{\bigcup\limits_{\substack{i,j \in [k]_0, i \neq j}} E({\Upsilon}^{-1}(i),{\Upsilon}^{-1}(j)) = E(G[\widehat{\Upsilon}_{\mathbf{v}}^{-1}(V_1)]) \cup E(G[\widehat{\Upsilon}_{\mathbf{v}}^{-1}(V_2)])}$.
    
    \item {\bf If $F$ is a type-3 hyperedge:} Denote $F=\sigma(\widehat{t})$, where $\widehat{t}$ is a child of $t$ in T. Set ${\mu}^F =|\widehat{\Upsilon}_{\mathbf{v}}^{-1}(V_1) \cap (\gamma(\widehat{t}) \setminus \sigma(\widehat{t}))|$, $l_1^{F}=|E(G[\widehat{\Upsilon}_{\mathbf{v}}^{-1}(V_1) \cap \gamma(\widehat{t})])| -|E(G[\widehat{\Upsilon}_{\mathbf{v}}^{-1}(V_1) \cap \sigma(\widehat{t})])|$ and $l_2^{F}=|E(G[\widehat{\Upsilon}_{\mathbf{v}}^{-1}(V_2) \cap \gamma(\widehat{t})])| - |E(G[\widehat{\Upsilon}_{\mathbf{v}}^{-1}(V_2) \cap \sigma(\widehat{t})])|$.
\end{itemize}

\noindent Let us proceed by proving three claims that would together imply that ${\Upsilon}|_{ \beta(t)}$ is a witnessing coloring for  $\text{{\ttfamily aHP}}[\mu,l_1,l_2] = 1$.
\begin{claim}\label{claim:equivalent:1}
Let $\widehat{t}$ be a child of $t$ in $T$, and let $i \in [k]_0$ be such that $|{\Upsilon}^{-1}(i) \cap \sigma (\widehat{t})| > 3k$. Then, $\sum\limits_{\substack{j \in [k]_0, i \neq j}}|{\Upsilon}^{-1}(j)\cap \sigma(\widehat{t})| \leq 3k $.
\end{claim}

{\noindent\bf Proof.}
Suppose, by way of contradiction, that the claim is false.  That is, we have that both $|{\Upsilon}^{-1}(i) \cap \sigma(\widehat{t})| > 3k$ and $\sum\limits_{\substack{j \in [k]_0, i \neq j}} |{\Upsilon}^{-1}(j)\cap \sigma(\widehat{t})| > 3k$. Consider the separation $(X,Y)$ of $G[\gamma(t)]$, where $X={\Upsilon}^{-1}(i)$ and $Y= (\gamma(t) \setminus {\Upsilon}^{-1}(i)) \cup \delta({\Upsilon}^{-1}(i))$. Observe that $X \cap Y = \delta({\Upsilon}^{-1}(i))$. Since $\Upsilon$ is a witnessing coloring for $y[t, {\Upsilon}^{\sigma}, \mathbf{v},\mu,l_1,l_2]$, we have that
$$\bigcup\limits_{\substack{i,j \in [k]_0, i \neq j}} E({\Upsilon}^{-1}(i),{\Upsilon}^{-1}(j)) = E(G[\widehat{\Upsilon}_{\mathbf{v}}^{-1}(V_1)]) \cup E(G[\widehat{\Upsilon}_{\mathbf{v}}^{-1}(V_2)])$$ and $|E(G[\widehat{\Upsilon}_{\mathbf{v}}^{-1}(V_1)]) \cup E(G[\widehat{\Upsilon}_{\mathbf{v}}^{-1}(V_2)])| \leq l_1+l_2 \leq k_1+k_2 \leq k $. Therefore, $|\delta({\Upsilon}^{-1}(i))| \leq k$, and thus the order of separation $(X,Y)$ is at most $k$. Moreover, since $|{\Upsilon}^{-1}(i) \cap \sigma(\widehat{t})| > 3k$, we have that $|(X \setminus Y) \cap \sigma(\widehat{t})| > 3k -  k =2k$, and since $\sum\limits_{\substack{j \in [k]_0, i \neq j}} |{\Upsilon}^{-1}(j)\cap \sigma(\widehat{t})| > 3k$, we also have that $|(Y \setminus X) \cap \sigma(\widehat{t})| > 3k$. This implies that $\sigma(\widehat{t})$ is not $(2k,k)$-unbreakable in $G[\gamma(t)]$, which mean that $\sigma(\widehat{t})$ is not $(2k,k)$-unbreakable in $G[\gamma(\mathrm{parent}(\widehat{t})]$. This is a contradiction to the fact that $(T, \beta)$ is a highly connected tree decomposition---specifically, it should satisfy Property \ref{item:Unbreak3} in Theorem \ref{theorem:std}. $\diamond$

\medskip
\noindent Having Claim \ref{claim:equivalent:1} at hand, we now verify that each function $f_F$ assigns 1 to the required~tuple.

\begin{claim}\label{claim:equivalent:2}
For any $F \in E(H)$, $f_F({\Upsilon}|_{F}, {\mu}^F, l_1^F, l_2^F) =1$.
\end{claim}
{\noindent\bf Proof.}
First, note that since $\Upsilon$ be a witnessing coloring for $y[t, {\Upsilon}^{\sigma}, \mathbf{v},\mu,l_1,l_2]=1$, we have that $\Upsilon \subseteq {\Upsilon}_{\sigma}$, $A \cap \gamma(t) \subseteq \widehat{\Upsilon}_{\mathbf{v}}^{-1}(V_1)$ and $B \cap \gamma(t) \subseteq \widehat{\Upsilon}_{\mathbf{v}}^{-1}(V_2)$. Thus, from the construction of a type-1 hyperedge $F$ and the corresponding function $f_F$ with respect to $HP(k_1,k_2,n,\eta,q,H,{(f_F)}_{F \in E(H)})$, it is clear that $f_F({\Upsilon}|_{F},{\mu}^F,l_1^F,l_2^F)=1$.
Second, suppose $F$ is a type-2 hyperedge. The specifications of $f_F$, together with our  definition of ${\mu}^F,l_1^F$ and $l_2^F$, directly implies that $f_F({\Upsilon}|_{F},{\mu}^F,l_1^F,l_2^F)=1$.

Third, suppose that $F$ is a type-3 hyperedge, and denote $F=\sigma(t_i)$ for some $t_i$ that is a child of $t$ in $T$. Note that $y[t_i, {\Upsilon}|_{F}, \mathbf{v}, {\mu}^{F} + \mu', l_1^F + l_1', l_2^{F} + l_2']=1$ because ${\Upsilon}|_{\gamma(t_i)}$ is a witnessing coloring for this equality, where $\mu' = |\widehat{\Upsilon}_{\mathbf{v}}^{-1}(V_1) \cap \sigma(\widehat{t})|$, $l_1'=|E(G[\widehat{\Upsilon}_{\mathbf{v}}^{-1}(V_1) \cap \sigma(\widehat{t})])|$ and $l_2'=|E(G[\widehat{\Upsilon}_{\mathbf{v}}^{-1}(V_2) \cap \sigma(\widehat{t})])|$.
We now need to show that ${\Upsilon}|_{F}$ is $(3{k}^2,k)$-unbreakable, as then we would be able to conclude that $f_F({\Upsilon}|_{F}, {\mu}^F, l_1^F, l_2^F) =1$. By Claim \ref{claim:equivalent:1}, if there exists $i \in [k]_0$ such that $|{\Upsilon}^{-1}(i) \cap \sigma (\widehat{t})| > 3k$, then we deduce that ${\Upsilon}|_{\sigma(\widehat{t})}$ is $(3{k}^2,k)$-unbreakable. Otherwise, for all 
$i \in [k]_0$, $|{\Upsilon}^{-1}(i) \cap \sigma (\widehat{t})| \leq 3k$. Hence, for any $i\in[k]_0$, $\sum\limits_{\substack{j \in [k]_0, i \neq j}}|{\Upsilon}^{-1}(j)\cap \sigma(\widehat{t})| \leq 3{k}^2$. Thus, we have proved that ${\Upsilon}_{|F}$ is $(3{k}^2,k)$-unbreakable. $\diamond$

\medskip
\noindent Finally, we present our third claim.

\begin{claim}\label{claim:equivalent:3}
$\mu = \sum\limits_{F \in E(H)} {\mu}^F$, $ \sum\limits_{F \in E(H)} l_1^F \leq l_1$ and $\sum\limits_{F \in E(H)} l_2^F \leq l_2$.
\end{claim}
{\noindent\bf Proof.}
By the property of $(T,\beta)$ being a tree decomposition, for any two children $t_i$ and $t_j$ of $t$ in $T$, $\gamma(t_i) \cap \gamma(t_j) \subseteq \beta(t)$, and also from definition, $\sigma(t_i) \subseteq \beta(t)$ for any child $t_i$ of $t$. 
Now, note that $\mu = |\widehat{\Upsilon}_{\mathbf{v}}^{-1}(V_1)|$. Thus, to show that $\mu = \sum_{F \in E(H)} {\mu}^F$, it is sufficient to show that $|\widehat{\Upsilon}_{\mathbf{v}}^{-1}(V_1)| = \sum_{F \in E(H)} {\mu}^F$. However, keeping the above argument in mind, the claim that $|\widehat{\Upsilon}_{\mathbf{v}}^{-1}(V_1)| = \sum_{F \in E(H)} {\mu}^F$ directly follows from the satisfaction of the three following conditions. We remark that the satisfaction of these conditions is a direct consequence of the supposition that $\Upsilon$ be a witnessing coloring for $y[t, {\Upsilon}^{\sigma}, \mathbf{v},\mu,l_1,l_2]=1$, together with our definition of the values ${\mu}^F, l_1^F$ and $l_2^F$.
\begin{enumerate}
    \item For any type-1 hyperedge $F$, we have that ${\mu}^F = 1$ only if $\widehat{\Upsilon}_{\mathbf{v}}(F) = V_1$. In particular, $\sum\limits_{\substack{F\in E(H) \text{ of type-1}}} {\mu}^F = |\widehat{\Upsilon}_{\mathbf{v}}^{-1}(V_1) \cap \beta(t)|.$
    \item For any type-2 hyperedge $F$, ${\mu}^F = 0$. Thus, $\sum\limits_{\substack{F\in E(H) \text{ of type-2}}} {\mu}^F = 0$.
    \item For any type-3 hyperedge $F$, ${\mu}^F =|\widehat{\Upsilon}_{\mathbf{v}}^{-1}(V_1) \cap (\gamma(t_i) \setminus \sigma(t_i))|$. 
\end{enumerate}

\medskip

\noindent Similarly, let us observe that $|E(G[\widehat{\Upsilon}_{\mathbf{v}}^{-1}(V_1)])| \leq l_1$. Thus, to show that $\sum_{F \in E(H)} l_1^F \leq l_1$, it is sufficient to show that $\sum_{F \in E(H)} l_1^F \leq |E(G[\widehat{\Upsilon}_{\mathbf{v}}^{-1}(V_1)])|$. However, the latter inequality that directly follows from the satisfaction of all of the following conditions.

\begin{enumerate}
    \item For any type-1 hyperedge $F$, $l_1^F = 0$. Thus, $\sum\limits_{\substack{F\in E(H) \text{ of type-1}}} l_1^F = 0$.
    \item For any type-2 hyperedge $F =\{u,v\}$, ${l_1}^F = 1$ only if $\widehat{\Upsilon}_{\mathbf{v}}(u)  = \widehat{\Upsilon}_{\mathbf{v}}(v) = V_1$. In particular, $\sum\limits_{\substack{F\in E(H) \text{ of type-1}}} l_1^F = |E(G[\widehat{\Upsilon}_{\mathbf{v}}^{-1}(V_1)]) \cap E(G[\beta(t)])|$.
    \item For any type-3 hyperedge $F$, $|E(G[\widehat{\Upsilon}_{\mathbf{v}}^{-1}(V_1) \cap (\gamma(t_i)\setminus \sigma(t_i))])| \leq l_1^{F}$.
\end{enumerate}

\medskip

\noindent Symmetrically, $\sum_{F \in E(H)} l_2^F \leq l_2$. This concludes the proof of the claim. $\diamond$

\medskip

\noindent As we have proved Claims \ref{claim:equivalent:2} and \ref{claim:equivalent:3}, we derive that ${\Upsilon}|_{\beta(t)}$ is a witnessing coloring for {\ttfamily aHP}$[\mu,l_1,l_2]=1$. This concludes the proof of the lemma.
\end{proof}

Recall that we have argued that to prove Theorem \ref{thm:hp}, it is sufficient to show that the current set of values $y[t,{\Upsilon}^{\sigma},\mathbf{v},\mu,l_1,l_2]$ can be computed in time $2^{k^{\OO(1)}} n^{\OO(1)}$. Here, $n$ refers to $|V(G)|$. By Lemmas \ref{claim:semiequivalent} and \ref{claim:equivalent}, this set of values can be derived from the solution of $\hp(k_1,k_2,n, \eta, q, H,$ $\{f_F\}|_{F \in E(H)})$. Since $\hp(k_1,k_2,n, \eta, q, H,$ $\{f_F\}|_{F \in E(H)})$ is a favorable instance of \hp\ (by Lemma \ref{claim:favourable}), the algorithm given by Theorem \ref{thm:hp} solves it in time $2^{\OO(\min (k,q) \log (k+q))} d^{\OO(k^2)} |E(H)|^{\OO(1)}=2^{k^{\OO(1)}} n^{\OO(1)}$. 

\section{Solving Favorable Instances of \hp}\label{sec:hp}

Recall the problem statement of \hpfull (\hp){} and the definition of a \emph{favourable} instance of \hp{} from Section \ref{sec:abcbjb}.
In this section, we prove Theorem \ref{thm:hp}.
For this purpose, let $HP(k_1,k_2,b,d,q,H,(f_F)_{F \in E(H)})$ be a favorable instance of \hp. We aim to show how to compute $\text{{\ttfamily aHP}}[\mu,l_1,l_2]$ in time $2^{\OO(\min (k,q) \log (k+q))} \allowbreak d^{\OO(k^2)} \allowbreak m^{\OO(1)}$ for an arbitrarily fixed choice of $0 \leq \mu \leq b$, $0 \leq l_1 \leq k_1$ and $0 \leq l_2 \leq k_2$. Since there are only $(b+1)(k_1+1)(k_2+1)$ choices for such $\mu, l_1$ and $l_2$, we would thus indeed derive the correctness of Theorem \ref{thm:hp}.

\subsection{Classifying Hyperedges}

We begin by analyzing the structure of the input instance $HP(k_1,k_2,b,d,q,H,(f_F)_{F \in E(H)})$ under the assumption that $\text{{\ttfamily aHP}}[\mu,l_1,l_2]=1$. Recall that $k =k_1 +k_2$. Then, by the property of global unbreakability, there exists a witnessing coloring $\Upsilon : V(H) \to [k]_0$ such that $\sum\limits_{\substack{j \in [k]_0, j\neq i}} |{\Upsilon}^{-1}(j)| \leq q$ for some index $i \in [k]_0$. Without loss of generality, suppose that $i=0$ is such an index, that is, $\sum_{j \in [k]}|{\Upsilon}^{-1}(j)| \leq q$.  In the forthcoming arguments, we aim to elucidate the behavior of the function that is the restriction of the witnessing coloring $\Upsilon$ with respect to each hyperedge of the hypergraph $H$. As we  see later, we may not be able to find the restriction of $\Upsilon$ on every hyperedge, but we would be able to assign a set of colorings to each hyperedge and prove that one of them is exactly the restriction of $\Upsilon$ to that hyperedge. We will then use this information together with dynamic programming procedures to compute {\ttfamily aHP}$[\mu,l_1,l_2]$. The difficulty lies in the fact that if the set of colorings (with the above-mentioned property) that we would like to have with respect to each hyperedge is arbitrary, the efficiency of our dynamic programming based procedures would not be guaranteed. More precisely, at any given point of the computation, when we choose some  coloring for a specific hyperedge using the respective set of colorings of that hyperedge, we would like to be able to automatically assume that this coloring together with all previously chosen colorings should form one coherent coloring that is compatible with a global witnessing coloring. In order to achieve such a property, we perform several phases of color coding of the hypergraph (here, these phases of color coding are hidden under a layer of derandomization tools). These phases would exploit the properties of a favourable instance, and eventually highlight a ``nice'' structure that would help us achieve our goal.

To proceed with the implementation of the above-mentioned idea, we first categorize the hyperedges of $H$ into the following types, based on the witnessing coloring $\Upsilon$. In this context, we remind that the notation $f(A') =b$ indicates that for all $a \in A'$, it holds that $f(a)=b$ (see Section \ref{sec:prelims}). 
\begin{itemize}
\item Let $E_b = \{F\in E(H) : \Upsilon(F) = 0\}$. Here, `b' stand for {\em big}.
\item For each $i \in [k]$, let $E_{s_i}=\{F\in E(H) : \Upsilon(F)=i \}$. Here, `s' stands for {\em small}.
\item Let $E_m =\{F\in E(H) : \text{ there exist } u,v \in F \text{ such that } \Upsilon(u) \neq \Upsilon(v) \}$. Here, `m' stands for {\em multichromatic}.
\end{itemize}

Observe that each hyperedge $F \in E(H)$ belongs to exactly one of the sets $E_b, E_m, E_{s_1}, \ldots,$ $E_{s_k}$. Furthermore, let $E_{s_i}'$ denote the edge set of some arbitrary spanning forest of the hypergraph on the vertex set $V(H)$ and the edge set $E_{s_i}$. Let $E_s = \bigcup_{i \in [k]} E_{s_i}'$ denote the union of these edge sets. Since we are working with a favourable instance of \hp, we would see (in Lemmas \ref{claim:hp:esbounded} and \ref{claim:hp:embounded}) that the sizes of the sets $E_s$ and $E_m$ can be upper bounded by $q$ and $k$, respectively. We exploit these bounds to highlight the hyperedges in $E_m$ and $E_s$ (Lemma \ref{claim:hp:goodassignments}) efficiently. In addition to this, as we shall see in Lemma \ref{claim:hp:alpha}, the total number of possible restrictions of $\Upsilon$ on any hyperedge can also bounded effectively. Thus, we can not only highlight the hyperedges in $E_m$ and $E_s$, but we can also guess the restrictions of $\Upsilon$ to these hyperedges. We remark that since we aim to solve a favourable instance of \hp{} in time that is proportional to a single exponential function of $q$, we do not guess the restriction of $\Upsilon$ to the hyperedges of $E_s$ straightaway (as $|E_s| \leq q$ from Lemma \ref{claim:hp:esbounded}). The proof of Lemma \ref{claim:hp:goodassignments} would capture the idea of the performance of highlighting and guessing. As one would expect, this highlighting does concludes our arguments, as it does not just highlight the hyperedges in $E_m$ and $E_s$, but also some hyperedges from $E_b$. We deal with the inherent challenges of handling such a ``messy picture'' later in our proof.


\begin{lemma}\label{claim:hp:esbounded}
$|E_s| \leq q$.
\end{lemma}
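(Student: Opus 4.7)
The plan is to bound $|E_s|$ by exploiting two facts in sequence: first, that the definition of a spanning forest gives a per-color bound of the form $|E_{s_i}'| \leq |V_i|$ where $V_i$ is the set of vertices covered by $E_{s_i}$; and second, that the global unbreakability property (with the index $0$ playing the role of the ``big'' color, as assumed just before the lemma statement) gives $\sum_{i \in [k]} |\Upsilon^{-1}(i)| \leq q$. Summing the per-color bounds then yields the result.

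First I would pin down the per-color inequality. Fix $i \in [k]$ and let $V_i$ be the union of endpoints of the hyperedges in $E_{s_i}$. By the definition of $E_{s_i}$, every hyperedge $F \in E_{s_i}$ satisfies $\Upsilon(F) = i$, so $V_i \subseteq \Upsilon^{-1}(i)$. Next, I would use the minimum-size property of a spanning forest to set up an injection $E_{s_i}' \to V_i$. Concretely, fix any $F \in E_{s_i}'$; minimality forces $E_{s_i}' \setminus \{F\}$ to fail to cover $V_i$, so there exists some $v_F \in F$ not covered by any other edge of $E_{s_i}'$. The map $F \mapsto v_F$ is injective, because if $F \neq F'$ both mapped to the same $v$, then $v \in F \cap F'$ would contradict the ``covered by exactly one'' property of both $v_F$ and $v_{F'}$. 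Hence $|E_{s_i}'| \leq |V_i| \leq |\Upsilon^{-1}(i)|$.

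Finally, I would sum over the colors. Since $E_s = \bigcup_{i\in[k]} E_{s_i}'$ is a disjoint union (the sets $E_{s_i}$ are pairwise disjoint by the definitions), we get
\[
|E_s| \;=\; \sum_{i \in [k]} |E_{s_i}'| \;\leq\; \sum_{i \in [k]} |\Upsilon^{-1}(i)| \;\leq\; q,
\]
where the last inequality is precisely the global unbreakability hypothesis in the form assumed above the lemma (the special index guaranteed by the $(q,k)$-unbreakability of $\Upsilon$ is taken to be $0$).

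I do not expect a serious obstacle here; the only subtlety is getting the definition of a hypergraph spanning forest right, since the paper uses the nonstandard ``minimum-size edge set whose endpoints cover the vertex set'' formulation. The argument above uses only the minimality, not any acyclicity. If one prefers, the same bound can be phrased as follows: the greedy process of adding edges while strictly growing the covered vertex set terminates in at most $|V_i|$ steps, and any minimum-size covering family has size bounded by this greedy count. Either phrasing gives $|E_{s_i}'| \leq |V_i|$, which is all we need.
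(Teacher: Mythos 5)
Your proof is correct and follows the same route as the paper: bound each $|E_{s_i}'|$ by $|\Upsilon^{-1}(i)|$ via the spanning-forest definition, then sum over $i \in [k]$ and invoke global unbreakability. The only difference is that you spell out the injection $F \mapsto v_F$ establishing $|E_{s_i}'| \leq |V_i|$, which the paper treats as immediate from the definition.
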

\begin{proof}
Recall that for each $i \in [k]$, we defined $E_{s_i}'$ as the edge set of a spanning forest of the hypergraph with the vertex set $V(H)$ and the edge set $E_{s_i}$. Hence, by this definition,  $|E_{s_i}'| \leq |{\Upsilon}^{-1}(i)|$. Now, recall that since $\Upsilon$ witnesses the global unbreakability property, we assumed w.l.o.g.~that $\sum_{i \in [k]}|{\Upsilon}^{-1}(i)| \leq q$. We thus have that $\sum_{i \in [k]} |E_{s_i}'| \leq q$. Therefore, $|E_s| \leq q$.
\end{proof}

\begin{lemma}\label{claim:hp:embounded}
$|E_m| \leq k$.
\end{lemma}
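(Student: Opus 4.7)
The plan is to leverage the connectivity property of favorable instances, which is exactly designed to charge a cost to every hyperedge colored by more than one color. Recall that $\Upsilon : V(H) \to [k]_0$ is the fixed witnessing coloring for $\text{{\ttfamily aHP}}[\mu,l_1,l_2]=1$ used to define $E_m$. By the definition of a witnessing coloring of \hp, there exist non-negative integers $\{\mu^F\}_{F \in E(H)}$, $\{l_1^F\}_{F \in E(H)}$ and $\{l_2^F\}_{F \in E(H)}$ satisfying $\sum_{F \in E(H)} l_1^F \leq l_1$ and $\sum_{F \in E(H)} l_2^F \leq l_2$, such that $f_F(\Upsilon|_F, \mu^F, l_1^F, l_2^F) = 1$ for every $F \in E(H)$.

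Now I would pick any $F \in E_m$. By the definition of $E_m$, the restriction $\Upsilon|_F$ uses at least two distinct colors, i.e., there exist distinct $i,j \in [k]_0$ with $|(\Upsilon|_F)^{-1}(i)|, |(\Upsilon|_F)^{-1}(j)| > 0$. Since the instance is favorable, it satisfies connectivity, which combined with $f_F(\Upsilon|_F, \mu^F, l_1^F, l_2^F) = 1$ forces $l_1^F + l_2^F \geq 1$. Summing this inequality over all $F \in E_m$ yields
\[
|E_m| \;\leq\; \sum_{F \in E_m} (l_1^F + l_2^F) \;\leq\; \sum_{F \in E(H)} (l_1^F + l_2^F) \;\leq\; l_1 + l_2 \;\leq\; k_1 + k_2 \;=\; k,
\]
which gives the desired bound.

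There is essentially no obstacle here: the connectivity property is tailor-made for this lemma, and the argument is a one-line accounting. The only care needed is to stay consistent with the earlier convention that $\Upsilon$ is fixed once and for all (globally unbreakable, with $i=0$ as the majority color) and that the $\mu^F, l_1^F, l_2^F$ witnessing $\text{{\ttfamily aHP}}[\mu,l_1,l_2]=1$ via $\Upsilon$ have been fixed as well, so that we may legitimately apply the connectivity clause $f_F(\Upsilon|_F, \mu^F, l_1^F, l_2^F) = 1 \Rightarrow l_1^F + l_2^F \geq 1$ to every multichromatic hyperedge $F \in E_m$.
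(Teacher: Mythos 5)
Your proof is correct and follows essentially the same approach as the paper: fix the witnessing tuples $(\mu^F, l_1^F, l_2^F)$ from $\Upsilon$ being a witnessing coloring, invoke the connectivity property to get $l_1^F + l_2^F \geq 1$ for each $F \in E_m$, and sum against the budget $l_1 + l_2 \leq k$. The added care you take to justify that the same fixed $\Upsilon$ and the same fixed witness tuples apply uniformly across all hyperedges is implicit in the paper's shorter argument, but both proofs are the same accounting step.
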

\begin{proof}
Since $\text{{\ttfamily aHP}}[\mu,l_1,l_2]=1$, for all $F \in E(H)$ there exist ${\mu}^F$, $l_{1}^F$ and $l_{2}^F$ such that $f_F({\Upsilon}_{|F}, {\mu}^F, l_{1}^F, l_{2}^F) =1$. Hence, the connectivity property implies that for each $F \in E_m$, we have that $l_{1}^F + l_{2}^F \geq 1$. However, $\sum_{F \in E(H)} l_1^F + l_2^F \leq l_1 + l_2 \leq k_1 + k_2 =k$. Thus, $|E_m| \leq k$.
\end{proof}

\subsection{Introducing Good Assignments}

Let us first note that by Lemma \ref{lemma:numOfUnbreakableFunctions}, for any hyperedge $F\in E(H)$, the number of $(3{k}^{2},k)$-unbreakable functions (that we call  $(3{k}^{2},k)$-unbreakable colorings) from $F$ to $[k]_0$ is at most $\alpha =  \sum\limits_{l=1}^{3k^2} {d \choose l} \cdot (3k^2)^k \cdot (k+1) = d^{\OO(k^2)}$. For each hyperedge $F$, let us arbitrarily order all possible $(3{k}^{2},k)$-unbreakable colorings. For each $i \in [\alpha]$, let ${\lambda}_{F,i}$ denote the $i$-th such coloring. If for an heperedge $F$, the number of such colorings is strictly smaller than $\alpha$, then we extend its list of possible colorings to be of size $\alpha$ by letting some colorings be present multiple times. Thus, for each $F \in E(H)$ and $i \in [\alpha]$, we ensure ${\lambda}_{F,i}$ is well-defined.

\begin{lemma}\label{claim:hp:alpha}
For any $F \in E(H)$, there exists $i \in [\alpha]$ such that $\Upsilon|_{F} = {\lambda}_{F,i}$.
\end{lemma}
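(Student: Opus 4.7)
The plan is to invoke the local unbreakability property of the favorable instance together with the fact that $\Upsilon$ is a global witness.

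First, I would unpack what it means for $\Upsilon$ to be a witnessing coloring for $\text{{\ttfamily aHP}}[\mu,l_1,l_2]=1$: by the definition of the output of \hp, there must exist non-negative integers $\{\mu^F\}_{F \in E(H)}$, $\{l_1^F\}_{F \in E(H)}$, $\{l_2^F\}_{F \in E(H)}$ with $\mu = \sum_F \mu^F$, $\sum_F l_1^F \leq l_1$, $\sum_F l_2^F \leq l_2$, such that $f_F(\Upsilon|_F, \mu^F, l_1^F, l_2^F) = 1$ for every $F \in E(H)$. In particular, fixing the hyperedge $F$ of interest, we have $f_F(\Upsilon|_F, \mu^F, l_1^F, l_2^F) = 1$ for some choice of arguments.

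Next, I would apply the local unbreakability property of the favorable instance: it asserts that whenever $\Gamma : F \to [k]_0$ is not $(3k^2, k)$-unbreakable, $f_F(\Gamma, \cdot, \cdot, \cdot)$ is identically $0$. The contrapositive then forces $\Upsilon|_F$ to be a $(3k^2, k)$-unbreakable coloring of $F$.

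Finally, by the way the list $(\lambda_{F,1}, \ldots, \lambda_{F,\alpha})$ was defined just before the lemma, it enumerates (with possible repetitions used only for padding) every $(3k^2, k)$-unbreakable coloring of $F$, the total number of which is bounded by $\alpha = d^{\OO(k^2)}$ via Lemma~\ref{lemma:numOfUnbreakableFunctions}. Since $\Upsilon|_F$ belongs to this set of colorings, there is some $i \in [\alpha]$ with $\Upsilon|_F = \lambda_{F,i}$, completing the proof. There is no real obstacle here; the lemma is essentially a bookkeeping consequence of local unbreakability and is stated to make the later enumeration-and-guess step (which pays a factor of $\alpha^{|E_m \cup E_s|}$-type) syntactically available.
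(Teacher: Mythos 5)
Your proposal is correct and follows essentially the same route as the paper's proof: unpack the witnessing coloring to get $f_F(\Upsilon|_F,\mu^F,l_1^F,l_2^F)=1$, invoke local unbreakability (in contrapositive) to conclude $\Upsilon|_F$ is $(3k^2,k)$-unbreakable, then use the fact that the list $\lambda_{F,1},\ldots,\lambda_{F,\alpha}$ enumerates all such colorings. No discrepancies.
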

\begin{proof}
Since $\Upsilon$ is a witnessing coloring for $\text{{\ttfamily aHP}}[\mu,l_1,l_2]=1$, we have that for any $F \in E(H)$ there exist ${\mu}^F$, $l_{1}^F$ and $l_{2}^F$ such that $f_F({\Upsilon}_{|F}, {\mu}^F, l_{1}^F, l_{2}^F) =1$. Since the property of local unbreakability is then enforced by the definition of $f_F$, we have that ${\Upsilon}|_{F}$ is a $(3{k}^{2},k)$-unbreakable coloring. Since $\{{\lambda}_{F,1}, \ldots, {\lambda}_{F, \alpha}\}$ is contains all possible $(3{k}^2,k)$-unbreakable colorings from $F$ to $[\alpha]$, there exists $i \in [\alpha]$ such that ${\Upsilon}_{|F} = {\lambda}_{F,i}$.
\end{proof}

Here, we are interested in assignments that are functions associating each hyperedge $F\in E(H)$ with a coloring ${\lambda}_{F,i}$. Let us proceed by defining which assignments would be useful for us to have at hand.

\begin{definition}\label{def:goodAssignment}
An assignment $p : E(H) \to [\alpha]_0$ is said to be a \emph{good} assignment if the following conditions hold.
\begin{enumerate}
    \item For all $F \in E_s$, $p(F)=0$.
    \item For all $F \in E_m$, $p(F) =i >  0$ and ${\Upsilon}|_{F} = {\lambda}_{F,i}$.
\end{enumerate}
\end{definition}

To employ coloring coding, we first mention the required derandomization tools.

\begin{proposition}[Lemma 1.1, \cite{chitnis12}]\label{lemma:derand1}
Given a set $U$ of size $n$ and $y,z \in [n]_0$, we can construct in time $\OO(2^{\OO(\min (y,z) \log (y+z))} n \log n)$ a family $\mathcal{F}$ of at most $\OO(2^{\OO(\min (y,z) \log (y+z))} \log n)$ subsets of $U$, such that the following holds: for all sets $Y,Z \subseteq U$ such that $Y \cap Z = \emptyset$, $|Y| \leq y $ and $|Z| \leq z$, there exists a set $S \in \mathcal{F}$ with $Y \subseteq S$ and $Z \cap S = \emptyset$.
\end{proposition}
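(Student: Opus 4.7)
The plan is to derandomize a color-coding/random-sampling argument via the splitter machinery of Naor, Schulman and Srinivasan. By replacing each candidate $S$ with its complement $U\setminus S$ and simultaneously swapping the roles of $Y$ and $Z$, one sees it suffices to handle the case $y\leq z$; set $k=y+z$, so that $\min(y,z)=y$ and the goal becomes a family of size $2^{\OO(y\log(y+z))}\log n$.

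The key tool will be a Naor--Schulman--Srinivasan hash family $\mathcal{H}$ of functions $h:U\to[k]$ that is $(n,k)$-perfect, meaning that for every $k$-subset $X\subseteq U$ some $h\in\mathcal{H}$ restricts to a bijection on $X$. Crucially, I will use the refined variant whose size is bounded by $2^{\OO(\log^{2} k)}\log n$ rather than the more basic $2^{\OO(k)}\log n$; obtaining this refined bound, via composing an outer injection into a polynomial-size range (coming from hashing against all $k$-subsets using an algebraic code with good distance) with an inner permutation/perfect hash on the small range, is the technical heart of the argument and the main obstacle.

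Given $\mathcal{H}$, I define $\mathcal{F}=\{h^{-1}(T):h\in\mathcal{H},\ T\subseteq[k],\ |T|=y\}$. To verify the separation property, I fix disjoint $Y,Z$ with $|Y|\leq y$ and $|Z|\leq z$, extend $Y\cup Z$ (arbitrarily, if needed) to a $k$-subset $X\subseteq U$, and choose $h\in\mathcal{H}$ that is injective on $X$. Setting $T=h(Y)$, padded to size exactly $y$ using indices outside $h(X)$, gives $Y\subseteq h^{-1}(T)$; and since $h|_{X}$ is injective with $Y\cap Z=\emptyset$ we have $h(Z)\cap T=\emptyset$, hence $Z\cap h^{-1}(T)=\emptyset$, as required.

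Finally, the size bound follows from $|\mathcal{F}|\leq|\mathcal{H}|\cdot\binom{k}{y}\leq 2^{\OO(\log^{2} k)}\log n\cdot k^{y}=2^{\OO(y\log(y+z))}\log n$, and the construction time is dominated by building $\mathcal{H}$ (which is $2^{\OO(\log^2 k)} n\log n$ by NSS) and enumerating the $\binom{k}{y}$ preimage slices per function. I emphasize that the naive route of using the ``plain'' perfect hash family of size $2^{\OO(k)}\log n$ would make the product with $\binom{k}{y}$ collapse to $2^{\OO(k)}=2^{\OO(\max(y,z))}\log n$, destroying the asymmetric bound; it is precisely the refined splitter construction that lets the combinatorial factor $\binom{k}{y}$ dominate and yields the claimed $2^{\OO(\min(y,z)\log(y+z))}\log n$.
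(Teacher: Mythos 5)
The paper cites this Proposition from~\cite{chitnis12} without reproducing the proof, so there is no in-paper argument to compare against; I will evaluate your proof on its own merits. The overall shape of your construction (take a hash family injective on $k$-sets with $k=y+z$, enumerate $y$-element subsets of the range, take preimages) is indeed the right idea, and your WLOG-by-complementation step and the verification step are both fine. The fatal problem is the primitive you invoke.

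You claim an $(n,k)$-perfect hash family $\mathcal{H}$ of functions $U\to[k]$ (injective on every $k$-subset) of size $2^{\OO(\log^2 k)}\log n$. No such family exists: by the Fredman--Koml\'os / K\"orner--Marton lower bound, any family of functions from $[n]$ to $[k]$ that is injective on every $k$-subset must have size $\Omega\bigl(e^{k(1-o(1))}\log n\bigr)$, i.e.\ necessarily exponential in $k$, not quasi-polynomial. This is not a gap in the literature you could patch with a ``refined variant''; it is a counting barrier. Your sketch of the composition (outer algebraic hash into a polynomial range, inner perfect hash on the small range) also does not beat this bound, because the inner $(k^{\OO(1)},k)$-perfect hash family is itself still size $e^{k}k^{\OO(\log k)}$, so the product is $2^{\OO(k)}\log n$ --- exactly the ``naive'' bound you correctly identified as too large.

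The actual fix is to use $(n,k,k^2)$-splitters from Naor--Schulman--Srinivasan: a family $\mathcal{S}$ of functions $U\to[k^2]$, of size $k^{\OO(1)}\log n$ and constructible in time $k^{\OO(1)} n\log n$, such that every $k$-subset of $U$ is mapped injectively by some $s\in\mathcal{S}$ (here ``splitting'' into $k^2$ buckets means at most one element of the $k$-set per bucket). Then take $\mathcal{F}=\{s^{-1}(T):s\in\mathcal{S},\ T\subseteq[k^2],\ |T|=y\}$; your verification argument goes through verbatim with $[k]$ replaced by $[k^2]$. The size is $|\mathcal{S}|\cdot\binom{k^2}{y}\leq k^{\OO(1)}\log n\cdot k^{2y}=2^{\OO(y\log(y+z))}\log n$, as required. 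Your instinct that the combinatorial factor $\binom{\cdot}{y}$ must dominate the family size was correct; what you missed is that this is achieved by \emph{enlarging the range} to make the splitter polynomial, not by shrinking a perfect hash family below its lower bound.
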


\begin{definition}[$(N,r)$-perfect family]
An $(N,r)$-perfect family is a family of functions from $[N]$ to $[r]$, such that for any subset $X \subseteq [N]$ of size $r$, there exists a function in the family that is injective on $X$.
\end{definition}

\begin{proposition}[\cite{naor95}]\label{lemma:derand2}
An $(N,r)$-perfect family of size $\OO(e^r r^{\OO(\log r)} \log N)$ can be computed in time $\OO(e^r r^{\OO(\log r)} \allowbreak N \log N)$.
\end{proposition}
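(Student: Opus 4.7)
The plan is to construct the perfect hash family in two stages by composition, following the classical two-level approach used by Alon--Yuster--Zwick and Naor--Schulman--Srinivasan.

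In the first stage I would build a family $\mathcal{H}_1$ of hash functions from $[N]$ to $[M]$ with $M = \OO(r^2)$ such that for every $r$-subset $X \subseteq [N]$, some $h \in \mathcal{H}_1$ is injective on $X$. A random member of a $2$-universal family mapping into $[p]$, for a prime $p$ chosen in $[r^2, 2r^2]$, is injective on a fixed $X$ with probability at least $1 - \binom{r}{2}/p \geq 1/2$ by a union bound over pairs. Derandomization of this step---either by directly invoking Proposition~\ref{lemma:derand1} on the ``bad events'' that a pair $\{x,y\} \subseteq X$ collides, or by expander-walk amplification of a $2$-universal family---yields such an $\mathcal{H}_1$ of size $\OO(r^{\OO(1)} \log N)$, constructible in time $\OO(r^{\OO(1)} N \log N)$.

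In the second stage I would construct an $(M,r)$-perfect family $\mathcal{H}_2$ of functions from $[M]$ to $[r]$ of size $e^r r^{\OO(\log r)}$. A uniformly random function $[M] \to [r]$ is injective on any fixed $r$-subset with probability $r!/r^r \geq e^{-r}$, so the probabilistic method already establishes existence using only $\OO(e^r \log \binom{M}{r}) = \OO(e^r r \log r)$ functions. To derandomize while staying within the $e^r r^{\OO(\log r)}$ budget, I would invoke the $(n,k,\ell)$-splitter construction of Naor--Schulman--Srinivasan: for $M = \OO(r^2)$, it produces, in time $e^r r^{\OO(\log r)} \cdot \mathrm{poly}(M)$, a family of functions $[M] \to [r]$ such that every $r$-subset of $[M]$ is mapped bijectively onto $[r]$ by some member of the family (a stronger property than mere injectivity, but of the same size order). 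The final $(N,r)$-perfect family is $\{\,h_2 \circ h_1 : h_1 \in \mathcal{H}_1,\ h_2 \in \mathcal{H}_2\,\}$: given any $r$-subset $X \subseteq [N]$, pick $h_1 \in \mathcal{H}_1$ injective on $X$ and then $h_2 \in \mathcal{H}_2$ injective on $h_1(X)$, so $h_2 \circ h_1$ is injective on $X$. The resulting family has size $\OO(e^r r^{\OO(\log r)} \log N)$ and total construction time $\OO(e^r r^{\OO(\log r)} N \log N)$, as claimed.

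The main obstacle is the derandomization in Stage~2 within the tight $e^r r^{\OO(\log r)}$ bound; a direct derandomization on the original universe $[N]$ would incur either an extra factor of $N$ or additional polynomial-in-$\log N$ overhead. The purpose of the preliminary universe-shrinking stage is precisely to reduce the problem to one in which $M$ is polynomial in $r$, so that the more delicate splitter machinery---whose recursive construction is itself the technically heaviest ingredient---only has to be executed on a universe of size $\OO(r^2)$ and contributes multiplicatively alongside the cheap $\OO(\log N)$ factor from Stage~1.
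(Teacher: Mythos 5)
The paper does not prove this statement at all: it is imported as a black-box citation to Naor, Schulman and Srinivasan, so there is no in-paper proof to compare against. Your reconstruction is the standard two-level splitter argument from that source (hash $[N]$ down to a universe of size $\mathrm{poly}(r)$ using a small, efficiently constructible family that is injective on every $r$-set, then compose with an $(\OO(r^2),r)$-perfect family built by the splitter machinery), and the size and time bounds compose to exactly the stated $\OO(e^r r^{\OO(\log r)}\log N)$ and $\OO(e^r r^{\OO(\log r)} N\log N)$, so the outline is sound. The one place I would tighten the writing is your Stage~1 derandomization: invoking Proposition~\ref{lemma:derand1} on per-pair ``bad events'' does not directly give a family of hash functions (that proposition produces set separators, not injective maps), so you should rely on your second suggestion---a polynomially constructible $(N,r,r^2)$-splitter/universal family, as in the cited reference---and drop the first.
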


We are now ready to present our color coding phases.

\begin{lemma}\label{claim:hp:goodassignments}
There exists a set $\mathcal{A}$ of assignments from $E(H)$ to $[\alpha]_0$, such that $|\mathcal{A}| \leq 2^{\OO(\min (k,q) \log (k+q))} \cdot d^{\OO(k^2)}  \cdot {\log}^2 |E(H)|$ and there exists a good assignment in $\mathcal{A}$. Moreover, such a set $\mathcal{A}$ is computable in time $\OO(2^{\OO(\min (k,q) \log (k+q))} \cdot d^{\OO(k^2)} \cdot |E(H)|^{\OO(1)})$.
\end{lemma}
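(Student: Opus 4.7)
The plan is to construct $\mathcal{A}$ via two successive applications of the derandomization tools, Propositions~\ref{lemma:derand1} and~\ref{lemma:derand2}, combined with a brute-force enumeration over coloring indices. Intuitively, the first tool isolates $E_m$ from $E_s$ so that declaring $p(F)=0$ outside the chosen subset already satisfies the first clause of Definition~\ref{def:goodAssignment}; the second tool lets us guess the right $(3k^2,k)$-unbreakable coloring for each hyperedge of $E_m$ independently, without paying a factor of $\alpha$ per hyperedge of $E(H)$.

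First, I would invoke Proposition~\ref{lemma:derand1} with universe $U=E(H)$ and parameters $y=k$, $z=q$. Since $E_s\cap E_m=\emptyset$ and Lemmas~\ref{claim:hp:esbounded}--\ref{claim:hp:embounded} give $|E_m|\le k$ and $|E_s|\le q$, the resulting family $\mathcal{F}_1$ of size $2^{\OO(\min(k,q)\log(k+q))}\log|E(H)|$ contains some $S$ with $E_m\subseteq S$ and $E_s\cap S=\emptyset$. For every such $S\in\mathcal{F}_1$ I will only produce assignments that satisfy $p(F)=0$ on $E(H)\setminus S$, which automatically enforces $p(F)=0$ on $E_s$.

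Second, to commit to the correct unbreakable coloring on each hyperedge of $E_m$, I would apply Proposition~\ref{lemma:derand2} with $N=|E(H)|$ and $r=k$ to obtain an $(|E(H)|,k)$-perfect family $\mathcal{H}$ of hash functions $h:E(H)\to[k]$ of size $e^k k^{\OO(\log k)}\log|E(H)|$. Since $|E_m|\le k$, there is some $h\in\mathcal{H}$ that is injective on $E_m$, so each hyperedge of $E_m$ occupies its own bucket $h(F)\in[k]$. Then, for every triple $(S,h,\mathbf{j})$ with $S\in\mathcal{F}_1$, $h\in\mathcal{H}$, and $\mathbf{j}=(j_1,\ldots,j_k)\in[\alpha]^k$, I would insert into $\mathcal{A}$ the assignment defined by $p(F)=0$ if $F\notin S$ and $p(F)=j_{h(F)}$ if $F\in S$.

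Correctness reduces to Lemma~\ref{claim:hp:alpha}: for each $F\in E_m$ there is some $j_F\in[\alpha]$ with $\lambda_{F,j_F}=\Upsilon|_F$; choosing $S$ as above, $h$ injective on $E_m$, and $j_b=j_F$ for the unique $F\in E_m$ with $h(F)=b$ (values of $j_b$ on empty buckets are irrelevant) yields an assignment satisfying both clauses of Definition~\ref{def:goodAssignment}, hence a good assignment. The size bound follows from multiplying the three factors $|\mathcal{F}_1|\cdot|\mathcal{H}|\cdot\alpha^k$ and absorbing $e^k k^{\OO(\log k)}$ into the derandomization term, while the construction time is dominated by the time to build $\mathcal{F}_1$ and $\mathcal{H}$ plus the time to list every triple. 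The main conceptual obstacle is the \emph{order} in which the two derandomizations are applied: running the splitter first is what guarantees that we only ever need to specify $k$ coloring indices (one per perfect-hash bucket inside $S$) rather than one per hyperedge of $E(H)$, which is what keeps the $\alpha$-dependence under control and yields the stated bound on $|\mathcal{A}|$.
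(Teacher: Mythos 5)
Your proposal is correct and follows essentially the same two-stage color-coding scheme as the paper: a splitter family (Proposition~\ref{lemma:derand1}) to separate $E_m$ from $E_s$, a perfect hash family (Proposition~\ref{lemma:derand2}) to give each hyperedge of $E_m$ its own bucket, and a brute-force choice in $[\alpha]^k$ of one coloring index per bucket. In fact, your case split ($p(F)=0$ for $F\notin S$, $p(F)=j_{h(F)}$ for $F\in S$) is the correct one: the paper's displayed definition of $p[S_j,\kappa,\kappa_0]$ has the two branches interchanged, which contradicts its own subsequent correctness argument (where $E_m\subseteq S_j$, so hyperedges of $E_m$ must receive a nonzero value), and also makes $\kappa_0(\kappa(F))$ ill-defined for $F\notin S_j$ since $\kappa$ is only defined on $S_j$. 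Your use of a single $(|E(H)|,k)$-perfect family rather than one family per $S_j$ is a small simplification with the same asymptotics.
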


\begin{proof}
We start by defining three several families, which would guide us through the construction of $\cal A$. For $U= E(H)$, $y = k$ and $z = q$, let $\mathcal{F}=\{S_1, \ldots, S_{\nu}\}$ be the family of size $\nu = 2^{\OO(\min (k,q) \log (k+q))} \log |E(H)|$ obtained by calling the algorithm of Proposition \ref{lemma:derand1}. For each $j \in [\nu]$, let ${\mathcal{P}}_j$ be a $(|S_j|,k)$-perfect family of size at most $\zeta = e^k k^{\OO(\log k)} \log |S_j| = e^k k^{\OO(\log k)} \log |E(H)|$ computed by the algorithm of Proposition \ref{lemma:derand2}. Let $\mathcal{Q}$ be the family of all possible functions from $[k]$ to $[\alpha]$. Observe that $|\mathcal{Q}| = {\alpha}^k$.

For each set $S_j \in \mathcal{F}$, function $\kappa \in {\mathcal{P}}_j$ and function ${\kappa}_0 \in \mathcal{Q}$, let $p[S_j, \kappa , {\kappa}_0 ] : E(H) \to [\alpha]_0$ be defined as follows.
 \[
    p[S_j, \kappa, \kappa_0](F) =
    \begin{cases}
     0,              & \text{if } F \in S_j \\     
     {\kappa}_{0}(\kappa(F))  & \text{otherwise}
    \end{cases}
    \]

Let $\mathcal{A}=\{p[S_j, \kappa , {\kappa}_0] : S_j \in \mathcal{F}, \kappa \in {\mathcal{P}}_j, {\kappa}_0 \in \mathcal{Q}\}$. We claim that there exists a good assignment in $\mathcal{A}$. Since $|E_m| \leq k$ (from Lemma \ref{claim:hp:embounded}) and $|E_s| \leq q$ (from Lemma \ref{claim:hp:esbounded}), from Proposition \ref{lemma:derand1} there exists $S_j \in \mathcal{F}$ such that $E_m \subseteq S_i$ and $E_s \cap S_j = \emptyset$. By Proposition \ref{lemma:derand2}, there exists a function $\kappa \in {\mathcal{P}}_j$ which is injective on $E_m$. Let $E_m = \{F_1, \ldots, F_c\}$ where $c \leq k$. Without loss of generality, $\kappa(E_y) = y$ for all $y \in [c]$. Since $\mathcal{Q}$ contains all possible functions from $[k]$ to $[\alpha]$, and for each $F \in E_m$ there exists $i \in [\alpha]$ such that ${\Upsilon}|_{F} = {\lambda}_{F,i}$ (from Lemma \ref{claim:hp:alpha}), there exists ${\kappa}_0  \in \mathcal{Q}$ such that for each $F \in E_m$, ${\Upsilon}|_{F} = {\lambda}_{F, {\kappa}_0(\kappa(F))}$. Moreover, since $E_s \cap S_j = \emptyset$, we have that $p[S_j, \kappa, {\kappa}_0]=0$. Thus, $p[S_j, \kappa, {\kappa}_0] \in \mathcal{A}$ is a good assignment.

Recall that $\alpha= d^{\OO(k^2)}$. Now, as we have upper bounded $\nu$ and $\zeta$, we observe that $|\mathcal{A}| \leq \nu \zeta {\alpha}^{k} = 2^{\OO(\min (k,q) \log (k+q))} e^k k^{\OO(\log k)}d^{\OO(k^2)} {\log}^2 |E(H)|$. Thus, the size of $\mathcal{A}$ is upper bounded by $2^{\OO(\min (k,q) \log (k+q))} \cdot d^{\OO(k^2)}  \cdot {\log}^2 |E(H)|$. 

The time taken to compute $\mathcal{A}$ is proportional to the time taken to compute $\mathcal{F}, \mathcal{P}_j$ for each $j \in \{\nu\}$ and $\mathcal{Q}$. By Propositions \ref{lemma:derand1} and \ref{lemma:derand2}, we thus derive that the running time is upper bounded $\OO(2^{\OO(\min (k,q) \log (k+q))} \cdot d^{\OO(k^2)} \cdot |E(H)|^{\OO(1)})$.
\end{proof}

The algorithm we design to compute $\text{{\ttfamily aHP}}[\mu,l_1,l_2]$ first constructs the set $\mathcal{A}$ of Lemma \ref{claim:hp:goodassignments}. Observe that this computation can be done regardless of whether $\text{{\ttfamily aHP}}[\mu,l_1,l_2]=1$ or $\text{{\ttfamily aHP}}[\mu,l_1,l_2]=0$. (We only use the supposition that $\text{{\ttfamily aHP}}[\mu,l_1,l_2]=1$ to analyze structural properties of an input instance satisfying this condition.) Next, the algorithm branches on all possible assignments in $\mathcal{A}$. By Lemma \ref{claim:hp:goodassignments}, assuming that $\text{{\ttfamily aHP}}[\mu,l_1,l_2]=1$, we know that there exists at least one assignment from $E(H)$ to $[\alpha]_0$ that is good. Henceforth, we assume that we currently consider a branch that corresponds to a good assignment, denoted by $p : E(H) \to [\alpha]_0$. Thus, we would like to show that we correctly determine at the current branch that $\text{{\ttfamily aHP}}[\mu,l_1,l_2]=1$. (If it were the case that $\text{{\ttfamily aHP}}[\mu,l_1,l_2]=0$, it would also be clear from our arguments that we would not determine that $\text{{\ttfamily aHP}}[\mu,l_1,l_2]=1$, which would overall imply that no branch determines that this condition holds, and hence we would eventually decide that $\text{{\ttfamily aHP}}[\mu,l_1,l_2]=0$.)

\subsection{Associating the Graph $L_p$ with an Assignment $p$}

For our assignment $p : E(H) \to [\alpha]_0$, let us now construct an undirected simple graph $L_p$ with $V(L_p) = V(H)$. For each $F \in E(H)$ such that $p(F) =0$, make $F$ a clique in $L_p$. We say that the edges of this clique are the edges that \emph{correspond} to the hyperedge $F$. For any $F \in E(H)$ such that $p(F) =i > 0$, for each $j \in [k]_0$, make the set ${{\lambda}_{F,i}}^{-1}(j)$ a clique in $L_p$. We say that the edges of all such cliques are the edges that \emph{correspond} to the hyperedge $F$. Since we want $L_p$ to be a simple graph, between any two vertices of $L_p$ we retain at most one copy of the edge between them (if one exists). If a deleted copies of some edge $e$ in $L_p$ corresponds to some hyperedge $F$, then in the simple graph the retained copy of that edge $e$ is the one that is said to correspond to that hyperedge $F$ (even if we originally added the retained copy of $e$ due to a different hyperedge). Note that it may thus be the case that one edge in $L_p$ corresponds to to seversal hyperedges in $E(H)$.

We proceed by analyzing the connected components of $L_p$. Informally, we first argue that every connected component behaves as a single unit with respect to $\Upsilon$.

\begin{lemma}\label{claim:hp:prop1}
Let $D$ be any connected component of $L_p$. Then, $\Upsilon(D) = i$ for some $i\in[k]_0$, that is, all the vertices in $D$ are assigned the same color by $\Upsilon$.
\end{lemma}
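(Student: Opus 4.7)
The plan is to reduce the claim to a purely local statement: every edge of $L_p$ connects two vertices carrying the same $\Upsilon$-color. Once we have this, connectivity of $D$ and induction along paths immediately gives that $\Upsilon$ is constant on $D$.

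To prove the local statement, I would pick an arbitrary edge $\{u,v\} \in E(L_p)$ and let $F \in E(H)$ be a hyperedge to which this edge corresponds (by the construction of $L_p$, such an $F$ exists, and $u,v \in F$). I would then split into cases according to which of $E_b, E_{s_1},\ldots,E_{s_k}, E_m$ the hyperedge $F$ belongs to, and how $p$ treats $F$. If $F \in E_b$, then by the definition of $E_b$ we have $\Upsilon(F) = 0$, so $\Upsilon(u)=\Upsilon(v)$ holds trivially. If $F \in E_{s_i}$ for some $i \in [k]$, then by definition $\Upsilon(F) = i$, so again $\Upsilon(u)=\Upsilon(v)$. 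The only remaining case is $F \in E_m$. Here condition~2 of Definition~\ref{def:goodAssignment} (applied to the good assignment $p$) tells us that $p(F) = i > 0$ and $\Upsilon|_F = \lambda_{F,i}$. By the construction of $L_p$ for such hyperedges, the edge $\{u,v\}$ lies inside one of the cliques $\lambda_{F,i}^{-1}(j)$, i.e.\ $\lambda_{F,i}(u) = \lambda_{F,i}(v)$. Combining this with $\Upsilon|_F = \lambda_{F,i}$ yields $\Upsilon(u) = \Upsilon(v)$.

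Having verified that every edge of $L_p$ is $\Upsilon$-monochromatic, I would finish as follows: fix any vertex $v_0 \in D$ and set $i := \Upsilon(v_0)$. For an arbitrary $w \in D$, choose a path $v_0 = w_0, w_1, \ldots, w_r = w$ in $L_p$; applying the edge statement to each consecutive pair gives $\Upsilon(w_0) = \Upsilon(w_1) = \cdots = \Upsilon(w_r)$, hence $\Upsilon(w) = i$. Therefore $\Upsilon(D) = i$, as desired.

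The proof is essentially a bookkeeping exercise once the correct case split is set up; the only subtle point is remembering that for $F \notin E_m$ the coloring $\Upsilon|_F$ is already constant, so we do not need to know anything about $p(F)$ beyond the fact that the cliques added to $L_p$ sit inside $F$. The real work was done earlier, in designing good assignments so that every hyperedge in $E_m$ is correctly ``guessed'' by $p$ via Lemma~\ref{claim:hp:alpha}; this lemma is precisely what makes the $E_m$-case go through. No further combinatorial obstacle arises, so I expect this to be a short, clean proof rather than a technical one.
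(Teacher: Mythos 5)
Your proof is correct, and it takes a genuinely different and more elementary route than the paper's. You localize the claim to an edge-level statement---every edge of $L_p$ is $\Upsilon$-monochromatic---established by a three-way case split on whether a hyperedge corresponding to the edge lies in $E_b$, some $E_{s_i}$, or $E_m$, and then finish by walking a path between any two vertices of $D$. The paper instead proceeds by induction on the number of hyperedges: it considers $L_p[\mathcal{F}_j]$ for the first $j$ hyperedges and shows that introducing each hyperedge's cliques preserves the invariant that every connected component of $L_p[\mathcal{F}_j]$ is $\Upsilon$-monochromatic, with a base case and an inductive step, each again split according to whether the new hyperedge is in $E_m$. Both arguments ultimately rest on the same observation---the cliques added for each hyperedge $F$ are $\Upsilon$-monochromatic, automatically when $F \notin E_m$ and because $p$ is a good assignment with $\Upsilon|_F = \lambda_{F,p(F)}$ when $F \in E_m$---but your formulation isolates this observation and applies connectivity directly, whereas the paper threads it through inductive bookkeeping that also tracks the merging of components as hyperedges are added. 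Your version is shorter and conceptually cleaner, with no loss of rigor: the single minor refinement worth noting is that an edge of $L_p$ may correspond to several hyperedges, so one should say ``pick any hyperedge $F$ to which the edge corresponds,'' but since the argument succeeds for every such $F$ this costs nothing.
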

\begin{proof}
For any $\mathcal{F} \subseteq E(H)$, let $L_p[\mathcal{F}]$ be the simple graph on the same vertex set as $L_p$, whose edge set contains only those edges of $L_p$ that correspond to some hyperedge in $\mathcal{F}$. Observe that $L_p[E(H)] =L_p$. Moreover, observe that if a set of vertices is connected in $L_p[\mathcal{F}]$ then it is also connected in $L_p[{\mathcal{F}}']$ for any ${\mathcal{F}}' \supseteq \mathcal{F}$.

Let $E(H) = \{F_1, \ldots, F_r\}$. Moreover, for any $j\in[r]$, denote ${\mathcal{F}}_j = \bigcup\limits_{c=1}^{j} F_c$. Let us prove by induction on $j$ that for each component $D$ of $L_p[{\mathcal{F}}_j]$, we have that $\Upsilon(D) = i$ for some $i \in [k]_0$. The proof of this claim would conclude the proof of the lemma, as by setting $j=r$, we thus derive that for each component $D$ of $L_p[{\mathcal{F}}_r] = L_p$, we have that $\Upsilon(D)=i$ for some $i \in [k]_0$. Hence, we next focus only on the proof of the claim.

To prove the base case, where $j=1$, consider the graph $L_p[{\mathcal{F}}_1]$. 
If $F_1 \not \in E_m$, then $\Upsilon(F_1) = i$ for some $i \in [k]_0$ (by the definition of $E_m$). Hence, for each connected component $D$ of $L_p[{\mathcal{F}}_1]$, $\Upsilon(D)=i$ for some $i \in [k]_0$. Otherwise, $F_1 \in E_m$. In this case, let $p(F_1) =s > 0$. Since $p$ is a good assignment, ${\lambda}_{F_1 , s} = {\Upsilon}|_{{F}_{1}}$. Since each component $D$ of $L_p[{\mathcal{F}}_1]$ is either an isolated vertex or ${\lambda}_{F_1 , s}^{-1}(i)$ for some $ i \in [k]_0$, we conclude that $\Upsilon(D)=i$ for some $i \in [k]_0$.


We now suppose that $j\geq 2$. By induction hypothesis, for each connected component $D$ of $L_p[{\mathcal{F}}_{j-1}]$, we have that $\Upsilon(D) = i$ for some $i \in [k]_0$. Let us now examine the graph $L_p[{\mathcal{F}}_j]$ and the hyperedge $F_j$. Note that $F_j={\cal F}_j\setminus{\cal F}_{j-1}$.
If $F_j \not \in E_m$, then $\Upsilon(F_j) = i$ for some $i \in  [k]_0$ (from the definition of $E_m$). Let $\mathcal{D}$ be the collection of every connected components of $L_p[{\mathcal{F}}_{j-1}]$ which intersects $F_j$. Then, the definition of $L_p$ and the inductive hypothesis directly imply that $\Upsilon(\bigcup \mathcal{D}) = i$ for some $i \in [k]_0$. Thus, by the inductive hypothesis, for each connected component $D$ of $L_p[{\mathcal{F}}_{j}]$, we have that $\Upsilon(D)=i$ for some $i \in [k]_0$. Otherwise, $F_j \in E_m$. Then, denote $p(F_1) =s > 0$. Since $p$ is a good assignment, ${\lambda}_{F_1 , s} = {\Upsilon}|_{{F}_{1}}$. For each $i \in [k]_0$, let ${\mathcal{D}}_i$ be the collection of all connected components of $L_p[{\mathcal{F}}_{j-1}]$ that intersect ${\lambda}_{F_j, s}^{-1}(i)$. Then, the definition of $L_p$ and the inductive hypothesis directly imply $\Upsilon({\mathcal{D}}_{i}) = i$. Hence, by the inductive hypothesis, for each connected component $D$ of $L_p[{\mathcal{F}}_j]$, we have that $\Upsilon(D) = i$ for some $i \in [k]_0$. 
\end{proof}

Roughly speaking, we now argue that hyperedges crossing several different components, where to at least one of them $\Upsilon$ assigns some $i>0$, should belong to $E_m$.

\begin{lemma}\label{claim:hp:prop2}
Let $D$ be any connected component of $L_p$ such that $\Upsilon(D) = i > 0$ for some $i\in[k]$. For any $F \in E(H)$ such that $F \cap D \neq \emptyset$ and $F \setminus D \neq \emptyset$, then $F \in E_m$.
\end{lemma}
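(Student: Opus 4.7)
The plan is to argue by contradiction. Supposing $F \notin E_m$, I will show that $F$ is entirely contained in the single component $D$, which contradicts $F \setminus D \neq \emptyset$. The first observation is immediate: if $F \notin E_m$ then $\Upsilon|_F$ is constant, and since $F \cap D \neq \emptyset$ while $\Upsilon(D) = i$ with $i > 0$, this constant value must be $i$. Hence $F \in E_{s_i}$.

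Next I invoke the spanning-forest property of $E_{s_i}'$. The hyperedge $F$ alone places all of its vertices in one connected component of the hypergraph $(V(H), E_{s_i})$, and the defining property of a spanning forest transfers this connectivity to the subhypergraph $(V(H), E_{s_i}')$. Hence for any two $u, v \in F$ there exists a sequence $F_1, \dots, F_\ell \in E_{s_i}'$ with $u \in F_1$, $v \in F_\ell$, and $F_j \cap F_{j+1} \neq \emptyset$ for every $j \in [\ell - 1]$. Because $p$ is a good assignment and $E_{s_i}' \subseteq E_s$, Definition \ref{def:goodAssignment} forces $p(F_j) = 0$ for each $j$, so by the construction of $L_p$ each $F_j$ becomes a clique. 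Choosing shared vertices $w_j \in F_j \cap F_{j+1}$, consecutive entries of $u, w_1, \dots, w_{\ell - 1}, v$ lie in a common clique, so this sequence forms a walk in $L_p$ from $u$ to $v$. Thus $u$ and $v$ lie in the same connected component of $L_p$. Applying this to every pair of vertices of $F$ shows that $F$ lies in a single component of $L_p$, and since $F \cap D \neq \emptyset$ that component must be $D$, giving $F \subseteq D$ --- the desired contradiction.

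The delicate ingredient is the connectivity-preservation step: the argument truly needs that the spanning forest $E_{s_i}'$ preserves connectivity from $(V(H), E_{s_i})$ down to $(V(H), E_{s_i}')$, because the goodness of $p$ only constrains $p(F')$ for $F' \in E_s$, leaving $p(F)$ entirely unconstrained for a hyperedge $F \in E_{s_i} \setminus E_{s_i}'$. Absent this transfer, nothing prevents a monochromatic $F \in E_{s_i}$ from being split by the cliques of some $\lambda_{F, p(F)}$ in $L_p$ and hence from straddling two $L_p$-components of the same color, which would falsify the lemma. Once the connectivity-transfer step is secured, the remainder is just mechanical chaining of cliques in $L_p$ along their shared vertices.
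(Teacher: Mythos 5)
Your proof takes the same approach as the paper's: deduce $F\in E_{s_i}$, use the spanning forest $E_{s_i}'$ to place all vertices of $F$ in one $L_p$-component, and derive a contradiction. However, your version is more careful at one step and thus actually more reliable than the paper's own argument. The paper asserts that there is a \emph{single} hyperedge $F'\in E_{s_i}'$ with $F'\supseteq F$, which is false in general for a spanning forest (e.g.\ take $E_{s_i}=\{\{a,b\},\{b,c\},\{a,c\}\}$ with $F=\{a,c\}$ and $E_{s_i}'=\{\{a,b\},\{b,c\}\}$: no $F'$ in the forest contains both endpoints of $F$). Your chaining argument --- passing through a sequence of hyperedges of $E_{s_i}'$ whose successive intersections are nonempty, each of which is a clique in $L_p$ because $p$ is good and $E_{s_i}'\subseteq E_s$ --- is the correct way to get from ``same connected component of the spanning forest'' to ``same connected component of $L_p$''.

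You also correctly flag the subtle prerequisite: the whole argument requires the spanning forest to preserve the connectivity of $(V(H), E_{s_i})$. Note that the paper's literal definition of spanning forest (``a subset $E'\subseteq E(H)$ of minimum size such that the set containing all endpoints of the hyperedges in $E'$ is equal to $V(H)$'') is just a minimum edge cover, which does \emph{not} preserve connectivity in general, so both your proof and the paper's implicitly rely on the standard connectivity-preserving reading of ``spanning forest.'' Your proof is sound under that reading, which is the one the paper clearly intends.
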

\begin{proof}
Suppose that the statement is false, that is, there exists $F \in E(H)\setminus E_m$ such that $F \cap D \neq \emptyset$ and $ F\setminus D \neq \emptyset$. Since $F \notin E_m$, $F \cap D \neq \emptyset$ and $\Upsilon(D) >0$, there exists $j \in [k]$ such that $F \in E_{s_j}$. Since $F \cap D \neq \emptyset$ and $\Upsilon(D) =i$, we have that $j=i$, that is, $F \in E_{s_i}$. Consider any spanning forest $E_{s_i}'$ of the hypergraph with vertex set $V(H)$ and edge set $E_{s_i}$. Observe that by the definition of $L_p$, for any spanning forest $E_{s_i}$, all vertices of $F$ are present in some single tree of that spanning forest. Therefore, there exists some $F' \in E_{s_i}'$, where $F'$ could be the same as the hyperedge $F$, such that the vertices of $F'$ form a superset of $F$. Since $p$ is a good assignment, $p(F') =0$. Thus, the definition of $L_p$ implies that all the vertices of $F$ belong to the same connected component, which contradicts that $F \setminus D \neq \emptyset$.
\end{proof}

\subsection{Rules to Modify a Good Assignment}

We first modify the good assignment $p$ by applying the following two rules exhaustively, prioritizing Rule 1 over Rule 2. Note that whenever we change $p$, we update $L_p$ accordingly. 

\begin{tcolorbox}
\noindent\textbf{Rule 1}: If there exist a connected component $D$ of $L_p$ and a hyperedge $F \in E(H)$ such that $F \subseteq D$ and $p(F) >0$, then update $p(F)=0$.
\end{tcolorbox}

\begin{tcolorbox}
\noindent\textbf{Rule 2}: If there exist a connected component $D$ of $L_p$, vertices $v_1,v_2 \in D$ ($v_1$ could be equal to $v_2$) and hyperedges $F_1, F_2 \in E(H)$ ($F_1$ could be equal to $F_2$) such that $F_1 \cap D \neq \emptyset$, $F_2 \cap D \neq \emptyset$, $F_1 \setminus D \neq \emptyset$, $F_2 \setminus D \neq \emptyset$, $p(F_1) = i >0$, $p(F_2)=j > 0$, ${\lambda}_{{F}_{1}, i}(v_1) \in [k]$ and ${\lambda}_{{F}_{2},j}(v_2)=0$, then update $p(F_1)=0$.
\end{tcolorbox}

\begin{lemma}\label{claim:hp:modifyp}
After any application of Rule 1 and Rule 2, $p$ remains a good assignment.
\end{lemma}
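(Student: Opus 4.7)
The plan is to prove the lemma by induction on the number of applications of the two rules: it suffices to verify that a single application of either rule, starting from a good assignment $p$, yields a good assignment. Recall that a good assignment must satisfy (i) $p(F)=0$ for all $F\in E_s$ and (ii) $p(F)=i>0$ with $\lambda_{F,i}=\Upsilon|_F$ for all $F\in E_m$. Because both rules only reset values of $p$ to $0$, condition (i) is automatic: if the modified hyperedge $F$ happens to lie in $E_s$, then the original hypothesis $p(F)>0$ already contradicts goodness of $p$; otherwise there is simply nothing to check on $E_s$. Thus all the work reduces to showing that the hyperedge being modified does \emph{not} belong to $E_m$, so that condition (ii) is preserved.

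For Rule 1, the hyperedge $F$ with $p(F)>0$ being zeroed is entirely contained in a single connected component $D$ of $L_p$. By Lemma~\ref{claim:hp:prop1}, $\Upsilon(D)=c$ for some single $c\in[k]_0$, so every vertex of $F$ receives the same $\Upsilon$-color. By the definition of $E_m$, this forces $F\notin E_m$, and we are done for Rule~1.

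For Rule 2, the hyperedge being zeroed is $F_1$, which satisfies $F_1\cap D\neq\emptyset$ and $F_1\setminus D\neq\emptyset$ for some component $D$ of $L_p$. First I would pin down the color of $D$. By Lemma~\ref{claim:hp:prop1}, $\Upsilon(D)=c$ for a single $c\in[k]_0$; I claim $c=0$. Suppose for contradiction that $c\in[k]$. Since $F_2$ also crosses $D$ (i.e.\ $F_2\cap D\neq\emptyset$ and $F_2\setminus D\neq\emptyset$), Lemma~\ref{claim:hp:prop2} yields $F_2\in E_m$, and then the goodness of $p$ gives $\lambda_{F_2,j}=\Upsilon|_{F_2}$, so $\lambda_{F_2,j}(v_2)=\Upsilon(v_2)=c\neq 0$, contradicting the hypothesis $\lambda_{F_2,j}(v_2)=0$ of Rule~2. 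Hence $\Upsilon(D)=0$. Now suppose toward a contradiction that $F_1\in E_m$. By goodness of $p$, $\lambda_{F_1,i}=\Upsilon|_{F_1}$, so $\lambda_{F_1,i}(v_1)=\Upsilon(v_1)=0$ (because $v_1\in D$ and $\Upsilon(D)=0$), contradicting the hypothesis $\lambda_{F_1,i}(v_1)\in[k]$. Therefore $F_1\notin E_m$, and zeroing $p(F_1)$ preserves condition (ii).

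The main obstacle, such as it is, lies in the second half of Rule 2: one must extract the color of $D$ by a two-step argument, first using $F_2$ and Lemma~\ref{claim:hp:prop2} to argue $\Upsilon(D)=0$, and only then using $F_1$ together with the assumed value $\lambda_{F_1,i}(v_1)\neq 0$ to rule out $F_1\in E_m$. The asymmetric roles of $F_1$ and $F_2$ in the rule (the former being zeroed, the latter serving only as a ``witness'' that $D$ has color $0$) are crucial, and the proof confirms that Rule 2 is safe precisely because of these carefully chosen hypotheses.
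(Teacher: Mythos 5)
Your proof is correct and uses the same two ingredients as the paper's (Lemma~\ref{claim:hp:prop1} for Rule~1, and Lemmas~\ref{claim:hp:prop1} and~\ref{claim:hp:prop2} together with goodness for Rule~2). The only difference is organizational: the paper starts from the contradiction hypothesis $F_1\in E_m$ and chains through $\Upsilon(D)>0$, then $F_2\in E_m$, then $\Upsilon(D)=0$ to reach the contradiction, whereas you first pin down $\Upsilon(D)=0$ unconditionally via $F_2$ and only then dispatch $F_1$ — a slightly cleaner layering of the same reasoning.
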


\begin{proof}
Let us first prove that if $p$ was a good assignment, then after the application of Rule 1, the modified $p$ is still a good assignment. From Lemma \ref{claim:hp:prop1}, $\Upsilon(D)=i$ for some $i \in [k]_0$. Thus, if $F\subseteq D$, then $F \not \in E_m$. Hence, when we redefine $p(F)=0$, $p$ remains a good assignment.

Let us now prove that if $p$ was a good assignment, then after the application of Rule 2, the modified $p$ is still a good assignment. To prove this, it is enough to prove that $F_1 \notin E_m$. Suppose, for the sake of contradiction, that $F_1 \in E_m$. Since $p$ is a good assignment, ${\lambda}_{F_1 , i} = {\Upsilon}|_{F}$. Denote ${\lambda}_{F_1 , i }(v_1) = c$, where $c \in [k]$. Since $v_1 \in D$ and ${\lambda}_{F_1 , i }(v_1) = c > 0$, from Lemma \ref{claim:hp:prop1}, $\Upsilon(D)=c > 0$. From Lemma \ref{claim:hp:prop2}, $F_2 \in E_m$. Again, since $p$ is a good assignment, ${\lambda}_{F_2 , j} = {\Upsilon}|_{F}$. Since ${\lambda}_{F_2 , j }(v_2) = 0$ and $v_2 \in D$, this implies that $\Upsilon(D) =0$, which is a contradiction. 
\end{proof}

For each connected component $D$ of $L_p$, let us now define a \emph{label set} $L(D) \subseteq [k]_0$ as follows. For any $i \in [k]_0$, we insert $i$ into $L(D)$ if and only if there exists $F \in E(H)$ such that $F \cap D \neq \emptyset$, $p(F) =j >0$ and ${\lambda}_{F,j}(F \cap D) = i$. Observe that $L(D)$ could be empty.

Let us now turn to analyze the labels sets we have just defined.

\begin{lemma}\label{lemma:hp:empty}
Let $D$ be connected component of $L_p$ such that $L(D) = \emptyset$. Then, for any $F \in E(H)$ such that $F \cap D \neq \emptyset$, $F \setminus D = \emptyset$.
\end{lemma}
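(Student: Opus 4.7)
The plan is to argue by contrapositive: assume there exists some $F \in E(H)$ with $F \cap D \neq \emptyset$ and $F \setminus D \neq \emptyset$, and I aim to show $L(D) \neq \emptyset$.

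First I would dispatch the case $p(F) = 0$: by the construction of $L_p$ such an $F$ forms a clique there, so intersecting the connected component $D$ forces $F \subseteq D$, contradicting $F \setminus D \neq \emptyset$. Hence $p(F) = i > 0$, and I exploit the following structural observation: for every color $c \in [k]_0$, the preimage $\lambda_{F,i}^{-1}(c)$ is a clique in $L_p$ and therefore lies entirely inside $D$ or entirely outside $D$. In particular $\lambda_{F,i}(F \cap D)$ and $\lambda_{F,i}(F \setminus D)$ are disjoint nonempty subsets of $[k]_0$.

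Next I would invoke the exhaustive application of Rule~2 with $F_1 = F_2 = F$: the rule cannot fire, so there do not exist $v_1, v_2 \in F \cap D$ with $\lambda_{F,i}(v_1) \in [k]$ and $\lambda_{F,i}(v_2) = 0$. Thus $\lambda_{F,i}|_{F \cap D}$ is either identically $0$ --- in which case $\lambda_{F,i}(F \cap D) = 0$ in the sense of Section~\ref{sec:prelims}, so $0 \in L(D)$ and we are finished --- or it takes values only in $[k]$. In the latter case I appeal to Lemma~\ref{claim:hp:prop1} to write $\Upsilon(D) = c$ for a single $c \in [k]_0$. If $c > 0$, Lemma~\ref{claim:hp:prop2} places $F \in E_m$, and the goodness of $p$ then forces $\lambda_{F,i} = \Upsilon|_F$, which is the constant $c$ on $F \cap D$, so $c \in L(D)$.

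The remaining subcase $\Upsilon(D) = 0$ is the main technical obstacle. There $F \not\in E_m$ (otherwise goodness would give $\lambda_{F,i} = \Upsilon|_F \equiv 0$ on $F \cap D$, contradicting $\lambda_{F,i}(F \cap D) \subseteq [k]$), so $F \in E_b$ and the goodness of $p$ does not directly pin down $\lambda_{F,i}|_F$. I plan to close this case by exploiting the fact that each $u \in F \setminus D$ lies in some other component $D'$ of $L_p$ which, by Lemma~\ref{claim:hp:prop1}, also has $\Upsilon(D') = 0$, and by chasing the clique structure of the preimages $\lambda_{F,i}^{-1}(c)$ across these components. The disjointness of the color sets on $F \cap D$ and $F \setminus D$, together with the exhaustive application of Rule~2 (possibly with $F_1 = F$ and $F_2 = F$ viewed through $D'$, or with $F_2$ a different crossing hyperedge), should expose a configuration witnessing Rule~2 and hence force $p(F) = 0$, contradicting $p(F) = i > 0$ and completing the contrapositive.
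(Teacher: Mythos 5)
Your proof diverges substantially from the paper's, and it has a genuine gap that you yourself acknowledge.

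The paper's proof is a three-line observation that never invokes Rules~1 or~2, Lemma~\ref{claim:hp:prop1}, Lemma~\ref{claim:hp:prop2}, or the witnessing coloring $\Upsilon$ at all. It argues directly from the definition of $L(D)$: if any hyperedge $F$ with $p(F) > 0$ meets $D$, then $L(D)$ is nonempty; contrapositively, $L(D) = \emptyset$ forces $p(F) = 0$ for every $F$ meeting $D$, and a $p$-zero hyperedge is a clique in $L_p$ and hence lies wholly inside one component. Your first two steps (contrapositive, and dispatching $p(F) = 0$) coincide with the paper's, but the paper then simply appeals to the definition of $L(D)$, whereas you launch into a case analysis driven by Rule~2, $\Upsilon(D)$, and the $E_m$/$E_b$ distinction that the paper never needs.

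The divergence stems from your reading of the clause $\lambda_{F,j}(F \cap D) = i$ in the definition of $L(D)$: you interpret it (per the preliminaries' convention for $f(A') = b$) as requiring $\lambda_{F,j}$ to be \emph{constant} on $F \cap D$, and under that reading a crossing hyperedge with $p(F) > 0$ does \emph{not} automatically populate $L(D)$. That is exactly what leads to your ``remaining subcase $\Upsilon(D) = 0$,'' which you do not actually close --- your final paragraph is a plan, not a proof, and as stated it does not give the Rule~2 configuration you hope for. Concretely, if $F \in E_b$ (all of $F$ is $\Upsilon$-colored $0$), $p(F) = j > 0$, and $\lambda_{F,j}$ assigns two distinct \emph{nonzero} values to $F \cap D$ while $F \cap D$ is held together by a $p$-zero hyperedge, then $F$ spills out of $D$, Rule~1 and Rule~2 cannot fire (no vertex of $F\cap D$ is $\lambda_{F,j}$-colored $0$ and $F$ may be the only $p$-positive hyperedge meeting $D$), $p$ remains good since $F \notin E_m \cup E_s$, yet $\lambda_{F,j}$ is not constant on $F\cap D$ so nothing is inserted into $L(D)$. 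So under your reading the subcase is genuinely stuck. The paper's one-line observation only goes through if $L(D)$ collects every color that \emph{some} vertex of $F \cap D$ receives, rather than only the color of a monochromatic $F \cap D$; you should reconcile your reading of the $L(D)$ definition with the way it is used here before pursuing the case analysis further.
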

\begin{proof}
Observe that if there exists $F \in E(H)$ such that $p(F) > 0$ and $F \cap D \neq \emptyset$, then $|L(D)| \geq 1$. Therefore, if $L(D) = \emptyset$, then for all $F \in E(H)$ such that $F \cap D \neq \emptyset$, we have that $p(F) =0$. Thus, from the construction of $L_p$, we have that $F \setminus D = \emptyset$.
\end{proof}

\begin{lemma}\label{claim:hp:multilabel}
For any connected component $D$ of $L_p$, if $\Upsilon(D) = i >0$, then either $L(D) = \emptyset$ or $L(D) = \{i\}$.
\end{lemma}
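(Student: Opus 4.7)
The plan is to assume $L(D) \neq \emptyset$ and argue that every element of $L(D)$ must equal $i$. Pick an arbitrary $j \in L(D)$. By the definition of the label set, there exists some $F \in E(H)$ with $F \cap D \neq \emptyset$, $p(F) = s > 0$, and $\lambda_{F,s}(F \cap D) = j$ (i.e.\ $\lambda_{F,s}$ assigns $j$ to every vertex of $F \cap D$ under the convention of Section~\ref{sec:prelims}).

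First, I would rule out the case $F \subseteq D$: if this held, then Rule~1 would apply to the pair $(D, F)$, contradicting that Rules~1 and~2 have been applied exhaustively. Hence $F \setminus D \neq \emptyset$, and since $\Upsilon(D) = i > 0$, Lemma~\ref{claim:hp:prop2} forces $F \in E_m$.

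Next, I would invoke Lemma~\ref{claim:hp:modifyp} to conclude that $p$ is still a good assignment after rule application; combined with $F \in E_m$, the definition of a good assignment (Definition~\ref{def:goodAssignment}, item~2) gives $\lambda_{F,s} = \Upsilon|_F$. Pick any $v \in F \cap D$; then $j = \lambda_{F,s}(v) = \Upsilon(v) = i$, where the last equality uses the hypothesis that $\Upsilon(D) = i$. Since $j \in L(D)$ was arbitrary, $L(D) \subseteq \{i\}$, which (together with the assumption $L(D) \neq \emptyset$) yields $L(D) = \{i\}$.

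There is no real obstacle here: the statement is essentially a bookkeeping consequence of Rule~1 (which kills purely-internal positive assignments, so $F$ must straddle $D$) and the goodness of $p$ (which pins the restriction $\lambda_{F,s}$ to $\Upsilon|_F$, forcing the color of $F \cap D$ to be $i$). The only subtle point worth flagging is that the label $j$ is allowed to be $0$ a priori, but the argument above handles that uniformly, since $\Upsilon|_F$ is determined and its value on $F \cap D$ is the positive integer $i$.
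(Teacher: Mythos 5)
Your proof is correct and follows essentially the same route as the paper's: rule out $F \subseteq D$ via exhaustive application of Rule~1, invoke Lemma~\ref{claim:hp:prop2} to place $F$ in $E_m$, and then use goodness of $p$ to pin $\lambda_{F,s} = \Upsilon|_F$, forcing the label on $F \cap D$ to be $i$. The only (minor) difference is that you are slightly more explicit in quantifying over an arbitrary $j \in L(D)$ and in citing Lemma~\ref{claim:hp:modifyp} for the persistence of goodness under rule application, both of which the paper leaves implicit.
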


\begin{proof}
Suppose that $L(D) \neq \emptyset$. Then, there exists $F \in E(H)$ such that $F \cap D \neq \emptyset$ and  $p(F) = j > 0$. Let ${\lambda}_{F,j}(F \cap D) = s$. We will now show that $s=i$.
First of all, let us argue that $F \setminus D \neq \emptyset$. Indeed, if $F \setminus D = \emptyset$, then $F \subseteq D$. In this case, since $p$ is a good assignment, where Rule 1 has been exhaustively applied, $p(F)$ should be equal to $0$, which is a contradiction. Thus, since $\Upsilon(D) = i > 0$, $F \cap D \neq \emptyset$ and $F \setminus D \neq \emptyset$, from Lemma \ref{claim:hp:prop2}, we have that $F \in E_m$. Then, since $p$ is a good assignment, ${\lambda}_{F,j}(F \cap D) = {\Upsilon}|_{F}$. Since $\Upsilon(D)=i$, we derive that indeed ${\lambda}_{F,j}(F \cap D)=i$. Thus, $L(D) = \{i\}$.
\end{proof}

By Lemma \ref{claim:hp:multilabel}, we have that if for a connected component $D$ of $L_p$, either $L(D) = \{0\}$ or $|L(D)| \geq 2$, then $\Upsilon(D) = 0$.

\begin{lemma} \label{claim:hp:ld0}
If $D$ is a connected component of $L_p$ such that $L(D) = \{l_d\}$, then either $\Upsilon(D) = l_d$ or $\Upsilon(D) =0$. 
\end{lemma}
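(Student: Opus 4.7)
The plan is to observe that this lemma is essentially an immediate consequence of the previously established Lemmas \ref{claim:hp:prop1} and \ref{claim:hp:multilabel}, so the proof will be very short and consist of a simple case analysis on the value $\Upsilon(D)$.

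First I would invoke Lemma \ref{claim:hp:prop1} to conclude that since $D$ is a connected component of $L_p$, there exists some $i \in [k]_0$ such that $\Upsilon(D) = i$; that is, $\Upsilon$ assigns a single color to all vertices of $D$. The goal is then to show that $i \in \{0, l_d\}$.

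Next I would split into two cases based on whether $i = 0$ or $i > 0$. If $i = 0$, there is nothing left to prove, since this is exactly one of the two allowed conclusions. If $i > 0$, I would apply Lemma \ref{claim:hp:multilabel} to the component $D$: since $\Upsilon(D) = i > 0$, that lemma guarantees that either $L(D) = \emptyset$ or $L(D) = \{i\}$. However, by the hypothesis of the present lemma, $L(D) = \{l_d\}$, which is non-empty, so the first alternative is ruled out; therefore $L(D) = \{i\}$, and comparing singletons yields $l_d = i = \Upsilon(D)$. This establishes the second allowed conclusion, completing the proof.

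The main ``obstacle'' here is really not an obstacle at all: all the structural work (that $\Upsilon$ is constant on connected components of $L_p$, and that the nonzero color of such a component is forced by its label set when that set is nonempty) has already been carried out in the preceding two lemmas. The present statement is simply repackaging Lemma \ref{claim:hp:multilabel} in contrapositive form for use in the subsequent dynamic programming over components, where one wants to handle components with singleton label sets by trying both the color $l_d$ and the color $0$.
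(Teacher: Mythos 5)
Your proof is correct, and arguably cleaner than the one in the paper, though it travels along a closely parallel route. The paper's proof re-derives the needed fact from first principles: it picks the specific hyperedge $F$ with $p(F)=i>0$, $F\cap D\neq\emptyset$, and $\lambda_{F,i}(F\cap D)=l_d$ that witnesses $l_d\in L(D)$, supposes $\Upsilon(D)=j\neq 0$, applies Lemma~\ref{claim:hp:prop2} to conclude $F\in E_m$ (implicitly relying on the exhaustive application of Rule~1 to ensure $F\setminus D\neq\emptyset$, a precondition of Lemma~\ref{claim:hp:prop2} that the paper leaves unstated here), and then uses $\lambda_{F,i}=\Upsilon|_F$ together with Lemma~\ref{claim:hp:prop1} to get $\Upsilon(D)=l_d$. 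That is essentially the same chain of reasoning already packaged inside Lemma~\ref{claim:hp:multilabel}. You instead invoke Lemma~\ref{claim:hp:multilabel} as a black box: if $\Upsilon(D)=i>0$, then $L(D)\in\{\emptyset,\{i\}\}$; since $L(D)=\{l_d\}$ is nonempty, $l_d=i$. Both approaches establish the same dichotomy; yours is more modular, avoids the redundant argument, and sidesteps the paper's implicit need to re-verify $F\setminus D\neq\emptyset$ before invoking Lemma~\ref{claim:hp:prop2}.
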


\begin{proof}
 Since $L(D)=\{l_d\}$, there exists $F \in E(H)$ such that $p(F) = i > 0$, $F \cap D \neq \emptyset$ and ${\lambda}_{F,i}(F \cap D) = l_d$. Denote $\Upsilon(D)=j$, and suppose that $j \neq 0$, else we are done. Since $j \neq 0$, from Lemma \ref{claim:hp:prop2} we have that $F \in E_m$. Then, since $p$ is a good assignment, ${\lambda}_{F,i} = {\Upsilon}|_{F}$. Finally, since all the vertices of $D$ are assigned the same color by $\Upsilon$ (by Lemma \ref{claim:hp:prop1}), we have that $\Upsilon(D) = l_d$.
\end{proof}

For a connected component $D$ of $L_p$ such that $|L(D)| \geq 2$, let us redefine the label set of $D$ to be $L(D) =\{0\}$. Now, for any connected component $D$ of $L_p$, $|L(D)| \leq 1$. Moreover, if $L(D) = \{0\}$, then $\Upsilon(D) =0$ (by Lemma \ref{claim:hp:multilabel}). 
We call a connected  component $D$ of $L_p$ such that $L(D)=\{0\}$ a {\em $0$-component}.





Let us continue modifying the good assignment $p$, now with the following rule. Again, whenever we modify $p$, we update $L_p$ accordingly.

\begin{tcolorbox}
\noindent\textbf{Rule 3:} If there exist $F \in E(H)$ and two distinct $0$-connected components of $L_p$, $D_1$ and $D_2$, such that $F \cap D_1 \neq \emptyset$ and $F \cap D_2 \neq \emptyset$, then update $p(F)=0$.
\end{tcolorbox}

\begin{lemma}\label{claim:hp:modifyp1}
The assignment resulting by applying Rule 3 to $p$ is a good assignment.
\end{lemma}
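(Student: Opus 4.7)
My plan is to show that the only modification performed by Rule~3, namely setting $p(F)=0$ for the offending hyperedge $F$, cannot violate either clause of the definition of a good assignment. Since $p$ changes at a single hyperedge, I only need to re-verify the two clauses at $F$.

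The first clause ($F \in E_s \Rightarrow p(F)=0$) requires no real work: if $F \in E_s$, then $p(F)$ was already $0$ before Rule~3 fired, because $p$ was good; in that case the update is vacuous. Thus the substantive task is the second clause, i.e., proving $F \notin E_m$. I will argue by contradiction. Assume $F \in E_m$. Then, since $p$ is good, $p(F)=i>0$ and $\lambda_{F,i} = \Upsilon|_F$. The hypothesis of Rule~3 furnishes two distinct $0$-components $D_1, D_2$ with $F \cap D_1 \neq \emptyset$ and $F \cap D_2 \neq \emptyset$. By the discussion following Lemma~\ref{claim:hp:multilabel}, every $0$-component $D$ satisfies $\Upsilon(D)=0$. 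Consequently every vertex of $(F\cap D_1)\cup(F\cap D_2)$ is mapped to $0$ by $\lambda_{F,i}$, so $(F\cap D_1)\cup(F\cap D_2) \subseteq \lambda_{F,i}^{-1}(0)$.

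The decisive step invokes the construction of $L_p$: because $p(F)=i>0$, the set $\lambda_{F,i}^{-1}(0)$ was turned into a clique of $L_p$, and hence lies in a single connected component. Combined with $F\cap D_1\neq\emptyset$ and $F\cap D_2\neq\emptyset$, this forces $D_1=D_2$, contradicting their distinctness. Hence $F\notin E_m$, and the updated assignment is still good. The only non-routine point, and the step I regard as the crux of the argument, is the recognition that the clique added in $L_p$ for the colour class $\lambda_{F,i}^{-1}(0)$ immediately forbids a multichromatic hyperedge from reaching into two distinct $0$-components; once this is noticed the proof is a one-line contradiction and requires no further case analysis or appeal to Rules~1 and~2.
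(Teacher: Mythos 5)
Your proposal is correct and takes essentially the same approach as the paper's proof: both argue by contradiction, assume $F\in E_m$, use goodness of $p$ to get $p(F)=i>0$ with $\lambda_{F,i}=\Upsilon|_F$, observe that the $0$-components $D_1,D_2$ have $\Upsilon(D_1)=\Upsilon(D_2)=0$, and derive a contradiction from the fact that each colour class of $\lambda_{F,i}$ is made a clique in $L_p$. The paper phrases the contradiction by noting that distinct components force distinct colour classes $j_1\neq j_2$, conflicting with $\Upsilon$ being $0$ on both, while you run the same deduction in the other direction (both must land in $\lambda_{F,i}^{-1}(0)$, which is a clique, forcing $D_1=D_2$); these are the same argument.
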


\begin{proof}
 To prove the lemma, it is sufficient to show that $F \notin E_m$. Suppose that this claim is false, that is, $F \in E_m$ and hence after the update, we obtain an assignment that is not good. Since (the original) $p$ is a good assignment, we have that $p(F) = i > 0$ such that ${\lambda}_{F,i} = {\Upsilon}|_{F}$. Since $D_1$ and $D_2$ are different connected components of $L_p$, $(F \cap D_1) \subseteq {\lambda}_{F,i}^{-1}(j_1)$, $(F \cap D_2) \subseteq {\lambda}_{F,i}^{-1}(j_2)$ and $j_1 \neq j_2$. However, since $D_1$ and $D_2$ are $0$-components of $L_p$, $\Upsilon(D_1) = 0$ and $\Upsilon(D_2) = 0$. This contradicts that ${\lambda}_{F,i} = {\Upsilon}|_{F}$. Therefore, $F \notin E_m$.
\end{proof}

To further analyze $0$-components, define $B$ as the set containing every vertex $v \in V(H)$ such that $\Upsilon(v) =0$ and there exists $F\in E_m$ that is incident to $v$. 

\begin{lemma} \label{claim:hp:bigvertices}
Let $D$ be a connected component of $L_p$ containing a vertex $v\in B$. Then, $D$ is a $0$-component.
\end{lemma}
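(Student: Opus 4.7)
The plan is to exhibit, for the component $D$, a hyperedge that forces $0$ to appear in $L(D)$, and then invoke the redefinition rule that collapses multi-element label sets to $\{0\}$. Let $v\in B\cap D$. By the definition of $B$ we have $\Upsilon(v)=0$ and there exists a hyperedge $F\in E_m$ with $v\in F$. Since $v\in D$, Lemma~\ref{claim:hp:prop1} applies and gives $\Upsilon(D)=0$, i.e.\ every vertex of $D$ is colored $0$ by $\Upsilon$.

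Next, I would analyse $F$ with respect to the (current) good assignment $p$. Note that Rules~1--3 preserve goodness (Lemmas~\ref{claim:hp:modifyp} and~\ref{claim:hp:modifyp1}), so $p$ is still good on $F\in E_m$; in particular, $p(F)=i>0$ and ${\lambda}_{F,i}={\Upsilon}|_{F}$. Because $F\in E_m$ contains vertices of at least two distinct $\Upsilon$-colors, while all vertices of $D$ have $\Upsilon$-color $0$, we must have $F\setminus D\neq \emptyset$; of course $F\cap D\neq\emptyset$ as well, as it contains $v$. Moreover, for every $u\in F\cap D$, $\Upsilon(u)=0$, and thus ${\lambda}_{F,i}(u)=\Upsilon(u)=0$. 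Hence ${\lambda}_{F,i}(F\cap D)=0$ (the restriction is constantly zero on $F\cap D$).

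By the defining condition of $L(D)$, the existence of such an $F$ (with $F\cap D\neq\emptyset$, $p(F)=i>0$, and ${\lambda}_{F,i}(F\cap D)=0$) implies $0\in L(D)$. If $L(D)=\{0\}$, then $D$ is a $0$-component by definition. Otherwise $|L(D)|\geq 2$, in which case the redefinition rule stated just before Rule~3 resets $L(D)=\{0\}$, and again $D$ is a $0$-component. Either way, $D$ is a $0$-component, as required.

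The only potential obstacle is verifying that $p(F)>0$ is still maintained after the exhaustive applications of Rules~1--3, but this follows immediately from the already-proved invariants that each rule preserves goodness: Rule~1 would only zero out $p$ on a hyperedge entirely inside one component (impossible for $F\in E_m$ by the color argument above), Rule~2 is shown in Lemma~\ref{claim:hp:modifyp} to act only on hyperedges outside $E_m$, and Rule~3 is shown in Lemma~\ref{claim:hp:modifyp1} to act only on hyperedges outside $E_m$ as well. Hence the argument above goes through.
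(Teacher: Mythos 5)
Your proof is correct and follows essentially the same approach as the paper's: obtain $F\in E_m$ containing $v$, use goodness of $p$ to get $p(F)=i>0$ with $\lambda_{F,i}=\Upsilon|_F$, apply Lemma~\ref{claim:hp:prop1} to conclude $\lambda_{F,i}(F\cap D)=0$, hence $0\in L(D)$, and then conclude $D$ is a $0$-component via the label-set redefinition. You are slightly more explicit than the paper in two places (the observation $F\setminus D\neq\emptyset$, which is not actually needed, and the case split on $|L(D)|$ at the end), but the substance is the same.
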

\begin{proof}
From the definition of the set $B$, there exists $F \in E_m$ such that $v \in F$. Since $p$ is a good assignment, $p(F) = i > 0$ such that ${\lambda}_{F,i} = {\Upsilon}|_{F}$. Since $\Upsilon(v) =0$, $v \in F$ and $v \in D$, we have that ${\lambda}_{F,i}(F \cap D) = 0$. Hence, $0 \in L(D)$. Therefore, by Lemma \ref{claim:hp:multilabel}, we conclude that $D$ is a $0$-component of $L_p$.
\end{proof}

\subsection{Constructing a Supergraph $L^*_p$ of $L_p$}\label{sec:supergraph}

Let us now construct another simple undirected graph ${L}^{*}_p$, which is a supergraph of $L_p$ with the same vertex set as of $L_p$ and the following additional edges. If there exist $F \in E(H)$ and two distinct connected components of $L_p$, $D_1$ and $D_2$, such that $F \cap D_1 \neq \emptyset$, $F \cap D_2 \neq \emptyset$, $L(D_1) \neq \{0\}$ and $L(D_2) \neq \{0\}$, then insert an edge between some vertex of $D_1$ and some vertex of $D_2$ into ${L}^{*}_p$.
Clearly, any connected component $D$ of $L_p$ is contained in some connected component of $L^{*}_{p}$. This leads us to the following definition.
\begin{definition}
Give a connected component $D^{*}$ of \lstarp, we say that a connected component $D$ of $L_p$ is a \emph{constituent} of \dstar{} if $D \subseteq {D}^{*}$.
\end{definition}

A component ${D}^{*}$ of \lstarp{} is called a $0$-component of \lstarp{} if it has only one constituent component and that constituent component is a $0$-component in $L_p$. We now proceed to analyze the new graph $L^*_p$.

\begin{lemma}\label{claim:hp:emptylabel}
Let \dstar{} be some connected component of \lstarp\ that has a constituent component $D$  such that $L(D) = \emptyset$. Then, $D$ is the only constituent component of \dstar, that is, ${D}^{*}=D$.
\end{lemma}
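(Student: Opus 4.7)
The plan is to argue directly from the construction of $L^*_p$ in Section \ref{sec:supergraph}: an extra edge is inserted between two distinct components $D_1, D_2$ of $L_p$ only when some hyperedge $F$ meets both of them (with the additional constraint $L(D_1), L(D_2) \neq \{0\}$). So to show $D^*=D$, it suffices to show that no hyperedge $F \in E(H)$ simultaneously intersects $D$ and some other component of $L_p$.

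First, I would invoke Lemma \ref{lemma:hp:empty}, which states that if a connected component $D$ of $L_p$ satisfies $L(D) = \emptyset$, then every hyperedge $F \in E(H)$ meeting $D$ is entirely contained in $D$ (i.e., $F \setminus D = \emptyset$). Thus no hyperedge of $H$ can have endpoints both in $D$ and outside $D$.

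Next, suppose for contradiction that $D^*$ has another constituent $D' \neq D$. By the construction of $L^*_p$, there must be a path in $L^*_p$ from some vertex of $D$ to some vertex of $D'$ whose edges are either edges of $L_p$ (which stay inside a single component of $L_p$) or newly added edges of $L^*_p$. The first time this path leaves $D$, it must use a newly added edge from some vertex in $D$ to some vertex in a component $D'' \neq D$ of $L_p$. But such an edge was inserted because there is a hyperedge $F \in E(H)$ with $F \cap D \neq \emptyset$ and $F \cap D'' \neq \emptyset$, contradicting the consequence of Lemma \ref{lemma:hp:empty} that $F \subseteq D$.

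Hence $D$ has no neighbors outside itself in $L^*_p$, so the connected component of $L^*_p$ containing $D$ is exactly $D$, which gives $D^* = D$. The argument is essentially immediate once Lemma \ref{lemma:hp:empty} is in place; I do not anticipate any real obstacle, only the need to verify that the edge-insertion rule of $L^*_p$ is the sole mechanism by which constituents of a single $D^*$ can be merged.
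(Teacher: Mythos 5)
Your proof is correct and follows essentially the same route as the paper: it invokes Lemma \ref{lemma:hp:empty} to conclude that every hyperedge meeting $D$ lies entirely inside $D$, and then observes that the edge-insertion rule of $L^*_p$ cannot connect $D$ to any other component of $L_p$. The paper's proof is a two-line version of yours; your added discussion of paths leaving $D$ and where such a path would first use a newly inserted edge is a harmless elaboration of the same idea.
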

\begin{proof}
By Lemma \ref{lemma:hp:empty}, for any $F \in E(H)$ such that $F \cap D \neq \emptyset$,  we have that $F \setminus D = \emptyset$. 
Thus, by the construction of $L_{p}^{*}$, it holds that $D^{*} = D$.
\end{proof}

\begin{lemma}\label{claim:hp:disjointhyperedges}
For any $F \in E(H)$, either $F \subseteq D^{*}$ for some connected component $D^{*}$ of $L^{*}_p$, or $F$ intersects exactly two connected components of $L^*_p$, a 0-connected component $D^{*}_{1}$ of $L^*_p$ and a non $0$-connected component $D^{*}_{2}$ of $L^{*}_p$.
\end{lemma}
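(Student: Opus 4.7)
The plan is to analyze which connected components of $L_p$ the hyperedge $F$ meets and then lift the answer to $L^*_p$. First, I would dispose of the trivial case $p(F)=0$: by the construction of $L_p$, the hyperedge $F$ then forms a clique, so $F$ lies inside a single component of $L_p$ and hence inside a single component of $L^*_p$, placing us in the first case of the lemma. From now on, assume $p(F)=i>0$, so the vertices of $F$ are partitioned into the cliques $\{\lambda_{F,i}^{-1}(j)\}_{j\in[k]_0}$, each of which lies entirely inside a single $L_p$-component.

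Next, I would argue that $F$ cannot meet two distinct $0$-components of $L_p$. Assuming (as the text implicitly does) that Rules~1--3 have been applied to a common fixed point, suppose towards contradiction that $F$ meets $0$-components $D_1 \neq D_2$. Then Rule~3 would apply and update $p(F)=0$; but then $F$ would become a clique in the updated $L_p$, collapsing $D_1$ and $D_2$ into a single connected component, contradicting the distinctness of $D_1$ and $D_2$ at the supposed fixed point. Hence $F$ meets at most one $0$-component of $L_p$.

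Finally, I would appeal to the construction of $L^*_p$ to conclude. By definition, for any two distinct non-$0$-components $D_1, D_2$ of $L_p$ both met by $F$, an $L^*_p$-edge is inserted between them; so all non-$0$-components of $L_p$ that $F$ intersects are collapsed into a single connected component of $L^*_p$. Conversely, no edge of $L^*_p \setminus L_p$ is ever incident to a $0$-component of $L_p$ (the construction explicitly excludes components with $L=\{0\}$), so every $0$-component of $L_p$ is exactly a $0$-component of $L^*_p$, consistent with its definition. Combining the two facts, $F$ meets at most one $0$-component and at most one non-$0$-component of $L^*_p$; if only one appears, $F$ is contained in it (first case), and if both appear we are in the second case.

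The only obstacle worth flagging is the fixed-point assumption invoked in the second step: I should justify that the cascading applications of Rules~1--3 terminate, so that the above dichotomy is available. This follows from the monotonicity observation that every rule strictly decreases the number of hyperedges $F'$ with $p(F')>0$ (Rule~3 in particular can only fire when $p(F')>0$, since a hyperedge with $p(F')=0$ is already a clique in $L_p$ and cannot meet two distinct components), so the process reaches a fixed point in finitely many steps, at which the argument above applies.
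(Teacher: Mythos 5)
Your proof is correct and follows essentially the same approach as the paper: the trivial case $p(F)=0$ is dispatched first, Rule 3 exhaustion rules out meeting two distinct $0$-components, and the construction of $L^*_p$ forces all non-$0$-components of $L_p$ that $F$ meets to lie in a single $L^*_p$-component. The paper phrases the argument directly in terms of components of $L^*_p$ (assume $F\not\subseteq D^*$, conclude $p(F)>0$, then bound $0$- and non-$0$-components of $L^*_p$ separately), whereas you first reason at the $L_p$ level and then lift to $L^*_p$; the content is the same. Your explicit termination argument for Rules 1--3 (each application strictly decreases $|\{F': p(F')>0\}|$, and Rule 3 can only fire on a hyperedge with $p(F')>0$ since otherwise $F'$ is a clique) is a small but genuinely useful addition that the paper leaves implicit.
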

\begin{proof}
Suppose that there exists  $F \in E(H)$ such that for any connected component $D^{*}$ of $L_{p}^{*}$, $F \not  \subseteq D^{*}$, else we are done. First, observe that in this case $p(F) >0$, as otherwise $F \subseteq D$ for a connected component $D$ of $L_p$, which would imply that $F$ is contained in a connected component of $L_{p}^{*}$. We claim that $F$ intersects at most one $0$-component of $L_p^*$. To show this, suppose by way of contradiction, that  $F$ intersects at least two $0$-components of $L_{p}^{*}$, which we denote by $D^{*}_i$ and $D^{*}_j$. From the definition of a $0$-component in $L_{p}^{*}$, it follows that $D^{*}_i$ and $D^{*}_j$ are also different $0$-components in $L_p$. Since $p$ is a good assignment and Rule 3 is no longer applicable, we have that $p(F)$ should be equal to $0$, which is a contradiction. Therefore, $F$ can intersect at most one $0$-component of $L_{p}^{*}$. From the construction of $L_{p}^{*}$, observe that $F$ cannot intersect more than one non $0$-component of $L_{p}^{*}$. Hence, we conclude that if $F \not \subseteq D^{*}$, for some connected component $D^{*}$ of $L_{p}^{*}$, then $F$ intersects exactly one $0$-component of $L_{p}^{*}$, say $D_{1}^{*}$, and exactly one non $0$-component of $L_{p}^{*}$, say $D_{2}^{*}$. That is, $F = (F \cap D_{1}^{*}) \cup (F \cap D_{2}^{*})$. This concludes the proof. 
\end{proof}

For any $0$-connected component \dstar{} of \lstarp, let $E_{{D}^{*}} = \{F : F \subseteq {D}^{*}\}$. For any non $0$-connected component \dstar of \lstarp, let $E_{{D}^{*}} = \{F : F \cap {D}^{*} \neq \emptyset\}$.

\begin{lemma} \label{claim:hp:segregatehyperedges}
$E(H) = \biguplus\limits_{{D}^{*} \in {L}^{*}_p} E_{{D}^{*}}$.
\end{lemma}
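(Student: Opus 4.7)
The plan is to derive this identity as a direct corollary of Lemma \ref{claim:hp:disjointhyperedges}, which already dichotomizes the behavior of an arbitrary hyperedge with respect to the connected components of $L_p^*$. I need to establish two things: (i) every $F \in E(H)$ lies in at least one $E_{D^*}$ (coverage), and (ii) every $F \in E(H)$ lies in at most one $E_{D^*}$ (disjointness, which upgrades the union $\bigcup$ to a disjoint union $\biguplus$).

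For coverage, I would apply Lemma \ref{claim:hp:disjointhyperedges} to an arbitrary $F \in E(H)$. If $F \subseteq D^*$ for some connected component $D^*$ of $L_p^*$, then regardless of whether $D^*$ is a $0$-component (in which case $E_{D^*}$ is defined via $F \subseteq D^*$) or a non-$0$-component (in which case $F \subseteq D^*$ in particular yields $F \cap D^* \neq \emptyset$), we have $F \in E_{D^*}$. In the remaining case, $F$ intersects a non-$0$-component $D_2^*$, and then by the definition of $E_{D_2^*}$ (which only requires non-empty intersection), we again get $F \in E_{D_2^*}$.

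For disjointness, I would again split into the two cases of Lemma \ref{claim:hp:disjointhyperedges}. In the first case, if $F \subseteq D^*$, then for every other connected component $D'^*$ of $L_p^*$, the disjointness of connected components gives $F \cap D'^* = \emptyset$; hence $F \not\subseteq D'^*$ (since hypergraphs are assumed to contain no isolated vertices and hence $F \neq \emptyset$) and also $F \cap D'^* = \emptyset$, so $F \notin E_{D'^*}$ irrespective of the type of $D'^*$. In the second case, $F$ intersects precisely the $0$-component $D_1^*$ and the non-$0$-component $D_2^*$. Clearly $F \in E_{D_2^*}$, and for every other non-$0$-component $D'^*$ we have $F \cap D'^* = \emptyset$, so $F \notin E_{D'^*}$. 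For $0$-components other than $D_1^*$ we again have $F \cap D'^* = \emptyset$, so $F \not\subseteq D'^*$. The only potentially troublesome case is $D_1^*$ itself: here membership in $E_{D_1^*}$ would require $F \subseteq D_1^*$, but since $F \cap D_2^* \neq \emptyset$ and $D_1^* \cap D_2^* = \emptyset$, we conclude $F \not\subseteq D_1^*$, so $F \notin E_{D_1^*}$.

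The main subtlety (and the only place where the argument is not purely bookkeeping) is the asymmetry in the definition of $E_{D^*}$: membership is defined by containment for $0$-components but by mere intersection for non-$0$-components. This asymmetry is exactly what is needed to avoid double-counting a hyperedge that straddles a $0$-component and a non-$0$-component, and the case analysis above is designed to exploit that asymmetry. Combining coverage and disjointness yields $E(H) = \biguplus_{D^* \in L_p^*} E_{D^*}$.
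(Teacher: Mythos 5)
Your proof is correct and follows the same approach as the paper, which simply states that the lemma follows from the definition of $E_{D^*}$ and Lemma \ref{claim:hp:disjointhyperedges}; you have merely spelled out in full the coverage/disjointness case analysis implicit in that one-line proof, correctly pinpointing the asymmetric definition of $E_{D^*}$ for $0$-components versus non-$0$-components as the mechanism that prevents double-counting of straddling hyperedges.
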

\begin{proof}
The lemma follows from the definition of $E_{D^{*}}$ and Lemma \ref{claim:hp:disjointhyperedges}.
\end{proof}


For each connected component \dstar{} of \lstarp{} and for each $i \in [k]_0$, let us define the coloring function $\Phi[{D}^{*},i] : {D}^{*} \to [k]_0$ as follows. First, if $i=0$, then define $\Phi[{D}^{*},0]({D}^{*}) =0$, that is, for all $v\in D^*$, define $\Phi[{D}^{*},0]=0$. Otherwise, if $i \in [k]$, then for each \emph{constituent} component $D$ of \dstar\ such that $L(D) = \{l_d\}$, define $\Phi[{D}^{*},i](D)= l_d$, and for each constituent component $D$ of \dstar\ such that $L(D)=\emptyset$, define $\Phi[{D}^{*},i](D)= i$.

We now prove that for any connected component \dstar{} of \lstarp, there exists $i \in [k]_0$ such that our above definition of $\Phi[{D}^{*},i]$ precisely captures the way the hypothetical witnessing coloring $\Upsilon$ colors $D^*$.

\begin{lemma}\label{claim:hp:colorings}
For any connected component \dstar{} of \lstarp, there exists $i \in [k]_0$ such that ${\Upsilon}|_{{D}^{*}} =  \Phi[{D}^{*},i]$.
\end{lemma}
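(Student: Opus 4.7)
The plan is to split on the structure of $D^*$, using that $p$ is a good assignment and that Rules~1--3 together with the relabeling step have been applied exhaustively before $L_p^*$ was built.

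First I would dispose of the easy cases. If $D^*$ is a $0$-component of $L_p^*$, then by definition it has a single constituent $D$ that is a $0$-component of $L_p$, so $\Upsilon(D)=0$ (as noted right after Lemma~\ref{claim:hp:multilabel}), and $i=0$ works. Otherwise $D^*$ is not a $0$-component, and if some constituent $D$ has $L(D)=\emptyset$, then Lemma~\ref{claim:hp:emptylabel} forces $D$ to be the sole constituent of $D^*$; Lemma~\ref{claim:hp:prop1} makes $\Upsilon(D)$ a single constant $j\in[k]_0$, so I would pick $i=j$ when $j>0$ and $i=0$ when $j=0$.

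This leaves the principal case: every constituent $D$ of $D^*$ has $L(D)=\{l_D\}$ with $l_D\in[k]$, and by Lemma~\ref{claim:hp:multilabel} each such $D$ satisfies $\Upsilon(D)\in\{0,l_D\}$. The lemma will follow once I establish the dichotomy that either $\Upsilon(D)=l_D$ for \emph{every} constituent (in which case any $i\in[k]$ makes $\Phi[D^*,i]$ agree with $\Upsilon|_{D^*}$ constituent by constituent), or $\Upsilon(D)=0$ for \emph{every} constituent (in which case $i=0$ agrees).

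The hard part, and the heart of the argument, will be proving this uniformity by propagation along the extra edges of $L_p^*$. Such an edge between distinct constituents $D_1,D_2$ originates from a hyperedge $F\in E(H)$ with $F\cap D_1\neq\emptyset$ and $F\cap D_2\neq\emptyset$; note $p(F)>0$, since otherwise $F$ would be a clique of $L_p$ contained in a single component. To rule out the ``mixed'' configuration $\Upsilon(D_1)=l_{D_1}$ and $\Upsilon(D_2)=0$, I would apply Lemma~\ref{claim:hp:prop2} at $D_1$ (with $F\setminus D_1\neq\emptyset$ because $F$ meets $D_2$) to conclude $F\in E_m$. Since $p$ is good, $\lambda_{F,p(F)}=\Upsilon|_F$, so $\lambda_{F,p(F)}$ is identically $0$ on $F\cap D_2$, which in the original construction of the label set would place $0$ into $L(D_2)$; together with $l_{D_2}\in L(D_2)$ this makes the original label set of size at least two, so the relabeling should have set $L(D_2)=\{0\}$, turning $D_2$ into a $0$-component and contradicting our case assumption. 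By symmetry the other mixed configuration is also ruled out, so adjacent constituents agree, and connectedness of $D^*$ propagates the agreement throughout, finishing the proof.
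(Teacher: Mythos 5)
Your proof is correct and follows essentially the same route as the paper's: the case split (by whether $D^*$ is a $0$-component, whether a constituent has empty label set, or all constituents carry a nonzero singleton label) recovers what the paper does via $r=1$ versus $r\geq 2$, and your propagation argument is the paper's argument with Lemma~\ref{claim:hp:bigvertices} inlined (invoking Lemma~\ref{claim:hp:prop2}, then goodness of $p$ to force $0$ into the original label set of $D_2$, is exactly how that lemma is proved via the set $B$). One small slip: the deduction that $L(D)=\{l_D\}$ forces $\Upsilon(D)\in\{0,l_D\}$ is Lemma~\ref{claim:hp:ld0}, not Lemma~\ref{claim:hp:multilabel}, though it also follows from the latter together with Lemma~\ref{claim:hp:prop1}.
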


\begin{proof}
Let $D_1, \ldots, D_r$ be the constituent components of $D^{*}$. First assume that $r =1$, that is, there is only one constituent component of ${D}^{*}$. From Lemma \ref{claim:hp:prop1}, since $D^{*} = D_1$, we have that $\Upsilon(D^{*}) = i$ for some $i \in [k]_0$. Hence, if $L(D) = \emptyset$, then ${\Upsilon}|_{D^{*}} = \Phi[D^{*},i]$. Otherwise, let $L(D) = \{l_d\}$. In this case, from Lemma \ref{claim:hp:ld0}, either $ \Upsilon(D^{*}) = l_d$ or $\Upsilon(D^{*}) =0$. Thus, either ${\Upsilon}_{|D^{*}} = \Phi[D^{*},0]$ or ${\Upsilon}|_{D^{*}} = \Phi[D^{*},i]$ for any $i\in[k]$.

Now, we need prove the claim for the case where $r \geq 2$. Then, by Lemma \ref{claim:hp:emptylabel}, for any constituent component $D_i$ of $D^{*}$, we have that $L(D_i) \neq \emptyset$. For any $i \in [r]$, let $L(D_i) = \{l_{d_i}\}$. From the construction of $L_{p}^{*}$, for each $i \in [r]$, we have that $l_{d_i} \in [k]$, which in particular means that $l_{d_i} \neq 0$. From Lemma \ref{claim:hp:ld0}, either $\Upsilon(D_i)=0$ or $\Upsilon(D_i) = l_{d_i}$.
We aim to prove that either ${\Upsilon}|_{D^{*}} = \Phi[D^{*}, 0]$ or ${\Upsilon}|_{D^{*}} = \Phi[D^{*}, i]$ for any $i \in [k]$. In other words, we next show that either all constituent components of $D^{*}$ are colored $0$ by $\Upsilon$ or $\Upsilon$ colors each constituent component with the color represented by the label of that constituent component.
To this end, let $\mathcal{D'}$ be the collection of all constituent components of $D^{*}$ such that for all $D_i \in \mathcal{D'}$, $\Upsilon(D_i) = l_{d_i}$ and let $\mathcal{D''}$ be the collection of all constituent components of $D^{*}$ such that for all $D_i \in \mathcal{D''}$, $\Upsilon(D_i) = 0$. We need to show that either $\mathcal{D'} = \emptyset$ or $\mathcal{D''} = \emptyset$. Suppose not, that is, $\mathcal{D'} \neq \emptyset$ and $\mathcal{D''} \neq \emptyset$. Then, there exist $D_1 \in \mathcal{D'}$ and $D_2 \in \mathcal{D''}$. Since $D^{*}$ is a connected component in $L_{p}^{*}$, there exists $F \in E(H)$ such that $F \cap D_1 \neq \emptyset$ and $F \cap D_2 \neq \emptyset$. Since $\Upsilon(D_1)= l_{d_1} \neq 0$, from Lemma~\ref{claim:hp:prop2}, we have that $F \in E_m$. Since $\Upsilon(D_2) = 0$, $F \cap D_2 \neq \emptyset$ and $F \in E_m$, we deduce that $D$ contains a vertex from the set $B$. Thus, by Lemma \ref{claim:hp:bigvertices}, we have that $D_2$ is a $0$-component of $L_{p}^{*}$. Since no constituent component of $D^{*}$ (which contains at least two constituent components) can be a $0$-component from the construction of $L_{p}^{*}$, we have reached a contradiction.
\end{proof}

\subsection{Dynamic Programming}

For the sake of clarity of presentation, for every hyperedge $F \in E_{D^{*}}$, we denote $h_F(\mu',l_1',l_2') = \allowbreak \bigvee_{i \in [k]_0} \allowbreak f_F({\Phi[{D}^{*},i]}|_{F}, \mu',l_1',l_2')$.

Moreover, we let $D_{1}^{*}, \ldots, D_{y}^{*}$ denote the connected components of \lstarp. For each $i \in [y]$, denote $E_{D^{*}_{i}} = \{F_{i,1}, \ldots, F_{i,{z}_{i}}\}$ (recall that $E_{D^{*}_{i}}$ was defined in Appendix \ref{sec:supergraph}). 
For all $i \in [y]$, $\mu'\in[\mu]$, $l_i'\in[l_1]_0$ and $l_2'\in[l_2]$,
define $\mathcal{H}[i, \mu',l_1',l_2']$ as follows.

$$\mathcal{H}[i,\mu',l_1',l_2'] = \bigvee\limits_{\substack{(\mu^{j})_{j \in [z_i]} \\ (l_{1}^{j})_{j \in [z_i]}  \\ (l_{2}^{j})_{j \in [z_i]}}} \bigwedge\limits_{j \in [z_i]} h_{{F}_{i,j}} ({\mu}^j, {l_1}^{j}, {l_2}^j),$$
where $\mu' = \sum\limits_{j \in [z_i]} {\mu}^j $, $\sum\limits_{j \in [z_i]} {l_1^j} \leq l_1'$, $\sum\limits_{j \in [z_i]} {l_2}^{j} \leq l_2'$, and each ${\mu}^{j}$, $l_1^{j}$, $l_2^{j}$ is a non-negative integer.

\begin{lemma}\label{claim:hp:finalanshp}
$\displaystyle{\text{{\ttfamily aHP}}[\mu,l_1,l_2] = \bigvee\limits_{\substack{(\mu^{j})_{j \in [z_i]} \\ (l_{1}^{j})_{ j \in [z_i]} \\ (l_{2}^{j})_{j \in [z_i]} }} \bigwedge\limits_{i \in [y]} \mathcal{H}[i, {\mu}^j, l_1^{j}, l_2^j]}$,\\ 
where $\mu = \sum\limits_{j \in [z_i]} {\mu}^j $, $\sum\limits_{j \in [z_i]} l_1^{j} \leq l_1$, $\sum\limits_{j \in [z_i]} {l_2}^{j} \leq l_2$, and each ${\mu}^{j}$, $l_1^{j}$, $l_2^{j}$ is a non-negative integer.
\end{lemma}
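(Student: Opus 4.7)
For the forward direction, the plan is to assume $\text{{\ttfamily aHP}}[\mu,l_1,l_2] = 1$ and exploit global unbreakability to pick a witnessing coloring $\Upsilon : V(H) \to [k]_0$ that is $(q,k)$-unbreakable; this is exactly the property that legitimises the entire structural analysis around the good assignment $p$ currently being branched on. Lemma~\ref{claim:hp:colorings} then supplies, for each connected component $D_i^*$ of $L_p^*$, an index $r_i \in [k]_0$ with $\Upsilon|_{D_i^*} = \Phi[D_i^*, r_i]$. Because $\Upsilon$ witnesses aHP, for every $F \in E(H)$ there exist nonnegative $\mu^F, l_1^F, l_2^F$ with $\mu = \sum_F \mu^F$, $\sum_F l_1^F \leq l_1$, $\sum_F l_2^F \leq l_2$ and $f_F(\Upsilon|_F, \mu^F, l_1^F, l_2^F) = 1$; in particular, $h_F(\mu^F, l_1^F, l_2^F) = 1$ using $r_i$ as witness for each $F \in E_{D_i^*}$. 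Since Lemma~\ref{claim:hp:segregatehyperedges} partitions $E(H) = \biguplus_i E_{D_i^*}$, setting $\mu^i := \sum_{F \in E_{D_i^*}} \mu^F$, and similarly $l_1^i, l_2^i$, will give $\mathcal{H}[i, \mu^i, l_1^i, l_2^i] = 1$ for each $i \in [y]$ and the aggregated sums automatically respect the outer constraints of the right-hand side.

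For the backward direction, the plan is to assume the right-hand side equals $1$ with sum-choices $(\mu^i, l_1^i, l_2^i)_i$, so that $\mathcal{H}[i, \mu^i, l_1^i, l_2^i] = 1$ for every $i \in [y]$. Unpacking $\mathcal{H}$ and $h_F$ yields, for each hyperedge $F_{i,j}$, nonnegative budgets $\mu^{i,j}, l_1^{i,j}, l_2^{i,j}$ and an index $r_{i,j} \in [k]_0$ such that $f_{F_{i,j}}(\Phi[D_i^*, r_{i,j}]|_{F_{i,j}}, \mu^{i,j}, l_1^{i,j}, l_2^{i,j}) = 1$. The idea is then to select, for each $D_i^*$, a single index $r_i \in [k]_0$ and to define a global coloring $\Upsilon : V(H) \to [k]_0$ by $\Upsilon|_{D_i^*} := \Phi[D_i^*, r_i]$; using the partition of $E(H)$ from Lemma~\ref{claim:hp:segregatehyperedges}, the local budgets $\mu^{i,j}, l_t^{i,j}$ then reassemble into aHP budgets $\mu^F, l_t^F$ respecting the global constraints and witnessing $\text{{\ttfamily aHP}}[\mu, l_1, l_2] = 1$ via $\Upsilon$.

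The principal obstacle is the \emph{uniformization} step in the backward direction: reconciling the per-hyperedge indices $r_{i,j}$ into a single per-component $r_i$ while preserving every local $f_{F_{i,j}}$-constraint. The plan to handle it relies on the case analysis built into the definition of $\Phi[D_i^*, \cdot]$. If $D_i^*$ contains a constituent $D$ with $L(D) = \emptyset$, then Lemma~\ref{claim:hp:emptylabel} forces $D_i^* = D$ and Lemma~\ref{lemma:hp:empty} guarantees that every hyperedge in $E_{D_i^*}$ lies entirely in $D_i^*$, making a single compatible $r_i$ directly extractable from the finitely many choices over $[k]_0$. In every other case, each constituent of $D_i^*$ carries a nonempty label, and then $\Phi[D_i^*, r]$ is literally the \emph{same} function for all $r \in [k]$ (each vertex takes the label of its constituent), while $\Phi[D_i^*, 0]$ is identically zero; the genuine choice of $r_i$ thus collapses to a binary one, which can be reconciled across hyperedges straddling $D_i^*$ and an adjacent $0$-component via Lemma~\ref{claim:hp:disjointhyperedges}. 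Once the $r_i$'s are fixed coherently, pasting $\Phi[D_i^*, r_i]$ along the partition of $V(H)$ induced by the components of $L_p^*$ produces the required $\Upsilon$, and the global budget constraints follow by summing the local ones through the partition of Lemma~\ref{claim:hp:segregatehyperedges}.
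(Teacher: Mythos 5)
The forward direction of your proposal matches the paper's argument: invoke Lemma~\ref{claim:hp:segregatehyperedges} to regroup the hyperedges by component of $L^*_p$, then invoke Lemma~\ref{claim:hp:colorings} to supply an index $r_i$ with $\Upsilon|_{D^*_i}=\Phi[D^*_i,r_i]$, which immediately shows $h_F(\mu^F,l_1^F,l_2^F)=1$ for each $F\in E_{D^*_i}$ using $r_i$ as the inner witness, and the budget constraints aggregate correctly through the partition. This is exactly what the paper does (implicitly, since its proof only asserts the chain of equalities without separating directions).

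The gap is in your backward direction, and it is precisely at the \emph{uniformization} step that you correctly single out as the obstacle but then dispose of by assertion rather than argument. When $D^*_i$ is a single constituent $D$ with $L(D)=\emptyset$, you have $\Phi[D^*_i,r]\equiv r$, i.e., a constant coloring. If two hyperedges $F,F'\in E_{D^*_i}$ share a vertex $v$ and the inner disjunctions in $h_F$ and $h_{F'}$ succeed only with distinct constants $r^F\neq r^{F'}$, then no single $\rho\in[k]_0$ makes both $f_F(\Phi[D^*_i,\rho]|_F,\cdot)=1$ and $f_{F'}(\Phi[D^*_i,\rho]|_{F'},\cdot)=1$; ``directly extractable from the finitely many choices over $[k]_0$'' provides no reason such a common $\rho$ exists. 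Nothing in local unbreakability (trivially true for small hyperedges), connectivity (vacuous for constant colorings), or global unbreakability (vacuous when $\text{{\ttfamily aHP}}=0$) rules this out. The same issue recurs in your ``binary choice'' case: if one hyperedge in $E_{D^*_i}$ is accepted only under $\Phi[D^*_i,0]\equiv 0$ and another only under $\Phi[D^*_i,r]$ with $r\in[k]$, and the two share a vertex whose constituent has a nonzero label, they disagree at that vertex; Lemma~\ref{claim:hp:disjointhyperedges} is about which $L^*_p$-components a hyperedge meets, not about reconciling the indices chosen inside $h_F$. So the step from ``each $F$ has some accepting $(\Phi[D^*_{i},r^F],$ budgets$)$'' to ``there is a single $\Upsilon$ accepted by all $f_F$'' remains unproved. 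Note this is the one direction that the paper's own three-line proof silently elides when it passes from $\bigvee_\Upsilon\bigwedge_F f_F(\Upsilon|_F,\cdot)$ to $\bigwedge_F\bigvee_r f_F(\Phi[\cdot,r]|_F,\cdot)$; you earn credit for noticing it needs an argument, but the argument you sketch does not close it.
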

\begin{proof}
By Lemma \ref{claim:hp:segregatehyperedges}, we have the following equality.

 $$\text{{\ttfamily aHP}}[\mu,l_1,l_2] = \bigvee_{\substack{\Upsilon : V(H) \to [k]_0 \\ (\mu^{F})_{F \in E(H)} \\ (l_{1}^{F})_{F \in E(H)} \\ (l_{2}^{F})_{F \in E(H)}}} \bigwedge\limits_{\substack{i\in[y]}} \bigwedge\limits_{ F \in  E_{D^{*}_i} }  f_F(\Upsilon|_{F}, {\mu}^{F}, l_{1}^{F}, l_{2}^{F}),$$ 
 where $\mu = \sum\limits_{F \in E(H)} {\mu}^{F}$, $\sum\limits_{F \in E(H)} l_{1}^{F} \leq l_1$, $\sum\limits_{F \in E(H)} l_{2}^{F} \leq l_2$, and for all $F\in F(H)$, ${\mu}^{F}$, $l_{1}^{F}$ and $l_{2}^{F}$ are non-negative integers.
 
From Lemma \ref{claim:hp:colorings}, we thus further have the following equality.
 
  $$\text{{\ttfamily aHP}}[\mu,l_1,l_2] = \bigvee_{\substack{(\mu^{F})_{F \in E(H)} \\ (l_{1}^{F})_{F \in E(H)} \\ (l_{2}^{F})_{F \in E(H)}}} \bigwedge\limits_{\substack{i\in[y]}} \bigwedge\limits_{ F \in  E_{D^{*}_i}}  h_F(\mu^F,l_1^F,l_2^F),$$ 
 where $\mu = \sum\limits_{F \in E(H)} {\mu}^{F}$, $\sum\limits_{F \in E(H)} l_{1}^{F} \leq l_1$, $\sum\limits_{F \in E(H)} l_{2}^{F} \leq l_2$, and for all $F\in F(H)$, ${\mu}^{F}$, $l_{1}^{F}$ and $l_{2}^{F}$ are non-negative integers.
  
Hence, by the definition of $\mathcal{H}[i, \mu',l_1',l_2']$, we conclude that the equation in the statement of the lemma is correct.
 
\end{proof}

\begin{lemma}\label{claim:hp:computeH}
Suppose that for all $i\in[y]$, $F \in E_{D^{*}_i}$, ${\mu}' \leq \mu$, $l_{1}' \leq l_1$ and $l_{2}' \leq l_2$, it holds that $h_F({\mu}', l_{1}', l_{2}')$ is computable in time $\tau$. Then, for all $i \in [y]$, ${\mu}' \leq \mu$, $l_{1}' \leq l_1$ and $l_{2}' \leq l_2$, it holds that $\mathcal{H}[i,\mu',l_1',l_2']$ can be computed in time $\OO(\tau \cdot z_i \cdot b^2 \cdot k_1^3 \cdot k_2^3)$. 
\end{lemma}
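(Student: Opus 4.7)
The plan is a standard ``hyperedge-by-hyperedge'' knapsack-style dynamic program that processes the hyperedges $F_{i,1}, \ldots, F_{i,z_i}$ of $E_{D^*_i}$ one at a time and tracks, in the DP state, how much of the total ``budget'' $(\mu',l_1',l_2')$ has been consumed so far. First I would fix $i\in[y]$ and introduce a four-dimensional binary table $\mathcal{M}[\,j,\widehat{\mu},\widehat{l_1},\widehat{l_2}\,]$ defined for $j\in[z_i]_0$, $\widehat{\mu}\in[b]_0$, $\widehat{l_1}\in[k_1]_0$, $\widehat{l_2}\in[k_2]_0$, whose intended meaning is that $\mathcal{M}[\,j,\widehat{\mu},\widehat{l_1},\widehat{l_2}\,]=1$ iff there exist non-negative integers $(\mu^c)_{c\in[j]}$, $(l_1^c)_{c\in[j]}$, $(l_2^c)_{c\in[j]}$ satisfying $\widehat{\mu}=\sum_{c\in[j]}\mu^c$, $\sum_{c\in[j]} l_1^c\leq \widehat{l_1}$, $\sum_{c\in[j]} l_2^c\leq\widehat{l_2}$ and $h_{F_{i,c}}(\mu^c,l_1^c,l_2^c)=1$ for every $c\in[j]$. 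Directly from the definition of $\mathcal{H}[i,\mu',l_1',l_2']$, one then reads off $\mathcal{H}[i,\mu',l_1',l_2']=\mathcal{M}[\,z_i,\mu',l_1',l_2'\,]$.

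Next, I would write down the base case and the recurrence: set $\mathcal{M}[\,0,\widehat{\mu},\widehat{l_1},\widehat{l_2}\,]=1$ iff $\widehat{\mu}=0$ (the empty sum gives $\widehat{\mu}=0$, while the inequalities on $\widehat{l_1},\widehat{l_2}$ are trivially satisfied), and for $j\geq 1$,
\[
\mathcal{M}[\,j,\widehat{\mu},\widehat{l_1},\widehat{l_2}\,]=\bigvee_{\substack{0\leq \mu^j\leq \widehat{\mu}\\ 0\leq l_1^j\leq \widehat{l_1}\\ 0\leq l_2^j\leq \widehat{l_2}}} \bigl(\mathcal{M}[\,j-1,\widehat{\mu}-\mu^j,\widehat{l_1}-l_1^j,\widehat{l_2}-l_2^j\,]\,\wedge\, h_{F_{i,j}}(\mu^j,l_1^j,l_2^j)\bigr).
\]
Correctness of the recurrence follows from a routine induction on $j$: the forward direction uses the witness for $\mathcal{M}[\,j,\widehat{\mu},\widehat{l_1},\widehat{l_2}\,]$ and ``peels off'' its $j$-th coordinate $(\mu^j,l_1^j,l_2^j)$ to produce a witness for $\mathcal{M}[\,j-1,\widehat{\mu}-\mu^j,\widehat{l_1}-l_1^j,\widehat{l_2}-l_2^j\,]$, while the backward direction appends $(\mu^j,l_1^j,l_2^j)$ to an inductive witness.

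For the running time, I would note that the table has $(z_i+1)(b+1)(k_1+1)(k_2+1)$ cells, and each cell evaluated via the recurrence enumerates at most $(b+1)(k_1+1)(k_2+1)$ triples $(\mu^j,l_1^j,l_2^j)$, each of which requires one lookup in $\mathcal{M}$ and one evaluation of $h_{F_{i,j}}$, the latter costing $\tau$ by hypothesis. Multiplying these factors yields a total running time of $\OO(\tau\cdot z_i\cdot b^2\cdot k_1^2\cdot k_2^2)$, which is comfortably within the claimed bound $\OO(\tau\cdot z_i\cdot b^2\cdot k_1^3\cdot k_2^3)$. Since the argument is a textbook knapsack-style DP, I do not anticipate a genuine obstacle; the only mild subtlety is the mismatch between the equality constraint on $\mu$ and the inequality constraints on $l_1,l_2$, which is handled cleanly by building the inequalities directly into the semantics of $\mathcal{M}$ rather than aggregating over $\widehat{l_1}\le l_1'$ and $\widehat{l_2}\le l_2'$ at the end.
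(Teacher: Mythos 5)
Your proposal is correct and follows essentially the same approach as the paper: a prefix-indexed, knapsack-style dynamic program over the hyperedges $F_{i,1},\ldots,F_{i,z_i}$ that tracks the triple of remaining budgets $(\mu',l_1',l_2')$ in the DP state, with the same per-cell enumeration of $(\mu^j,l_1^j,l_2^j)$ and hence the same $\OO(\tau\cdot z_i\cdot b^2\cdot k_1^2\cdot k_2^2)$ running-time bound (within the claimed bound). The only cosmetic difference is that you index the table from $j=0$ with an explicit empty base case, whereas the paper starts at $c=1$ and splits each budget into two summands explicitly; the substance of the recurrence and its verification is identical.
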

\begin{proof}
Arbitrarily choose some $i \in [y]$, ${\mu}^* \leq \mu$, $l_{1}^* \leq l_1$ and $l_{2}^* \leq l_2$. Under the given supposition, we would show that  $\mathcal{H}[i,\mu^*,l_1^*,l_2^*]$ can be computed in time $\OO(\tau \cdot z_i \cdot b^2 \cdot k_1^3 \cdot k_2^3)$. To this end, for all $c \in [z_i]$, ${\mu}' \leq \mu$, $l_{1}' \leq l_1$ and $l_{2}' \leq l_2$, define ${\mathcal{H}}_c[i, \mu',l_1',l_2']$ as follows.

$${\mathcal{H}}_c[i,\mu',l_1',l_2] = \bigvee\limits_{\substack{(\mu^{j})_{j \in [c]} \\ (l_{1}^{j})_{j \in [c]}  \\ (l_{2}^{j})_{j \in [c]} }} \bigwedge\limits_{j \in [c]} h_{{F}_{i,j}} ({\mu}^j, l_1^{j}, l_2^j),$$
where $\mu' = \sum\limits_{j \in [c]} {\mu}^j $, $\sum\limits_{j \in [c]} l_{1}^{j} \leq l_1'$, $\sum\limits_{j \in [c]} l_{2}^{j} \leq l_2'$ and each ${\mu}^{j}$, $l_{1}^{j}$, $l_{2}^j$ is a non-negative integer.

Now, for all ${\mu}' \leq \mu$, $l_{1}' \leq l_1$ and $l_{2}' \leq l_2$, $\mathcal{H}_{z_i}[i,\mu',l_1',l_2']$ can be computed in time $\OO(\tau \cdot z_i \cdot b^2 \cdot k_1^2 \cdot k_2^2)$ using the following recurrences.

$${\mathcal{H}}_1[i,\mu',l_1',l_2'] = h_{F_{i,1}}(\mu',l_1',l_2').$$

\noindent For all $c \in \{2, \ldots, z_i\}$, $i\in[y]$, ${\mu}' \leq \mu$, $l_{1}' \leq l_1$ and $l_{2}' \leq l_2$,
$$ {\mathcal{H}}_c[i,\mu',l_1',l_2'] = \bigvee\limits_{\substack{\mu' = {\mu}^1 + {\mu}^2\\  l_1' \geq l_{1}^{1} + l_{2}^{2}  \\  l_2' \geq l_{2}^{1} + l_{2}^{2}}}    {\mathcal{H}}_{c-1}[i,{\mu}^{1},l_1^{1},l_2^{1}]  \wedge h_{F_{i,c}} ({\mu}^2, l_1^2 , l_2^2).$$

Observe that $\mathcal{H}[i,\mu^*,l_1^*,l_2^*]={\mathcal{H}}_{z_i}[i,\mu^*,l_1^*,l_2^*]$. This concludes the proof.
\end{proof}

\begin{lemma}\label{claim:hp:computeAns}
Suppose that for all $i \in [y]$, ${\mu}' \leq \mu$, $l_{1}' \leq l_1$ and $l_{2}' \leq l_2$, $\mathcal{H}[i,{\mu}',l_{1}',l_{2}']$ can be computed in time $\psi$. Then, {\ttfamily aHP}$[\mu, l_1,l_2]$ can be computed in time $\OO(\psi \cdot y \cdot b^2 \cdot k_1^3 \cdot k_2^3)$. 
\end{lemma}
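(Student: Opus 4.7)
The plan is to mirror the proof of Lemma \ref{claim:hp:computeH}, performing a bottom-up dynamic programming over the connected components $D_1^{*},\ldots,D_y^{*}$ of $L^{*}_p$, combining them one at a time via the convolution-style recurrence induced by the characterization in Lemma \ref{claim:hp:finalanshp}. Specifically, I would first invoke Lemma \ref{claim:hp:finalanshp}, which expresses $\text{{\ttfamily aHP}}[\mu,l_1,l_2]$ as a disjunction, over all ways of distributing the ``budgets'' $(\mu,l_1,l_2)$ among the $y$ components, of the conjunction of the values $\mathcal{H}[i,\mu^i,l_1^i,l_2^i]$.

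Next, for each $c \in [y]$, ${\mu}' \leq \mu$, $l_1' \leq l_1$ and $l_2' \leq l_2$, I would define the auxiliary quantity
\[
\mathcal{G}_c[\mu',l_1',l_2'] \;=\; \bigvee\limits_{\substack{(\mu^{i})_{i \in [c]} \\ (l_{1}^{i})_{i \in [c]} \\ (l_{2}^{i})_{i \in [c]}}} \bigwedge\limits_{i \in [c]} \mathcal{H}[i,\mu^i,l_1^i,l_2^i],
\]
where $\mu' = \sum_{i \in [c]}\mu^i$, $\sum_{i \in [c]} l_1^i \leq l_1'$, $\sum_{i \in [c]} l_2^i \leq l_2'$, and each $\mu^i,l_1^i,l_2^i$ is a non-negative integer. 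By Lemma \ref{claim:hp:finalanshp}, we have $\text{{\ttfamily aHP}}[\mu,l_1,l_2]=\mathcal{G}_y[\mu,l_1,l_2]$, so it suffices to compute $\mathcal{G}_y[\mu,l_1,l_2]$. I would use the base case $\mathcal{G}_1[\mu',l_1',l_2']=\mathcal{H}[1,\mu',l_1',l_2']$ together with the recurrence, valid for $c \in \{2,\ldots,y\}$,
\[
\mathcal{G}_c[\mu',l_1',l_2'] \;=\; \bigvee\limits_{\substack{\mu'=\mu^{1}+\mu^{2}\\ l_1'\geq l_1^{1}+l_1^{2}\\ l_2'\geq l_2^{1}+l_2^{2}}} \mathcal{G}_{c-1}[\mu^{1},l_1^{1},l_2^{1}] \wedge \mathcal{H}[c,\mu^{2},l_1^{2},l_2^{2}].
\]

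For the running time, I would fill in $\mathcal{G}_c[\mu',l_1',l_2']$ for every $c \in [y]$, ${\mu}' \leq \mu \leq b$, $l_1' \leq l_1 \leq k_1$, $l_2' \leq l_2 \leq k_2$, so there are $\OO(y \cdot b \cdot k_1 \cdot k_2)$ table entries. Each entry is determined by iterating over $\OO(b \cdot k_1 \cdot k_2)$ split choices $(\mu^1,l_1^1,l_2^1)$; for each such choice, one $\mathcal{G}_{c-1}$ value is already stored and one $\mathcal{H}$ value is computed in time $\psi$ by hypothesis. Hence the total work is $\OO(\psi \cdot y \cdot b^2 \cdot k_1^2 \cdot k_2^2)$, which is within the claimed bound $\OO(\psi \cdot y \cdot b^2 \cdot k_1^3 \cdot k_2^3)$. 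There is no real obstacle here; the argument is a routine product-convolution DP essentially identical in form to the one used in the proof of Lemma \ref{claim:hp:computeH}, and its correctness follows immediately from Lemma \ref{claim:hp:finalanshp} together with an induction on $c$.
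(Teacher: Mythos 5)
Your proof is essentially identical to the paper's: the paper defines an auxiliary quantity $\text{{\ttfamily aHP}}_i[\mu',l_1',l_2']$ (your $\mathcal{G}_c$) with the same base case and the same convolution-style recurrence over components, invokes Lemma \ref{claim:hp:finalanshp} to conclude $\text{{\ttfamily aHP}}[\mu,l_1,l_2]=\text{{\ttfamily aHP}}_y[\mu,l_1,l_2]$, and gives the same $\OO(\psi \cdot y \cdot b^2 \cdot k_1^2 \cdot k_2^2)$ running-time analysis (which, as you note, is within the stated bound). No substantive differences.
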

\begin{proof}
For all $i \in [y]$, ${\mu}' \leq \mu$, $l_{1}' \leq l_1$ and $l_{2}' \leq l_2$, let us first define ${\text{{\ttfamily aHP}}}_i[\mu',l_1',l_2']$ as follows.
$${\text{{\ttfamily aHP}}}_i[\mu',l_1',l_2'] = \bigvee\limits_{\substack{(\mu^{j})_{j \in [i]} \\ (l_{1}^{j})_{j \in [i]}  \\ (l_{2}^{j})_{j \in [i]} }} \bigwedge\limits_{j\in [i]} \mathcal{H}[j, {\mu}^j, l_{1}^{j}, l_{2}^j],$$
where $\mu' = \sum\limits_{j \in [c]} {\mu}^j $, $\sum\limits_{j \in [c]} l_{1}^{j} \leq l_1'$, $\sum\limits_{j \in [c]} l_{2}^{j} \leq l_2'$ and each ${\mu}^{j}$, $l_{1}^{j}$, $l_{2}^j$ is a non-negative integer.

Observe that ${\text{{\ttfamily aHP}}}[\mu,l_1,l_2]={\text{{\ttfamily aHP}}}_y[\mu,l_1,l_2]$ from Lemma \ref{claim:hp:finalanshp}.
Now, we can compute $\text{{\ttfamily aHP}}_y[\mu, l_1,l_2]$ using the following recurrences in time $\OO(\psi \cdot y \cdot b^2 \cdot k_1^2 \cdot k_2^2)$. 

$$\text{{\ttfamily aHP}}_1[\mu',l_1',l_2'] = \mathcal{H}[1, \mu', l_1',l_2'].$$

\noindent For all $i \in \{2, \ldots, y\}$, ${\mu}' \leq \mu$, $l_{1}' \leq l_1$ and $l_{2}' \leq l_2$, $$\text{{\ttfamily aHP}}_i[\mu',l_1',l_2']= \bigvee\limits_{\substack{\mu'={\mu}^1 + {\mu}^2 \\ l_1' \geq l_{1}^1 + l_{1}^2 \\ l_2' \geq l_{2}^1 + l_{2}^2}} {\text{{\ttfamily aHP}}}_{i-1} [{\mu}^1, l_{1}^1, l_{2}^1] \wedge \mathcal{H}[i, {\mu}^2 , l_{1}^2, l_{2}^2 ].$$

This concludes the proof of the lemma.
\end{proof}

Clearly, for all $i\in[y]$, $F \in E_{D^{*}_i}$, ${\mu}' \leq \mu$, $l_{1}' \leq l_1$ and $l_{2}' \leq l_2$, it holds that $h_F({\mu}', l_{1}', l_{2}')$ is computable in polynomial time. Thus, by Lemmas \ref{claim:hp:computeH} and \ref{claim:hp:computeAns}, we conclude the proof of Theorem \ref{thm:hp}.

\bibliography{bjp} {}
\bibliographystyle{plain}
\end{document}